\newtheorem{theorem}{Theorem}
\newtheorem{lemma}{Lemma}
\newtheorem{defn}{Definition}
\newtheorem{observation}{Observation}
\newcommand{\remove}[1]{}
\newtheorem{fact}{Fact}
\newenvironment{Proof}{\noindent {\bf Proof:\,\ }}{\hfill\mbox{\ $\Box $}\smallskip}
\title{Approximation algorithms for the two-center problem of convex polygon}
\author{Sanjib Sadhu$^1$, Sasanka Roy$^2$, Soumen Nandi$^2$, Anil Maheswari$^3$\footnote{{Supported by NSERC}}, ~and \\Subhas C.
Nandy$^2$\footnote{{Part  of the work was done while this author was visiting Carleton University in Fall 2015.}}}
\date{$^1$ Dept. of Computer Science and Engineering, National Institute of Technology Durgapur, India\\
$^2$ Indian Statistical Institute, Kolkata, India\\
$^3$ School of Computer Science, Carleton University, Ottawa, Canada}
\begin{document}
\maketitle

\begin{abstract}Given a convex polygon $P$ with $n$ vertices, 
the two-center problem is to find two congruent closed 
disks of minimum radius such that they completely cover 
$P$. 
We propose an algorithm for this problem in the streaming 
setup, where the input stream is the vertices of the polygon 
in clockwise order. It produces a radius $r$ satisfying 
$r\leq2r_{opt}$ using $O(1)$ space, where $r_{opt}$ is the 
optimum solution.
Next, we show that in non-streaming setup, we can improve
the approximation factor by $r\leq 1.84 r_{opt}$, maintaining 
the time complexity of the algorithm to $O(n)$, and using 
$O(1)$ extra space in addition to the space required for 
storing the input.
\end{abstract}

\smallskip
\noindent \textbf{Keywords.} Computational geometry, two-center problem, lower bound, approximation algorithm, streaming algorithm.

\renewcommand*{\thefootnote}{\fnsymbol{footnote}}

\section{Introduction}\vspace{-0.1in}
Covering a geometric object (e.g., a point set or a polygon)  by 
disks has drawn a lot of interest to the researchers due to its 
several applications, for example, base station placement in 
mobile network, facility location in city planning, etc. There 
are mainly two variations of the disk cover problem, namely 
standard version and discrete version, depending on the position 
of the centers of the disks to be placed. In standard version, the position 
of centers of disks are anywhere on the plane, whereas in the 
discrete version, the center of the disks must be on some 
specified points, also given as input. The objective of a 
$k$-center problem for a given set of points $S$ in a metric 
space is to find out $k$ points (also called {\it centers}) 
$c_1,c_2, \ldots, c_k$ in the underlying space so that the 
largest distance of a point $p\in S$ from its nearest center 
$c\in\{c_1,c_2,\ldots,c_k\}$ is minimized. In other words, 
in $k$-center problem we want to cover a set of points using $k$ 
congruent balls of minimum radius. In this paper, we consider 
the standard two-center problem for a convex polygon $P$ in 
the $L_2$ metric, where the objective is to identify centers 
of two congruent closed disks whose union completely covers the 
polygon $P$ and their (common) radius $r$ is minimum. As stated 
by Kim and Shin~\cite{kim}, the major difference between 
the {\it two-center problem for a convex polygon $P$ } and the 
{\it two-center problem for a point set $S$} are (1) points
covered by the two disks in the former problem are {\it in 
convex positions} (instead of arbitrary positions), and (2) 
the union of two disks should also cover the edges of the 
polygon $P$. The feature (1) indicates the problem is easier 
than the standard two-center problem for points, but feature 
(2) says that it might be more difficult.

\subsection{Related work}
 The $k$-center problem, where $S$ is 
a set of points in a Euclidean plane and the distance function 
is the $L_2$ metric, is NP-complete for any dimension $d\geq 2$~\cite{marchetti}. 
Therefore it makes sense to study the $k$-center problem for 
small (fixed) values of $k$ (\cite{regis1,timothy,david,john,
JaromczykK94,KatzKS00,Sharir97}) and to search for efficient 
approximation algorithms and heuristics for the general version 
(\cite{JaromczykK94},~\cite{Plensik}). Hershberger~\cite{john} proposed an 
$O(n^2\log n)$ time algorithm for the standard version of the two-center
problem for the $n$-points in plane. Sharir~\cite{Sharir97} improved the time 
complexity of the problem to $O(n\log^9n)$. Eppstein~\cite{david} 
proposed a randomized algorithm with expected time complexity 
$O(n\log^2 n)$. Later, Chan~\cite{timothy} proposed two algorithms 
for this problem. The first one is a randomized 
algorithm that runs in $O(n\log^2 n)$ time with high probability, 
and the second one is a deterministic algorithm that runs in 
$O(n\log^2 n(\log \log n)^2)$ time. The discrete version of the 
two-center problem for a point set was solved by Agarwal et al. 
\cite{pankaj} in $O(n^{4/3}\log^5n)$ time. The standard and 
discrete versions of the two-center problem for a convex polygon 
$P$ was first solved by Kim and Shin~\cite{kim} in 
$O(n\log^3n\log\log n)$  and $O(n\log^2n)$ time respectively,  
where $n$ is the number of vertices of $P$. Recently Vigan 
\cite{ivo} proposed the problem of covering a simple polygon by 
$k$ geodesic disks whose centers lie inside $P$. Here, the 
geodesic distance between a pair of points $s$ and $t$ inside 
the polygon is the length of the shortest $s-t$ path inside $P$. 
He showed that the maximum radius among these $k$ geodesic disks 
is at most twice as large as that of an optimal solution, and 
the time complexity of the proposed algorithm is $O(k^2(n + k) 
\log(n + k))$. The algorithm proposed by Vigan~\cite{ivo}, if applied for convex polygon, 
the approximation factor remains unaltered\footnote{However, this algorithm is in the non-streaming setup. In non-streaming
setup, we have better result.}.
There exists a heuristic to cover a convex region
by $k$ congruent disks of minimum radii~\cite{goutam}. However,
to the best of our knowledge
there are no linear time approximation algorithm for the $k$-center problem
of a convex polygon, where $k\geq 2$.

In the streaming model, McCutchen et al.~\cite{McCutchenK08} and
Guha~\cite{Guha09} have designed a ($2+\epsilon$)-approximation 
algorithm for the $k$-center problem of a point set in $\mathbb{R}^d$ 
using $O(\frac{kd}{\epsilon}\log(\frac{1}{\epsilon}))$ space. For 
the 1-center problem, Agarwal and Sharathkumar~\cite{AgarwalS15} suggested 
a $((1+\sqrt{3})/2+\epsilon)$-factor approximation algorithm 
using $O(\frac{d}{\epsilon^3}\log(\frac{1}{\epsilon}))$ space.
The approximation factor was later improved to $1.22$ by Chan 
and Pathak~\cite{ChanP14}. Recently, Kim and Ahn~\cite{KimA15} 
proposed a ($1.8+\epsilon$)-approximation algorithm for the two-center 
problem of a point set in $\mathbb{R}^2$. It uses $O(\frac{d}{\epsilon})$
space and update time where insertion and deletion of the points in the 
set are allowable. To the best of our knowledge, there is no 
approximation result for the two-center problem for a convex polygon
under the streaming model.

\subsection{Our result}
We propose a $2$-factor approximation algorithm for the 
two-center problem of a convex polygon in streaming setup. 
Here, the vertices of the input polygon is read in clockwise 
manner and the execution needs $O(n)$ time using $O(1)$ space. 
Next we show that if the restriction on streaming model is 
relaxed, then we can improve the approximation factor to 1.84 
maintaining the time complexity to $O(n)$ and using $O(1)$ 
extra space apart from the space required for storing the input.
We have observed the fact that if two disks cover a convex polygon $P$,
then they must also cover a ``line segment'' or a ``triangle'' lying inside that
polygon $P$. This fact has been used in our work to analyze the approximation factor of the radius of disks.

\subsection{Notations and terminologies used}
Throughout the paper we use the following notations. 
The line segment joining any two points $p$ and $q$ is denoted by
$\overline{pq}$ and its length is denoted by $|{pq}|$.
The $x$- and $y$-coordinate of a point $p$ are 
denoted by $x(p)$ and $y(p)$ respectively. The ``{\it horizontal distance}'' 
between a pair of points $p$ and $q$ is $|x(p)-x(q)|$ (the 
absolute difference between their
$x$-coordinates).
Similarly, the ``{\it vertical  distance}'' between a pair of points
$p$ and $q$ is $|y(p)-y(q)|$. The notation $s\in \overline{pq}$
implies that the point $t$ lies on $\overline{pq}$. 
We will 
use $\triangle$, $\Box $ and $\Diamond  $ to represent triangle, 
axis-parallel rectangle, and quadrilateral of arbitrary orientation of edges respectively. 

\subsection{Organization of the paper}
In this paper, the Section~\ref{problem_alg} describes the algorithm for
two-center problem of a convex polygon in streaming setup along with 
the detailed analysis of the approximation factor. Section~\ref{non_str}
 discusses the same problem under non-streaming model and a linear time algorithm
 is proposed along with a detailed discussion on the analysis of approximation factor. 
 Finally we conclude in section~\ref{conc} with future work.

\section{Two-center problem for convex polygon under streaming model} 
\label{problem_alg}
In this section, we first describe the streaming algorithm for the problem in subsection~\ref{str_alg}.
Then in subsection~\ref{lower_bound}, we discuss about the type of lower bounds of the optimal radius of the disks
followed by the interesting characteristic of the problem in subsection~\ref{approx_stream} which shows that only
quadrilaterals, triangles are to be studied instead of all convex polygons for the approximation factor.
Subsection~\ref{analysis_sm} will show the detailed analysis of the approximation factor.
 
\subsection{Proposed algorithm}
\label{str_alg}
Under the streaming data model, the algorithm has only a limited amount of working space.
So it cannot store all the input items received so far. In this model, 
the input data is read only once in a single pass. It does not require the entire data set to be stored in memory.
In the streaming setup, the vertices of the convex polygon $P$ 
arrives in order one at a time. In a linear scan among the vertices of $P$,
we can identify the four vertices $a$, $b$, $c$ and $d$ of the polygon $P$ 
with minimum $x$-, maximum $x$-, minimum $y$- and maximum $y$-coordinate 
respectively as shown in Figure~\ref{sm1}(a). This needs $O(1)$ scalar 
locations. Let $\mathcal{R}=\Box efgh$ be  an axis-parallel rectangle whose four sides passes through the vertices $a, b, c$ and $d$ of the convex polygon $P$, where $a\in \overline{gh}$,
$b\in \overline{he}$, $c\in \overline{ef}$ 
and $d\in\overline{fg}$.
%\begin{wrapfigure}{rh}{0.5\textwidth}
\begin{figure}
\centering
\includegraphics[scale=0.65]{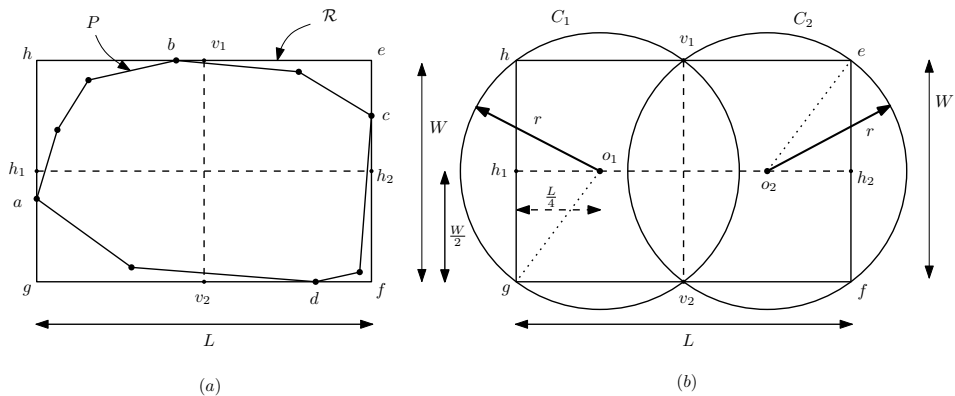}
\caption{(a) Covering rectangle ${\cal R}=\Box efgh$ for the convex polygon $P$.  (b) Rectangle ${\cal R}$ of size $L\times W$
covered by two disks $C_1$ and $C_2$}
\label{sm1}
\end{figure}
%\end{wrapfigure}
The length and width of rectangle $\mathcal{R}$ are  $L=|x(c)-x(a)|$ and  
$W=|y(b)-y(d)|$ respectively. We split $\mathcal{R}$ into two equal parts 
${\cal R}_1$ and ${\cal R}_2$ by a vertical line $\overline{v_1v_2}$, 
where $v_1 \in \overline{eh}$ and $v_2 \in \overline{fg}$ (see 
Figure~\ref{sm1}(a)). Finally, compute two congruent disks $C_1$ 
and $C_2$ of minimum radii circumscribing ${\cal R}_1$ and ${\cal R}_2$ 
respectively (see Figure~\ref{sm1}(b)). The output of our algorithm 
is $r$, the radii of $C_1$ (resp. $C_2$). {\it Since the two disks 
cover the rectangle $\mathcal{R}$ together, they must also cover the 
polygon $P$ lying inside $\mathcal{R}$}. 
For an axis-parallel 
rectangle ${\mathcal R} = \Box {efgh}$ of length $L$ and width $W$ 
(where $0 < W \leq L$) covering the polygon $P$, the value of $r$ (as shown in Figure~\ref{sm1}(b))
computed by our algorithm is
\begin{equation}
\label{eq1}
r=\sqrt{\left(\frac{L}{4}\right)^2+\left(\frac{W}{2}\right)^2}
 =\frac{1}{4}\sqrt{L^2+4W^2}
\end{equation} 

The time complexity of 
our algorithm, determined mainly by identification of the
four vertices $a$, $b$, $c$ and $d$ during the streaming input of the vertices of $P$, 
takes $O(n)$ time, where $n$ is the size of the input. 
	
 Let $r$ be the radius of the two congruent disks $C_1$ and $C_2$ for enclosing $P$, 
 returned by our algorithm. If $r_{opt}$ is the minimum radius of the two congruent disks that 
cover $P$, then the 
approximation factor 
of our algorithm is $\alpha=\frac{r}{r_{opt}}$. We now 
propose a lower bound $\rho$ of $r_{opt}$, which 
suggests an upper bound $\frac{r}{\rho}$ of $\alpha$, i.e. $\alpha\leq\frac{r}{\rho}$.

\subsection{Lower bound $\rho$ of  
$r_{opt}$}
\label{lower_bound}

\begin{defn} \label{exact-cover}
A convex polygon $P$ is said to be {\em {\bf exactly covered}}
by an axis-parallel rectangle ${\cal R}$, if $P \cap {\cal R}=P$ and 
each of the four side of ${\cal R}$
contain at least one vertex of $P$.
\end{defn}

\begin{defn} \label{subset}
A convex polygon $P_1$ is said to be a {\em {\bf subpolygon}}
of a convex polygon $P_2$, if the set of vertices of $P_1$ are subset of
the vertices of $P_2$ and this is denoted by $P_1 \subseteq P_2$. 
\end{defn}

The Figure~\ref{sm1}(a) shows that the convex polygon $P$ is
exactly covered by the rectangle $\cal R$ (Definition \ref{exact-cover}) and the 
quadrilateral $\Diamond abcd \subseteq P$ (Definition \ref{subset}).

Now, to have a better estimate of the approximation factor, we need a 
lower bound of $r_{opt}$, which is as large as possible. The following  
observations give us an idea of choosing two types of lower bound of $r_{opt}$.

\begin{observation}
\label{covering}
The two disks 
whose union covers the convex polygon $P$, must also 
cover a convex polygon which is a subpolygon of $P$.
\end{observation}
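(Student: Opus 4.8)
The plan is to deduce the claim directly from convexity and the containment meaning of a subpolygon, so no heavy machinery is needed. First I would fix an arbitrary subpolygon $P_1\subseteq P$ in the sense of Definition~\ref{subset}, so that the vertices of $P_1$ form a subset of the vertices of $P$.

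The one geometric fact I would establish is that $P_1$, viewed as a planar region, lies entirely inside $P$. Since $P$ is convex, it coincides with the convex hull of its vertex set; likewise $P_1$ is the convex hull of a subset of that same vertex set. Because the convex hull of a subset of points is contained in the convex hull of the full set, it follows that $P_1\subseteq P$ as regions, not merely at the level of vertices.

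The rest is a transitivity argument. Suppose two disks $C_1$ and $C_2$ cover $P$, i.e.\ $P\subseteq C_1\cup C_2$. Combining this with $P_1\subseteq P$ yields $P_1\subseteq C_1\cup C_2$, so the very same two disks cover $P_1$. This applies in particular to the quadrilateral $\Diamond abcd \subseteq P$ highlighted in Figure~\ref{sm1}(a), and is exactly what lets us replace $P$ by a simpler subpolygon when bounding $r_{opt}$ from below: any lower bound on the optimal radius for covering a subpolygon is automatically a lower bound $\rho$ on $r_{opt}$.

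I do not expect a genuine obstacle here; the only point demanding care is the distinction between ``the vertices of $P_1$ are among those of $P$'' and ``the region $P_1$ is contained in $P$.'' The convexity of $P$ is precisely what bridges this gap, since for a non-convex $P$ a subpolygon built on a subset of the vertices need not lie inside $P$, and the statement would then fail.
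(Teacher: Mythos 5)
Your proof is correct and matches the paper's reasoning: the paper states Observation~\ref{covering} without proof, treating as evident exactly the argument you spell out, namely that convexity of $P$ makes the subpolygon (the convex hull of a subset of $P$'s vertices) a subset of $P$ as a region, whence any pair of disks covering $P$ covers it too. Your explicit note that convexity is what bridges ``subset of vertices'' to ``containment of regions'' is a worthwhile clarification of the paper's implicit step, but it is the same approach.
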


Thus, the lower bound 
of the radii of the two disks for covering a  quadrilateral $\Diamond abcd$, where $\Diamond abcd \subseteq P$, is 
also a lower bound for the radius of the two-center problem for  the convex polygon $P$.

\begin{observation}
\label{lower_bound_1}
Let $\mathcal{L}$ be the longest line segment within a 
quadrilateral $\Diamond abcd$ inside $P$. The two disks 
whose union covers the convex polygon $P$, must also 
cover the line segment $\cal L$ because $\Diamond abcd 
\subseteq P$.
\end{observation}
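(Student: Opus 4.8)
The plan is to establish the claim by a short chain of set inclusions, relying on the convexity of $P$ together with Observation~\ref{covering}. The essential point is that "covering" a region means containing every one of its points in the union of the two disks; hence it suffices to show that the segment $\mathcal{L}$ is itself a subset of $P$.

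First I would argue that the quadrilateral $\Diamond abcd$ lies entirely inside $P$ as a point set. By Definition~\ref{subset}, the vertices $a,b,c,d$ are themselves vertices of $P$, and therefore points of $P$. Since $P$ is convex, it contains the convex hull of any subset of its points; in particular it contains the convex hull of $\{a,b,c,d\}$, which is exactly the region bounded by $\Diamond abcd$. Thus $\Diamond abcd \subseteq P$ holds as a geometric containment, and not merely in the combinatorial sense of Definition~\ref{subset}.

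Next, because $\mathcal{L}$ is by definition a line segment lying within $\Diamond abcd$, we obtain the inclusions $\mathcal{L} \subseteq \Diamond abcd \subseteq P$. Now suppose the two disks cover $P$, so that every point of $P$ belongs to at least one of them. Every point of $\mathcal{L}$ is a point of $P$, hence lies in one of the two disks, and so their union covers $\mathcal{L}$. This is precisely the assertion; equivalently, it is an immediate instance of Observation~\ref{covering}, since $\mathcal{L}$ lies inside the subpolygon $\Diamond abcd$ and covering that subpolygon forces covering $\mathcal{L}$.

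There is no substantive obstacle in this argument: the only subtlety is to keep distinct the combinatorial notion of \emph{subpolygon} (a condition on vertex sets, per Definition~\ref{subset}) from the geometric containment of point sets. The bridge between the two is exactly the convexity of $P$, which ensures that a combinatorial subpolygon is also a genuine geometric subset. Once that bridge is in place, the covering conclusion is immediate from the meaning of "cover," and the observation justifies passing the lower bound for $\Diamond abcd$ (or for $\mathcal{L}$) up to a lower bound for $r_{opt}$ of the whole polygon $P$.
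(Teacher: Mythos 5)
Your argument is correct and coincides with the paper's own (inline) justification: the paper treats this observation as immediate from the containment chain $\mathcal{L} \subseteq \Diamond abcd \subseteq P$, which is exactly what you establish. Your only addition is to make explicit the bridge from the combinatorial Definition~\ref{subset} to geometric containment via convexity of $P$, which is a sound and welcome clarification but not a different approach.
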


From Observation~\ref{lower_bound_1}, we conclude that $\rho\geq \frac{{\cal L}}{4}$. 
Moreover, the length of the line segment $\cal L$ can be at 
most $D$, the diameter of the convex polygon $P$.

\begin{observation}\label{obsrv1}
Let $\Delta$ be a triangle inside the polygon $P$. 
If a pair of disks 
$C_1$ and $C_2$ completely cover $P$, they must also cover the 
triangle $\Delta$. Again, if a pair of disks 
$C_1$ and $C_2$ cover a triangle $\Delta$, one of them must 
fully cover one of the edges of $\Delta$. 
\end{observation}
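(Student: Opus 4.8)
The plan is to dispose of the two assertions separately: the first by pure set containment, and the second by a convexity-plus-pigeonhole argument.

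For the first assertion, I would simply observe that $\Delta$ lies inside $P$, so $\Delta \subseteq P$. If $C_1 \cup C_2$ completely covers $P$, then $P \subseteq C_1 \cup C_2$, whence $\Delta \subseteq P \subseteq C_1 \cup C_2$. Thus the two disks cover $\Delta$ as well, and nothing further is required. This is essentially the same monotonicity already used in Observation~\ref{covering}.

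The substance lies in the second assertion. The key fact I would isolate at the outset is that each of $C_1, C_2$ is a \emph{convex} set, being a closed disk. Consequently, for any edge of $\Delta$, if both of its endpoints lie in a single disk $C_i$, then the entire segment joining them lies in $C_i$ as well; equivalently, a disk fully covers an edge of $\Delta$ \emph{if and only if} it contains both endpoints of that edge. This reduction turns the claim about covering a whole edge into the simpler claim of covering two vertices with one disk. With the reduction in hand, I would apply the pigeonhole principle: the three vertices of $\Delta$ must all be covered by $C_1 \cup C_2$, so each vertex lies in at least one of the two disks, and distributing three vertices among two disks forces at least two of them to lie in a common disk, say $C_1$. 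These two vertices are the endpoints of one of the three edges of $\Delta$, so by the convexity fact that edge is entirely contained in $C_1$, exhibiting an edge fully covered by a single disk.

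I do not expect a genuine obstacle here; the only point requiring care is the convexity step. One must use that the disks are \emph{closed}, so that a vertex lying on a bounding circle still counts as covered, and that convexity yields containment of the full segment rather than merely of its endpoints. Once the equivalence ``a disk covers an edge $\Leftrightarrow$ it contains both endpoints'' is established, the pigeonhole count is immediate, and the statement follows.
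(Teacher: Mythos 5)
Your proof is correct: the first assertion is indeed immediate set containment, and your pigeonhole-plus-convexity argument for the second assertion (two of the three vertices of $\Delta$ must share a disk, and convexity of a closed disk then swallows the whole edge between them) is exactly the standard justification. The paper states Observation~\ref{obsrv1} without any proof, treating it as self-evident, so your write-up simply supplies the argument the authors left implicit rather than taking a different route.
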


Thus, a  lower 
bound $\rho$ of $r_{opt}$ is half of the length of the smallest 
edge of a triangle inside $P$ (Observation~\ref{obsrv1}). In order to tighten the lower 
bound we find a triangle $\Delta$ inside $P$ whose smallest 
edge is as large as possible. We use $\ell$ to denote the 
smallest edge of $\Delta$. We also use $\ell$ to denote the length of $\ell$.
Thus, {\it ${\ell}/{2}$ is a lower 
bound for $\rho$}. 

Note that, in our analysis $\Delta$ may not always be the 
triangle whose smallest side is of maximum length among 
all triangles inscribed in $P$. We try to find a triangle 
$\Delta$ inscribed in $P$ such that the length of its 
smallest side $\ell$ has a closed form expression in terms 
of the length ($L$) and width ($W$) of the rectangle $\cal R$ 
covering $P$. This helps us to establish an upper bound
on the 
approximation factor $\alpha$ of our algorithm.

\subsection{Characterization of the problem}
\label{approx_stream}

The upper bound of the  
approximation factor $\alpha$ for the two-center problem for the polygon $P$ is $\alpha
\leq\frac{r}{({\ell}/{2})}$ or, $\alpha\leq \frac{r}{\left(|{\cal L}|/4\right)}$ 
depending on the type of lower bound used. In order to have a 
worst case estimate of the approximation factor, at first we fix $r$ 
(or in other words both $L$ and $W$ of the rectangle ${\cal R}$). Now, there are different convex polygons
exactly covered by the same rectangle $\cal R$, and 
the lower bound of optimal radius for each such polygon are possibly
different. Thus  
in order to have a worst estimate of the upper bound for the 
approximation factor $\alpha$, we choose the polygon $P$ inside $\cal R$ 
for which the lower bound ($\rho$) of $r_{opt}$ is minimum among
all possible polygons inside ${\cal R}$. The following 
observation gives us an intuition for choosing quadrilaterals
and triangles instead of inspecting all 
possible polygons exactly covered by the rectangle $\cal R$.

\begin{observation}\label{approx_1}
Let $P$ be a convex polygon which is exactly 
covered by an axis-parallel rectangle $\cal R$ of length $L$ and
width $W$ {\em($W\leq L$)}. Let $\Pi$ be a subpolygon of $P$ {\em($\Pi \subseteq P$)} so that
$\Pi$ is also {\em exactly
covered} by the same axis-parallel rectangle $\cal R$.
Then the upper bound of the approximation factor $\alpha$ of our algorithm for 
polygon $P$ will be smaller than {\em(or equal to)} that for polygon $\Pi$. 
\end{observation}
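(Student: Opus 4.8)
The plan is to collapse the comparison of the two \emph{upper bounds} on $\alpha$ into a comparison of the two \emph{lower bounds} $\rho$, and then to exploit the containment $\Pi\subseteq P$ monotonically. First I would observe that since both $P$ and $\Pi$ are \emph{exactly covered} by the \emph{same} rectangle ${\cal R}$ of length $L$ and width $W$, the radius $r$ returned by the algorithm is identical in the two cases: by Equation~(\ref{eq1}), this $r$ is a function of $L$ and $W$ alone and does not otherwise depend on the shape of the polygon inside ${\cal R}$. Writing $\rho_P$ and $\rho_\Pi$ for the lower bounds of $r_{opt}$ used for the two polygons, the two upper bounds on $\alpha$ are $r/\rho_P$ and $r/\rho_\Pi$. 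Since $r$ is common and positive, the assertion ``upper bound for $P$ $\le$ upper bound for $\Pi$'' is equivalent to $\rho_P\ge\rho_\Pi$, so the whole statement reduces to showing that the lower bound does not decrease when we pass from the subpolygon $\Pi$ to the larger polygon $P$.

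Next I would make the containment precise at the level of regions and then invoke a monotonicity property of the lower bound. By Definition~\ref{subset} the vertex set of $\Pi$ is a subset of the vertex set of $P$; since both are convex, $\Pi$ is the convex hull of a subset of the points whose convex hull is $P$, and hence the closed region bounded by $\Pi$ is contained in the closed region bounded by $P$. Both flavours of lower bound are defined through a geometric witness lying \emph{inside} the polygon: Observation~\ref{lower_bound_1} uses the longest segment ${\cal L}$ inside an inscribed quadrilateral and gives $\rho\ge|{\cal L}|/4$, while Observation~\ref{obsrv1} uses a triangle $\Delta$ inside the polygon and gives $\rho\ge\ell/2$, where $\ell$ is its smallest edge. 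In either case, let $X_\Pi$ be the witness (segment or triangle) realizing $\rho_\Pi$ for $\Pi$. Because $X_\Pi$ lies inside $\Pi$, and $\Pi$ is contained in $P$ by the previous step, $X_\Pi$ also lies inside $P$ and is therefore an admissible witness for $P$. Since $\rho_P$ is obtained by choosing the witness as favourably as possible (the longest segment, or the triangle with largest smallest edge), $\rho_P$ is at least the value produced by $X_\Pi$, namely $\rho_\Pi$. Hence $\rho_P\ge\rho_\Pi$, and consequently $r/\rho_P\le r/\rho_\Pi$, which is exactly the claim.

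I expect the only delicate point to be this monotonicity step, and specifically the bookkeeping that the witness chosen for $\Pi$ remains a legal witness for $P$. The care needed is that the lower bound for $P$ be read as a best choice (supremum) over admissible witnesses rather than pinned to one specific construction: the paper's own remark that $\Delta$ need not be the maximizing triangle means I should phrase the argument as ``$\rho_P$ is \emph{at least} the value obtained from any witness lying inside $P$, in particular from $X_\Pi$'', which keeps the inequality pointing in the right direction no matter which concrete triangle is later selected for the closed-form analysis in terms of $L$ and $W$. By contrast, deducing region containment from vertex-set containment is routine once convexity is invoked, and the reduction to $\rho_P\ge\rho_\Pi$ via Equation~(\ref{eq1}) is immediate, so neither of those steps requires real work.
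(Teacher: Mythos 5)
Your proposal is correct and follows essentially the same route as the paper's own proof: the paper likewise fixes the common radius $r$ (determined by the shared rectangle $\cal R$) and argues that any triangle $\Delta$ or segment $\cal L$ inside $\Pi$ also lies inside $P$, so the witness-based lower bound $\rho$ for $\Pi$ is at most that for $P$, giving $r/\rho_P \leq r/\rho_\Pi$. Your only addition is making explicit the supremum-over-witnesses reading of $\rho$ and the convexity argument for region containment, both of which the paper leaves implicit.
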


\begin{Proof}
Follows from the Observation~\ref{covering} that the lower bound
$\rho$ ($(\ell/2)$ or $(|{\cal L}|/4)$) of the optimal radius $r_{opt}$
for polygon $\Pi$ will be less than that for polygon $P$
(because of the fact that any triangle $\Delta$ in $\Pi$
or any line segment $\cal L$ in $\Pi$
also lies inside $P$).
\end{Proof}

Observation~\ref{approx_1} says that in order to measure the upper bound of the approximation 
factor of our algorithm for a given convex polygon $P$, one should choose a quadrilateral 
$\Diamond abcd$ as a subpolygon $\Pi$ of $P$ (i.e. $\Pi=\Diamond abcd\subseteq P$) where both $P$ and 
$\Pi=\Diamond abcd$ are exactly covered by the same rectangle $\cal R$. The reason for choosing 
the quadrilateral $\Diamond abcd$ as subpolygon $\Pi$ of $P$ is that quadrilateral is the minimal 
convex polygon (``minimal'' in the sense that ``there exists no subpolygon of the quadrilateral 
$\Diamond abcd$ which is exactly covered by the same rectangle $\cal R$''). 
From now onwards, we will use $\Pi$ to denote ``{\em a subpolygon of $P$ such that both $P$
and $\Pi$ are exactly covered by the same rectangle $\cal R$}''.
It needs to mention 
that, we may have two degenerate cases, (i) if a vertex $p$ 
of the given convex polygon $P$ coincides with a vertex of $\cal R$, then
the minimal subpolygon $\Pi$ of $P$ will be a triangle with one 
of its vertex at $p$, and (ii) if (maximum-$x$, maximum-$y$), and (minimum-$x$, minimum-$y$)
coordinates correspond to two vertices, say $p$ and $q$, of the given convex polygon $P$ 
(i.e. any two non-adjacent corners, say $e$ and $g$, of the rectangle $\cal R$ coincides with 
these two vertices $p$ and $q$), then we need to consider diagonal $\overline{eg}$ as a 
subpolygon $\Pi$ (with area  zero). Now note that, whatever be the shape of a convex polygon $P$ 
that is exactly covered by rectangle $\cal R$, we always obtain a subpolygon $\Pi$ as a quadrilateral $\Diamond abcd
\subseteq P$ (including degeneracies). The observation~\ref{approx_1} says that the approximation 
factor for this given convex polygon $P$ will be bounded above by that of its subpolygon $\Pi=\Diamond abcd$. 
Therefore we will concentrate on all possible quadrilaterals inside $\cal R$ 
rather than studying convex $n$-gons with $n\geq 5$.
Now, each such quadrilaterals have different lower bound of optimal radius.
The minimum of these lower bounds for $r_{opt}$ among all possible quadrilaterals will be used to compute the
upper bound of the approximation factor for an arbitrary convex polygon which is 
exactly covered by the rectangle $\cal R$.

In our streaming model we have stored only the four vertices $a$, $b$, $c$ and $d$ of the convex 
polygon $P$ and we find out either a triangle $\Delta$ (as defined in earlier section),
or the longest line segment $\cal L$ inside the $\Diamond abcd$
instead of searching them inside $P$ and the approximation factor thus obtained gives an upper bound for the same in $P$.

In the next subsection, we perform an exhaustive case analysis and 
finally present a flowchart in Figure  \ref{flowchart_stream} to justify the following result. 

\begin{theorem}
The approximation factor $\alpha$ of the two-center problem for a convex polygon $P$ in the streaming model is $2$.
\end{theorem}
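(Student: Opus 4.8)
The plan is to control the true ratio $\alpha=r/r_{opt}$ through the surrogate $r/\rho$, where $\rho=\max\{|\mathcal L|/4,\ \ell/2\}$ is the lower bound on $r_{opt}$ supplied by Observations~\ref{lower_bound_1} and~\ref{obsrv1}, and to prove the single inequality $\rho\ge r/2$ for every admissible shape. Observation~\ref{approx_1} lets me replace $P$ by the subpolygon $\Pi=\Diamond abcd$, so I fix the covering rectangle $\mathcal R=[0,L]\times[0,W]$ with $W\le L$ (whence $r=\frac14\sqrt{L^2+4W^2}$ by~\eqref{eq1}) and parameterise $\Pi$ by $a=(0,y_a)$, $b=(L,y_b)$, $c=(x_c,0)$, $d=(x_d,W)$. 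The goal then is $\max\{|\mathcal L|/4,\ \ell/2\}\ge r/2$ over all such $\Pi$.

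First I would settle the ``thin'' regime by the segment bound alone. Since $a,b$ sit on opposite vertical sides, every $\Pi$ has diameter $|\mathcal L|\ge|ab|\ge L$, so $|\mathcal L|/4\ge L/4$; and $L/4\ge r/2$ is equivalent to $2L\ge\sqrt{L^2+4W^2}$, i.e. to $W\le\frac{\sqrt3}{2}L$. Hence the entire range $W\le\frac{\sqrt3}{2}L$ is immediate, and the real work is confined to the band $\frac{\sqrt3}{2}L<W\le L$.

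In that band the diameter bound can fail — an adversary may flatten $\Pi$ toward a rhombus of diameter exactly $L$ — so I would turn to the triangle bound, and the decisive point is that the triangle $\Delta$ of Observation~\ref{obsrv1} must be permitted a vertex in the \emph{interior} of a side of $\Pi$, not merely at $a,b,c,d$: a triangle built only on the four vertices $a,b,c,d$ can have shortest edge below $r$ and would spuriously push the ratio past $2$. The representative construction takes a long edge of $\Pi$ as base and the midpoint of the opposite side as apex, so that its shortest edge $\ell$ acquires a closed form in $L,W$; the target in each branch is then $\ell\ge r$, equivalently $\ell/2\ge r/2$. The branches are indexed by the side of the vertical bisector on which $c,d$ fall and by which of the chords $ac,ad,bc,bd$ is shortest, since that determines which edge of $\Delta$ realises the minimum — precisely the cases the flowchart in Figure~\ref{flowchart_stream} records.

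The degeneracies are quick: if a vertex of $\Pi$ reaches a corner of $\mathcal R$ then $\Pi$ is a triangle and $\ell$ is just its shortest side, while if two opposite corners are occupied then $\Pi$ collapses to a diagonal of length $\sqrt{L^2+W^2}$ and $|\mathcal L|/4=\frac14\sqrt{L^2+W^2}\ge r/2$ holds outright (it reduces to $3L^2\ge0$). I expect the genuine obstacle to be the near--square subcase, $W$ just above $\frac{\sqrt3}{2}L$ with $c,d$ driven toward opposite corners: the diameter then sinks close to $L$ while the corner chords $ac,ad,bc,bd$ all shrink, so no four--vertex triangle suffices and the interior apex becomes indispensable. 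Verifying that this one family of triangles delivers $\ell\ge r$ uniformly across the band is the calculation on which the bound of $2$ ultimately rests.
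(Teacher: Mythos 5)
Your scaffolding coincides with the paper's: the same surrogate bound $\alpha\le r/\rho$ with $\rho=\max\{|\mathcal L|/4,\ell/2\}$, the same reduction to the quadrilateral $\Pi=\Diamond abcd$ via Observation~\ref{approx_1}, the same resolution of the regime $W\le\frac{\sqrt3}{2}L$ by the diagonal of length $\ge L$ (the paper's Case~I), and the same treatment of the two degeneracies (your diagonal computation is literally Lemma~\ref{lll3}). Your observation that four-vertex triangles do not suffice in the band $\frac{\sqrt3}{2}L<W\le L$ is also correct and is exactly why the paper's hard cases all build an auxiliary apex $q$ interior to $\Diamond abcd$. But at the point where the proof actually lives, you stop: you yourself write that verifying $\ell\ge r$ for your triangle family ``is the calculation on which the bound of $2$ ultimately rests,'' and that calculation is never performed. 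The paper's proof of this theorem \emph{is} that calculation --- Observations~\ref{obsh1h2} and~\ref{obsv1v2} plus the exhaustive branches A.1, A.2, B.1, B.2.a, B.2.b(i), B.2.b(ii) --- so what you have is a correct plan, not a proof.

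Moreover, the one concrete construction you propose (base a long edge of $\Pi$, apex the midpoint of the opposite side) is not the paper's and does not deliver what you claim of it. You assert the shortest edge ``acquires a closed form in $L,W$,'' but the midpoint of a side of $\Pi$ moves with the unknown vertices, so its distances to the base endpoints are functions of the configuration, not of $L,W$ alone; lower-bounding them uniformly would force you back into precisely the case analysis you are trying to package away. The paper avoids this by choosing case-tailored apexes whose coordinates are fixed by $\mathcal R$ alone --- e.g.\ the intersection of $\overline{gv_1}$ with $\overline{h_1h_2}$ in Case~A.1, or points on a horizontal line at distance $W/8$ from $\overline{h_1h_2}$ in Cases B.2.a and B.2.b(ii) --- together with containment arguments ($q\in\Diamond abcd$ for every admissible vertex position in that branch) yielding explicit bounds such as $\ell\ge\frac{5W}{8}$, $\ell\ge\frac{9L}{16}$, or $\ell\ge\sqrt{(L/4)^2+(W/2)^2}$. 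Note also that in Case~A.1 the paper's bound is exactly $\ell\ge\sqrt{(L/4)^2+(W/2)^2}=r$, i.e.\ the factor $2$ is attained with zero slack; so any substitute family of triangles must be checked to meet this tight threshold in every branch, and an unverified ``I expect'' cannot stand in for that check. To complete your argument you would need, for each branch of the flowchart in Figure~\ref{flowchart_stream}, either the paper's apex constructions or a proof that your midpoint apex satisfies $\ell\ge r$ there.
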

\begin{proof}
Follows from Lemma \ref{lll1}, \ref{lll2} and \ref{lll3}, stated in the next subsection. 
\end{proof}

\subsection{Analysis of approximation factor}
\label{analysis_sm}

Let, $h_1$, $v_1$, $h_2$ and $v_2$ be the mid-points of $\overline{gh}$, 
$\overline{he}$, $\overline{ef}$ and $\overline{fg}$ respectively, and 
$|{fg}|=L$ and $|{ef}|=W$ ($L\geq W$ as shown in Figure~\ref{sm1}(b)). 
Surely, $L\leq D$ (the diameter of the
polygon $P$). We study, in detail, the case when $\Pi$
is a quadrilateral. We also discuss the two degenrate cases, namely, 
(i) $\Pi$ is a triangle and (ii) $\Pi$ is a diagonal $\overline{eg}$ of ${\cal R}=\Box efgh$.

\subsubsection{$\Pi$ is a quadrilateral $\Diamond abcd$}

We consider the following two cases separately.  

{\bf Case I: $0<\frac{W}{L} \leq \frac{\sqrt{3}}{2}$} 
 
 One of the diagonals of $\Diamond abcd$ (e.g. $\overline{ac}$ in
 Figure~\ref{sm1}(a))  must be at least of length $L$ and the two congruent 
 disks must cover this diagonal. Thus, we have $|{\cal L}|\geq L$, and hence  $\rho\geq\frac{L}{4}$, implying  
 $\alpha=\frac{r}{\rho}\leq\frac{\frac{1}{4}\sqrt{L^2+4W^2}}{L/4}=\sqrt{1+4\left(\frac{W}{L}\right)^2}$.
 Since $\left(\frac{W}{L}\right)\leq \frac{\sqrt{3}}{2}$, we have $\alpha\leq 2$.

\begin{figure}[htbp]
\centering
\includegraphics[width=\linewidth]{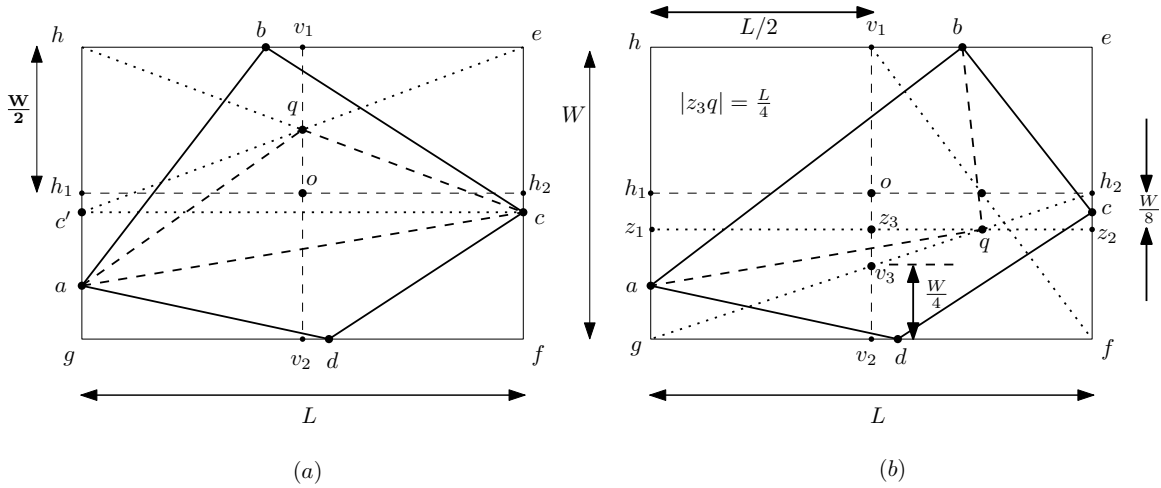}
\caption{Demonstration of Observation~\ref{obsh1h2}: ($a$) Case ($i$) and
($b$) Case ($ii$).}
\label{sm2}
\end{figure}

{\bf Case II: $\frac{\sqrt{3}}{2} <\frac{W}{L} \leq 1$} 

Before studying this case, we show the following two important observations:

\begin{observation}
\label{obsh1h2}
If both the vertices $a$ and $c$ lie at the same side of $\overline{h_1h_2}$, the approximation factor $\alpha$ will be  2.
\end{observation}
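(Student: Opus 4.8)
The plan is to apply the triangle lower bound of Observation~\ref{obsrv1}: it suffices to exhibit a triangle inscribed in the quadrilateral $\Diamond abcd$ whose smallest edge $\ell$ satisfies $\ell\ge r$, for then $\rho\ge \ell/2\ge r/2$ and hence $\alpha=r/\rho\le 2$. Place $\mathcal{R}=[0,L]\times[0,W]$ with the four touching vertices written as $a=(0,y_a)$, $c=(L,y_c)$ on the vertical sides and $b=(x_b,W)$, $d=(x_d,0)$ on the horizontal sides, so that $\overline{h_1h_2}$ is the horizontal mid-line $y=W/2$ and $\overline{v_1v_2}$ the vertical mid-line $x=L/2$. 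Since $\mathcal{R}$ is symmetric about $\overline{h_1h_2}$ and both $r$ and $\rho$ are invariant under that reflection, I may assume the hypothesis places $a$ and $c$ in the lower half, i.e. $y_a,y_c\le W/2$; the upper-half case is identical. The key geometric feature is then that $a$ and $c$ stay below $y=W/2$ while the polygon still reaches the top edge at $b$.

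First I would try the triangle $\triangle abc$. Its base $\overline{ac}$ has length at least $L$, and since $W\le L$ a one-line check gives $L\ge r$, so the base is never the shortest edge. For the slanted edges, $|ab|^2=x_b^2+(W-y_a)^2$ and $|cb|^2=(L-x_b)^2+(W-y_c)^2$ have vertical parts at least $W/2$ and horizontal parts summing (in $x_b$) to $L$; averaging shows the longer slant is always at least $\tfrac12\sqrt{L^2+W^2}\ge r$. Hence the entire difficulty is the shorter slant, whose behaviour is dictated by the side of $\overline{v_1v_2}$ on which the apex $b$ sits --- precisely the dichotomy drawn as sub-cases (i) and (ii) in Figure~\ref{sm2}. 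When $b$ lies near the centre, both slants clear $r$ and $\triangle abc$ already closes the argument.

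The main obstacle is the apex lying near a top corner, say near $(0,W)$: then $|ab|\to W/2$, which falls just short of $r$ as $W/L\to1$, so $\triangle abc$ fails. My remedy is to abandon $a$ and anchor the triangle at the topmost and bottommost vertices instead, using $\triangle bcd$ (and, symmetrically, $\triangle abd$ when $b$ is near $(L,W)$): here $|bd|\ge W\ge r$ throughout Case~II and, with $b$ on the far side of $\overline{v_1v_2}$ from $c$, $|bc|\ge\tfrac12\sqrt{L^2+W^2}\ge r$, so only the remaining edge needs a bound. Verifying that this last edge also clears $r$ in every placement of $b$ and $d$ is the delicate core of the proof and is what forces the exhaustive case split; I would carry it out by direct computation, paying particular attention to the extreme ratios $W/L=\tfrac{\sqrt3}{2}$ and $W/L=1$, at which the estimates are tight and pin $\alpha$ at exactly $2$. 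The truly degenerate placements --- an apex coinciding with a corner of $\mathcal{R}$, so that $\Pi$ collapses to a triangle or to the diagonal $\overline{eg}$ --- I would strip off and treat by hand, in line with the degenerate cases already isolated before the theorem.
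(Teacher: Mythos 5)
Your reduction and target inequality (exhibit a triangle in $\Diamond abcd$ with smallest side $\ell\geq r$, so that $\rho\geq \ell/2\geq r/2$ and $\alpha\leq 2$) are sound, and your analysis of $\triangle abc$ is correct as far as it goes. But the remedy you propose for the hard case cannot be carried out: you restrict yourself to triangles whose vertices lie among $\{a,b,c,d\}$, and there are admissible configurations in which \emph{every} such triangle has a side shorter than $r$. Take $L=W=1$ (so $r=\sqrt{5}/4\approx 0.559$) and $a=(0,0.49)$, $b=(0.02,1)$, $c=(1,0.01)$, $d=(0.99,0)$. This is a convex quadrilateral exactly covered by the unit square, with $a$ and $c$ both below $\overline{h_1h_2}$, lying in the regime $W/L>\sqrt{3}/2$ where the observation is invoked. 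Here $|ab|\approx 0.510<r$ and $|cd|\approx 0.014$, so $\triangle abc$ and $\triangle abd$ fail on the edge $\overline{ab}$, while $\triangle bcd$ and $\triangle acd$ fail on $\overline{cd}$: your fallback $\triangle bcd$ breaks precisely because nothing prevents $c$ and $d$ from huddling together near the corner $f$ at the same time as $b$ sits near $h$. The ``delicate core'' you deferred to direct computation is therefore not delicate but false, and no case analysis over vertex triangles alone can close it.

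The paper's proof avoids this by constructing \emph{auxiliary interior points}, which is the idea your proposal is missing. In its Case (i) ($b\in\overline{hv_1}$, with $y(a)<y(c)$ after reflections) it takes $c'\in\overline{gh}$ with $y(c')=y(c)$ and sets $q=\overline{ch}\cap\overline{c'e}$; the constraints on where the remaining vertices may lie guarantee $q\in\Diamond abcd$ for every admissible placement of $b$, and the triangle $\triangle aqc$ has all sides at least $\frac{1}{2}\sqrt{L^2+(W/2)^2}$, giving $\alpha\leq 2\sqrt{(L^2+4W^2)/(4L^2+W^2)}\leq 2$. In its Case (ii) ($b\in\overline{v_1e}$) it places $q$ on a horizontal line at distance $W/8$ below $\overline{h_1h_2}$, at its intersection with $\overline{gh_2}$, and a similar-triangles computation yields $|aq|\geq 3L/4$, $|bq|\geq 5W/8$ and $|ab|\geq\frac{1}{2}\sqrt{L^2+W^2}$, whence $\alpha\leq 16/(5\sqrt{3})<2$ using $L/W<2/\sqrt{3}$. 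Note the paper does not insist on $\ell\geq r$: in Case (ii) it settles for $\ell\geq 5W/8$ and verifies the resulting ratio is below $2$. Incidentally, in the failure configuration above one could still rescue $\alpha\leq 2$ via the longest-segment bound of Observation~\ref{lower_bound_1} (here $|bd|\approx 1.39\geq 2r$), but you did not invoke that bound, and making ``vertex triangles plus vertex segments'' suffice in all cases is itself a nontrivial claim that the paper sidesteps entirely by its interior-point constructions.
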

\begin{Proof}
Without loss of generality, assume that both $a$ and $c$ lie below $\overline{h_1h_2}$,
and  $y(a) < y(c)$ as shown in Figure~\ref{sm2}. 
Depending on the position of $b$ on the edge $\overline{eh}$, we consider the following two cases :
\subsection*{Case (i): $b$ lies on $\overline{hv_1}$}
Refer to Figure~\ref{sm2}(a).
Choose a point $c' \in \overline{gh}$ such that 
$y(c) =y(c')$. Let $q$ be the point of intersection of $\overline{ch}$ and $\overline{c'e}$. 
Whatever be the position of the vertex $b$ on $\overline{hv_1}$, the point $q$ must lie always inside 
the quadrilateral $\Diamond abcd$. Hence, the triangle $\triangle aqc$ will also lie inside
the $\Diamond abcd$.
Now, as $|ce|\geq \frac{W}{2}$, we have 
$|cq|=|c'q|=\frac{1}{2}|c'e| \geq \frac{1}{2}|h_1e|= \frac{1}{2}\sqrt{L^2+\left(W/2\right)^2}$. 
  Here, we choose the triangle $\triangle aqc$. Since the point $a$ lies below $c'$,
  we have $|aq|\geq |c'q|=|cq|$. Also, $|ac|\geq L>\frac{1}{2}\sqrt{L^2+\left(W/2\right)^2}$. Therefore, the smallest side $\ell$ of the triangle
  $\triangle aqc$ will be at least  $\frac{1}{2}\sqrt{L^2+\left({W}/{2}\right)^2}$.
  Hence, 
  $\alpha=\frac{r}{\ell/2}=\sqrt{\frac{L^2+4W^2}{L^2+\left({W}/{2}\right)^2}}=
  2\sqrt{\frac{L^2+4W^2}{4L^2+W^2}}\leq 2$ (since $W\leq L$).
  
\subsection*{Case (ii): $b$ lies on $\overline{v_1e}$}
Refer to Figure~\ref{sm2}(b). Consider a horizontal line segment $\overline{z_1z_2}$ below $\overline{h_1h_2}$ at a 
distance of $\frac{W}{8}$. This segment $\overline{z_1z_2}$ intersects $\overline{gh_2}$ at point $q$. Now,
since $c\in \overline{h_2f}$, $d\in \overline{gf}$ and $b\in \overline{v_1e}$, the point
$q$ must lie always within $\Diamond abcd$. Hence the triangle $\triangle abq$ will also lie inside
$\Diamond abcd$. Here, we choose this triangle $\triangle abq$. Now, $\overline{v_1v_2}$ intersect 
$\overline{h_1h_2}$, $\overline{z_1z_2}$
and $\overline{gh_2}$ at the points $o$, $z_3$ and $v_3$ respectively, where $|v_3v_2|=|v_3o|=\frac{1}{2}|h_2f|=\frac{W}{4}$.
From the similar triangles $\triangle gz_1q$ and $\triangle v_3z_3q$, we have
\begin{equation}
\label{eqn_sm}
\frac{|z_1g|}{|z_3v_3|}=\frac{|z_1q|}{|z_3q|} 
\end{equation}
Now, $|z_1g|=W-\left(\frac{W}{2}+\frac{W}{8}\right)=\frac{3W}{8}$,
$|z_3v_3|=|v_3o|-|z_3o|=\left(\frac{W}{4}-\frac{W}{8}\right)=\frac{W}{8}$ and $|z_1q|=|z_1z_3|+|z_3q|=\frac{L}{2}+|z_3q|$.
Hence, the Equation~\ref{eqn_sm} gives $|z_3q|=\frac{L}{4}$. Now, in $\triangle abq$, $|aq|\geq |z_1q|=\left(|z_1z_3|+|z_3q|\right)=\frac{3L}{4}$,
$|bq|\geq \left(\frac{W}{2}+\frac{W}{8}\right)=\frac{5W}{8}$ and $|ab|\geq \sqrt{\left(\frac{L}{2}\right)^2+ \left(\frac{W}{2}\right)^2}$.
Since $L\geq W$, we have $\sqrt{\left(\frac{L}{2}\right)^2+ \left(\frac{W}{2}\right)^2}\geq \sqrt{\left(\frac{W}{2}\right)^2+ \left(\frac{W}{2}\right)^2}
=0.707W>\frac{5W}{8}$. Also, $\frac{3L}{4}\geq\frac{5W}{8}$, because $L\geq W$. Therefore, the smallest side 
$\ell$ of the $\triangle abq$ must be at least $\frac{5W}{8}$. Therefore, the approximation factor $\alpha$
will be given by $\alpha=\frac{r}{\ell/2}=\frac{\frac{1}{4}\sqrt{L^2+4W^2}}{5W/16}=\frac{4}{5}\sqrt{4+\left(L/W\right)^2}
< \frac{4}{5}\sqrt{4+\left(4/3\right)}$ since $\frac{L}{W} < \frac{2}{\sqrt{3}}$
($\because \frac{\sqrt{3}}{2}<\frac{W}{L}\leq 1$). Thus, we have $\alpha=\frac{16}{5\sqrt{3}}<2$.
\end{Proof}

\begin{figure}[htbp]
\centering
\includegraphics[width=\linewidth]{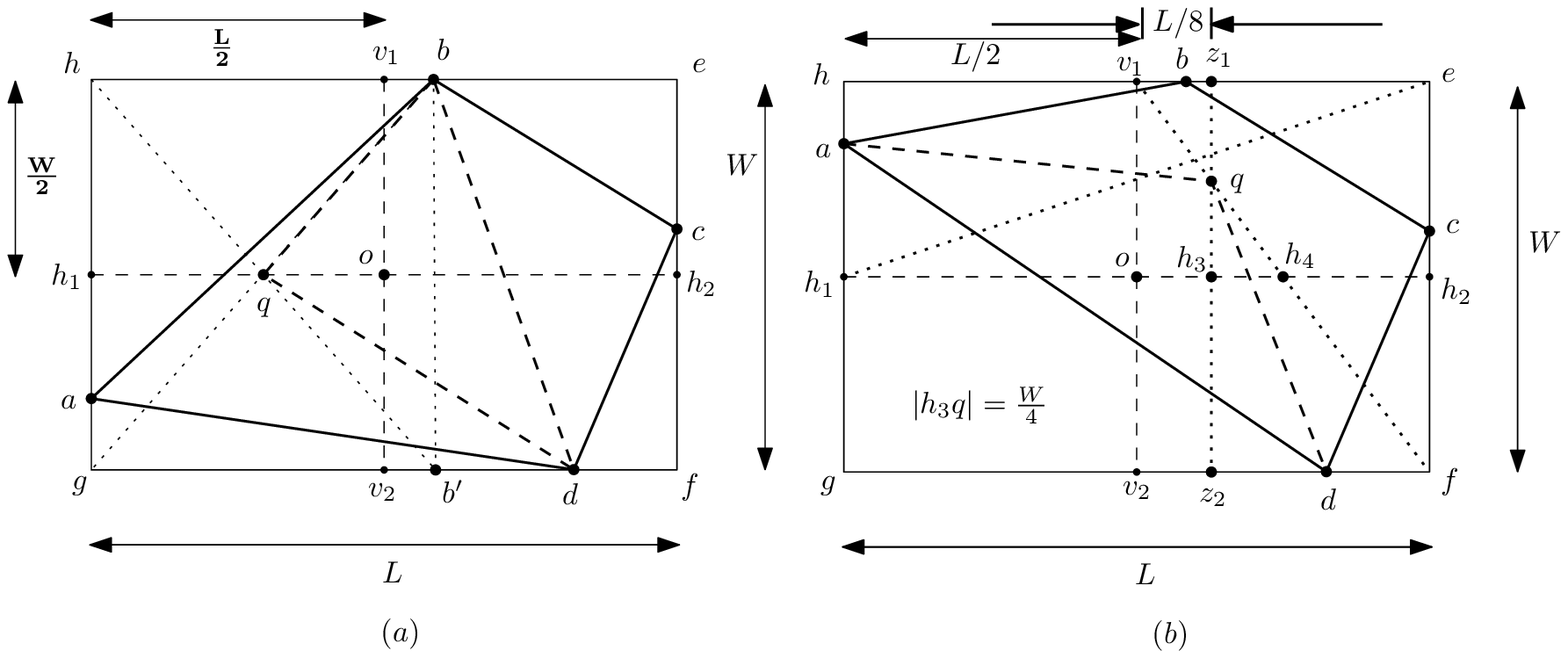}
\caption{Demonstration of Observation~\ref{obsv1v2}: ($a$) Case ($i$) and
($b$) Case ($ii$).}
\label{sm3}
\end{figure}

\begin{observation}
\label{obsv1v2}
If both the vertices $b$ and $d$ lie at the same side of $\overline{v_1v_2}$, the approximation factor $\alpha$ will be  2.
\end{observation}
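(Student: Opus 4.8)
The plan is to prove Observation~\ref{obsv1v2} by running the argument of Observation~\ref{obsh1h2} with the horizontal and vertical directions interchanged: the mid-segment $\overline{h_1h_2}$ is replaced by $\overline{v_1v_2}$, the same-side pair $\{a,c\}$ is replaced by $\{b,d\}$, and the vertex on which we case-split becomes the rightmost vertex $c$ rather than the topmost vertex $b$. Both reflections used below are genuine symmetries of the configuration, since each preserves ${\cal R}$ and the (always vertical) cut $\overline{v_1v_2}$ made by the algorithm: reflecting across $\overline{v_1v_2}$ interchanges the two sides of $\overline{v_1v_2}$, so WLOG both $b$ and $d$ lie to the left of $\overline{v_1v_2}$; reflecting across $\overline{h_1h_2}$ interchanges top and bottom, so WLOG $x(d)\le x(b)$. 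The one feature that is \emph{not} a mirror image of Observation~\ref{obsh1h2} is the radius: it is still $r=\frac14\sqrt{L^2+4W^2}$, because the algorithm always cuts ${\cal R}$ by the vertical line; this asymmetry is exactly what makes the two sub-cases below come out with different constants.

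I would then split on whether $c$ lies on the lower half $\overline{fh_2}$ or the upper half $\overline{h_2e}$ of the right edge, mirroring the split of $b$ onto $\overline{hv_1}$ or $\overline{v_1e}$ in Observation~\ref{obsh1h2}. In the first sub-case I reflect $b$ across $\overline{h_1h_2}$ to a point $b'$ on the bottom edge and let $q$ be the intersection of $\overline{bf}$ and $\overline{b'e}$; by symmetry this $q$ lies on $\overline{h_1h_2}$, namely $q=\left(\frac{x(b)+L}{2},\frac{W}{2}\right)$, and I claim $q\in\Diamond abcd$ so that $\triangle bqd\subseteq\Diamond abcd$. A direct computation then gives $|bq|\ge\frac14\sqrt{L^2+4W^2}=r$ (using $x(b)\le\frac{L}{2}$), $|dq|\ge r$ (using $x(d)\le x(b)$), and $|bd|\ge W\ge r$, the last inequality being $12W^2\ge L^2$, which holds comfortably because $\frac{W}{L}>\frac{\sqrt3}{2}$. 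Hence the smallest side $\ell$ of $\triangle bqd$ satisfies $\ell\ge r$, and therefore $\alpha=\frac{r}{\ell/2}=\frac{2r}{\ell}\le2$.

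In the second sub-case ($c$ on $\overline{h_2e}$) the point $q$ above can fall outside $\Diamond abcd$, so I would instead copy the offset-line construction of Case~(ii) of Observation~\ref{obsh1h2}: draw the line parallel to $\overline{v_1v_2}$ at horizontal distance $\frac{L}{8}$ on its right, locate by the same similar-triangles computation an interior point $q$ on the segment joining a rectangle corner to an edge midpoint, and form the triangle on $q$ together with $c$ and $d$. The analogue of the $\frac{5W}{8}$ bound obtained there becomes $\ell\ge\frac{5L}{8}$, whence $\alpha=\frac{r}{\ell/2}=\frac{4}{5}\sqrt{1+4\left(\frac{W}{L}\right)^2}\le\frac{4}{5}\sqrt5=\frac{4}{\sqrt5}<2$, where only $W\le L$ is needed for the final estimate.

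The step I expect to be the main obstacle is the containment claim $q\in\Diamond abcd$, which is what actually fixes the boundary between the two sub-cases; one must verify that neither edge $\overline{bc}$ nor edge $\overline{cd}$ separates $q$ from the interior. The edge $\overline{bc}$ is automatically safe in the first sub-case, because $q$ sits at the midpoint of its $x$-range and hence at height $\frac{W+y(c)}{2}>\frac{W}{2}=y(q)$; the delicate part is the edge $\overline{cd}$, and controlling it is precisely where the position of $c$ on the right edge (together with $x(d)\le x(b)$) is used. A secondary point needing care is confirming in each sub-case which of the three sides of the triangle is the smallest, since that comparison is where the Case~II hypothesis $\frac{\sqrt3}{2}<\frac{W}{L}\le1$ enters (for instance, $\ell=\frac{5L}{8}$ in the second sub-case requires $\frac{3W}{4}\ge\frac{5L}{8}$, i.e. $\frac{W}{L}\ge\frac56$, which follows from $\frac{W}{L}>\frac{\sqrt3}{2}$).
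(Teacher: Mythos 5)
Your proposal is correct and, despite being framed as a transposition of Observation~\ref{obsh1h2}, reconstructs essentially the paper's own proof of Observation~\ref{obsv1v2} up to a reflection across $\overline{v_1v_2}$: your split on $c$ (lower vs.\ upper half of the right edge) mirrors the paper's split on $a\in\overline{gh_1}$ vs.\ $a\in\overline{h_1h}$, your point $q=\overline{bf}\cap\overline{b'e}$ mirrors the paper's $q=\overline{bg}\cap\overline{b'h}$, and your second construction mirrors the paper's offset-segment argument with triangle $\triangle adq$ ($|aq|\geq 5L/8$, $|qd|\geq 3W/4$), your only streamlining being to bound all three sides at once ($\geq r$, resp.\ $\geq 5L/8$) instead of the paper's per-side bullets. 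One minor slip: under your WLOG ($b,d$ to the \emph{left} of $\overline{v_1v_2}$) the offset line in the second sub-case must lie at distance $\frac{L}{8}$ to the \emph{left} of $\overline{v_1v_2}$, meeting $\overline{gv_1}$ at $q=\left(\frac{3L}{8},\frac{3W}{4}\right)$ — placed on the right as written, $q$ can fall outside $\Diamond abcd$ (take $b$ near $h$) — but all your stated length bounds correspond to the correct left-side construction, so this is a transcription error rather than a gap.
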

\begin{Proof}
Without loss of generality, assume that $b$ and $d$ lie to the right of $\overline{v_2v_1}$ and $x(d) \geq x(b)$ as shown in Figure~\ref{sm3}. 
Depending on the position of $a$ on the edge $\overline{gh}$, we study the following two cases:
\subsection*{Case (i): $a$ lies on $\overline{gh_1}$}
Refer to Figure~\ref{sm3}(a). Let $b'$ be a point on edge $\overline{fg}$ with $x(b')=x(b)$. The line segments $\overline{bg}$ and 
$\overline{b'h}$ intersect at $q$. Whatever be the position of $a$ on $\overline{gh_1}$,  the point $q$
must lie inside the quadrilateral $\Diamond abcd$. Hence, the triangle $\triangle bqd$, as shown in
Figure~\ref{sm3}(a), must also lie inside $\Diamond abcd$. Here, we choose the triangle $\triangle bqd$.
Now, since $|bh|\geq \frac{L}{2}$, we have 
$ |bq|=\frac{1}{2}|bg|\geq \frac{1}{2}|v_1g|= \frac{1}{2}\sqrt{W^2+\left(\frac{L}{2}\right)^2}$.
Surely, $|qd|\geq|b'q|=|bq|$.
Also, $|bd|\geq |{bb'}| = W$.  
Thus the smallest side $\ell$ of triangle $\triangle bqd$ will be either $\overline{bq}$ or $\overline{bd}$. Now,  
\begin{description}
\item[$\bullet$] if $|bd|\leq|bq|$, we have $\ell=|bd| \geq W$. Therefore, 
$\alpha=\frac{r}{\ell/2}\leq\frac{1}{2}\sqrt{4+(L/W)^2}$. Since 
$\frac{W}{L}>\frac{\sqrt{3}}{2}$, we have $\alpha\leq \frac{1}{2}\sqrt{4+(4/3)}<2$.

\item[$\bullet$] if $|bd|>|bq|$, we have $\ell= |bq|\geq\frac{1}{2}\sqrt{W^2+\left(L/2\right)^2}$.
 Hence $\alpha=\frac{r}{\left(\ell/2\right)}\leq\sqrt{\frac{L^2+4W^2}{W^2+\left(L/2\right)^2}}=2$.
\end{description}
\subsection*{Case (ii): $a$ lies on $\overline{h_1h}$}
Refer to Figure~\ref{sm3}(b). Consider a vertical line segment $\overline{z_1z_2}$ 
to the right of $\overline{v_2v_1}$ at a distance of $\frac{L}{8}$.
This segment $\overline{z_1z_2}$ intersect $\overline{v_1f}$ at $q$. Now,
since $a\in \overline{h_1h}$, $b\in \overline{v_1e}$ and $c\in \overline{ef}$, the point
$q$ must always lie within $\Diamond abcd$. Hence the triangle $\triangle adq$ will also lie inside
$\Diamond abcd$. We choose this triangle $\triangle adq$. Now, $\overline{h_1h_2}$ intersect $\overline{z_1z_2}$
and $\overline{v_1f}$ at the points $h_3$ and $h_4$ respectively, where $|h_4h_2|=|h_4o|=\frac{1}{2}|v_1e|=\frac{L}{4}$, where $o$
is  the mid-point of $\overline{h_1h_2}$.
From the similar triangles $\triangle z_2fq$ and $\triangle h_3h_4q$, we have
\begin{equation}
\label{eqn_sm_1}
\frac{|z_2q|}{|z_2f|}=\frac{|h_3q|}{|h_3h_4|} 
\end{equation}

Now, $|z_2q|=|h_3q|+\frac{W}{2}$, $|z_2f|=L-\left(\frac{L}{2}+\frac{L}{8}\right)=\frac{3L}{8}$
and  $|h_3h_4|=\left(\frac{L}{4}-\frac{L}{8}\right)=\frac{L}{8}$. Thus, Equation~\ref{eqn_sm_1}
gives $|h_3q|=\frac{W}{4}$. Now, in $\triangle adq$, $|aq|\geq |h_1h_3|=\frac{5L}{8}$,
$|qd|\geq |z_2q|=\frac{3W}{4}$ and $|ad|\geq \sqrt{\left(\frac{L}{2}\right)^2+ \left(\frac{W}{2}\right)^2}$.
Any one of the three sides of $\triangle aqd$ can be smallest side $\ell$. Now,  
\begin{description}
\item[$\bullet$] if $\ell=|qd| \geq \frac{3W}{4}$, we have 
$\alpha=\frac{r}{\ell/2}\leq \frac{\frac{1}{4}\sqrt{L^2+4W^2}}{\frac{3W}{8}}=\frac{2}{3}\sqrt{4+(L/W)^2}$. Since 
$\left(\frac{L}{W}\right)\leq\frac{2}{\sqrt{3}}$, we have $\alpha\leq \frac{2}{3}\sqrt{4+(4/3)}=\frac{8}{3\sqrt{3}}<2$.

\item[$\bullet$] if $\ell= |aq|\geq \frac{5L}{8}$, we have
  $\alpha=\frac{r}{\left(\ell/2\right)}\leq\frac{4}{5}\sqrt{1+4\left(W/L\right)^2}\leq \frac{4}{\sqrt{5}}<2$, since $\left(W/L\right)\leq 1$.
  
\item[$\bullet$] if $\ell=|ad|\geq \sqrt{\left(\frac{L}{2}\right)^2+ \left(\frac{W}{2}\right)^2}$, we have 
$\alpha=\frac{r}{\left(\ell/2\right)}\leq\sqrt{\frac{L^2+4W^2}{L^2+W^2}}=\sqrt{1+\frac{3}{1+\left(L/W\right)^2}}$.\\
Now, since $\left(L/W\right)\geq 1$, the approximation factor $\alpha$ is given by  $\alpha\leq\sqrt{1+\frac{3}{2}}<2$.
\end{description}
\vspace{-0.1in}
\end{Proof}

Observations \ref{obsh1h2} and \ref{obsv1v2} say that if either ``$a$ and $c$ lie in the same side of $\overline{h_1h_2}$''
or, ``$b$ and $d$ lie in the same side of $\overline{v_1v_2}$'', or both, then $\alpha\leq 2$.
Thus, it remains to analyze the 
case where both ``$a$ and $c$ lie on the opposite sides of $\overline{h_1h_2}$'', and  
``$b$ and $d$ lie on the opposite
sides of $\overline{v_1v_2}$''.
Without loss of generality,  we assume that $a\in \overline{gh_1}$ and $c\in \overline{eh_2}$. Now, depending on the positions of $b$ and
$d$, we need to consider the following two cases. 
\subsection*{Case (A): The vertices $b$ and $d$ lie at the left and right of $\overline{v_2v_1}$ respectively.}

Here, $b\in \overline{v_1h}$ and $d\in \overline{v_2f}$ (see Figure~\ref{sm4}). We need to consider the 
following two cases depending on the length of the edge $\overline{cd}$ of 
$\Diamond abcd$. 

{\bf Case A.1 ~ $|cd|\geq\sqrt{\left(L/4\right)^2+\left(W/2\right)^2}$} 

Refer to Figure~\ref{sm4}(a). 
The point $q$, determined by the intersection of $\overline{gv_1}$ and 
$\overline{h_1h_2}$, must lie inside $\Diamond abcd$. This is because of the 
fact that $b$ lies to the left of $v_1$ on $\overline{v_1h}$ and $a$ lies above 
$g$ on $\overline{gh_1}$. Here, $|oq|=(L/4)$, where $o$ is the mid point of 
$\overline{h_1h_2}$. We choose $\triangle cdq$, where 
$|dq|\geq |v_2q|= \sqrt{\left(L/4\right)^2+\left(W/2\right)^2} $ and 
$|cq|\geq |h_2q|=\left({3L}/{4}\right)$. As $|cd|\geq \sqrt{\left(L/4\right)^2+\left(W/2\right)^2}$, in $\triangle 
cdq$ we have $\ell\geq \sqrt{\left(L/4\right)^2+\left(W/2\right)^2}$, 
and hence $\alpha=\frac{r}{\left(\ell/2\right)}\leq \frac{\frac{1}{4}\sqrt{L^2+4W^2}}{\frac{1}{8}\sqrt{L^2+4W^2}}=2$.

\begin{figure}[htbp]\vspace{-0.1in}
\begin{minipage}[b]{0.5\linewidth}
\centering
\includegraphics[width=\linewidth]{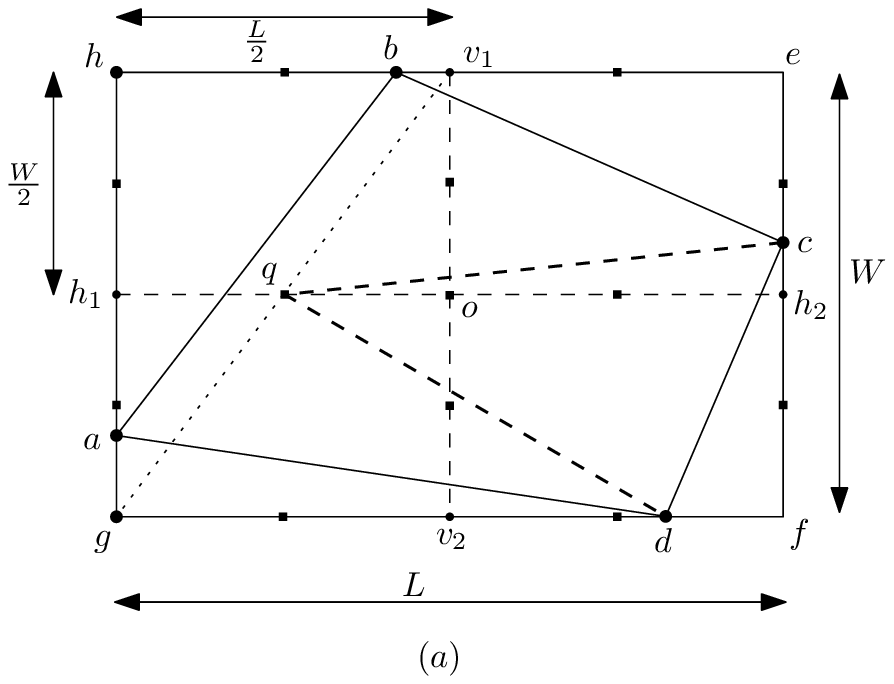}
%\caption{(a)}
%\label{sm4}
\end{minipage} %
\hspace{0.5cm}
\begin{minipage}[b]{0.5\linewidth}
\centering
\includegraphics[width=\linewidth]{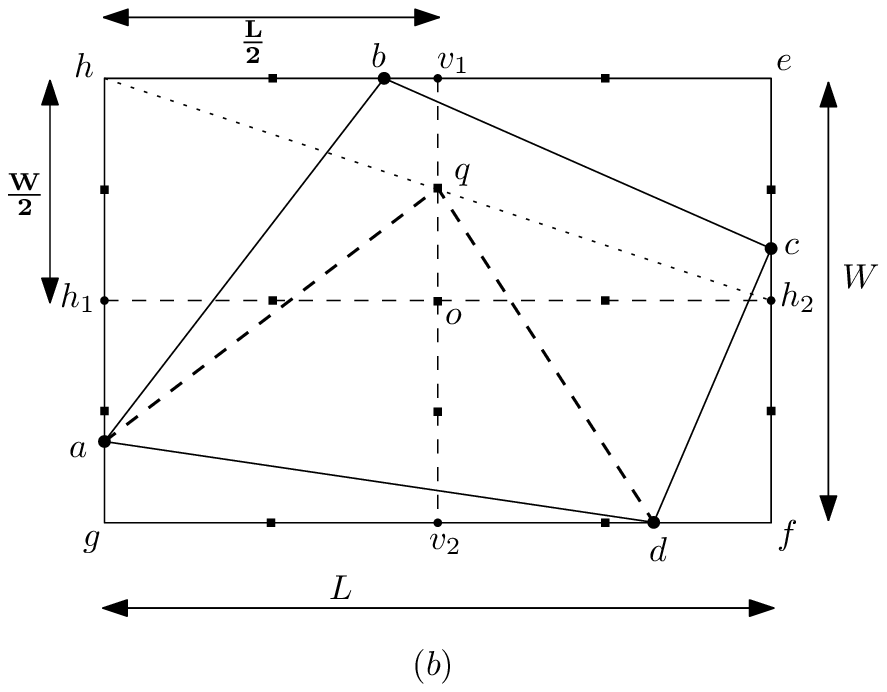}
%\caption{(b)}
%\label{sm5}
\end{minipage} %
\caption{Demonstration of Case A: (a) $|cd|\geq \sqrt{\left(L/4\right)^2+\left(W/2\right)^2}$, and
(b) $|cd|< \sqrt{\left(L/4\right)^2+\left(W/2\right)^2}$}
\label{sm4}
\vspace{-0.1in}
\end{figure}

{\bf Case A.2~  $|cd|<\sqrt{\left(L/4\right)^2+\left(W/2\right)^2}$} 

Refer to Figure~\ref{sm4}(b).  
The necessary condition for this case is that the point $d$ must lie at the right of the 
mid-point of $\overline{v_2f}$, i.e. $|df|< L/4$. Therefore, 
$|ad|\geq |gd|= \frac{L}{2}+\frac{L}{4}=\frac{3L}{4}$. 
Consider the point $q$ which is determined by the intersection of $\overline{v_1v_2}$ and $\overline{hh_2}$.
 Hence, $|oq|=\frac{W}{4}$. In this case, the point $q$
 must lie within the quadrilateral $\Diamond abcd$ because of the constraints that 
 $c\in \overline{eh_2}$ cannot lie below $h_2$, and $b$ lies on $\overline{hv_1}$.
 We choose the triangle $\triangle adq$. In this triangle, 
 $|dq|\geq |do|=\sqrt{\left(L/4\right)^2+\left(W/2\right)^2} $ and 
 $|aq|\geq |h_1q|=\sqrt{\left(L/2\right)^2+\left(W/4\right)^2}$. 
Now, $|ad|\geq \frac{3L}{4}\geq \sqrt{\left(L/4\right)^2+\left(W/2\right)^2}$ because $L\geq W$.
 Also note that, $\sqrt{\left(L/2\right)^2+\left(W/4\right)^2}\geq \sqrt{\left(L/4\right)^2+\left(W/2\right)^2}$\footnote{
 Since $ L\geq W$, we have $\frac{3L^2}{16}\geq\frac{3W^2}{16}\implies \left(L/2\right)^2+\left(W/4\right)^2 \geq \left(L/4\right)^2+\left(W/2\right)^2$ }.
 Hence, in $\triangle adq$, we have $\ell\geq \sqrt{\left(L/4\right)^2+\left(W/2\right)^2}$,
 and $\alpha=\frac{r}{\left(\ell/2\right)}\leq \frac{\frac{1}{4}\sqrt{L^2+4W^2}}{\frac{1}{8}\sqrt{L^2+4W^2}}=2$.

\vspace{-0.15in}
\subsection*{Case (B): The vertices $b$ and $d$ lie at the right and left of $\overline{v_2v_1}$ respectively.}

Here, $b\in \overline{ev_1}$ and $d\in\overline{gv_2}$ as shown in Figure~\ref{sm6a},~\ref{sm7a} and~\ref{sm8}.
This case is again divided into two sub-cases depending on the ``vertical distance'' between $a$ and $c$.

{\bf Case B.1:~  $|y(c)-y(a)|\geq \frac{W}{2}$}

Refer to Figure~\ref{sm6a}. Let $n_1$ and $n_2$ be 
the mid-points of $\overline{gh_1}$ and $\overline{eh_2}$ respectively.
 Observe that, if $a\in \overline{h_1n_1}$ then 
 $c\notin\overline{h_2n_2}$ and vice-versa, otherwise the given condition
 $|y(c)-y(a)|\geq \frac{W}{2}$ will become invalid. 
Since $|y(c)-y(a)|\geq(W/2)$, we also have 
$|{ac}|\geq |n_1n_2|=\sqrt{L^2+(W/2)^2}$. Since the longest line segment $\cal L$
inside the $\Diamond abcd$ is at least $|{ac}|$, we have the lower bound $\rho={\cal L}/4\geq |{ac}|/4$. 
Hence, the approximation factor $\alpha\leq\frac{r}{|ac|/4}\leq\sqrt{\frac{L^2+4W^2}{L^2+(W/2)^2}}=
2\sqrt{\frac{L^2+4W^2}{4L^2+W^2}}\leq 2$ (since $L\geq W $).

\begin{figure}[htbp]
\begin{minipage}[b]{0.5\linewidth}
\centering
\includegraphics[scale=0.8]{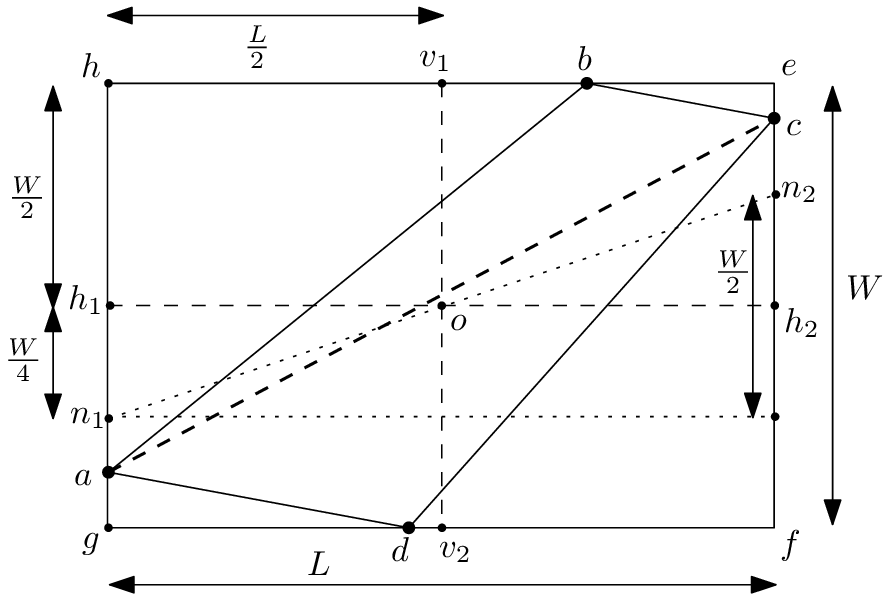}
\caption{Proof of Case $B.1$}
 \label{sm6a}
\end{minipage} %
\hspace{0.5cm}
\begin{minipage}[b]{0.5\linewidth}
\centering
\includegraphics[scale=0.8]{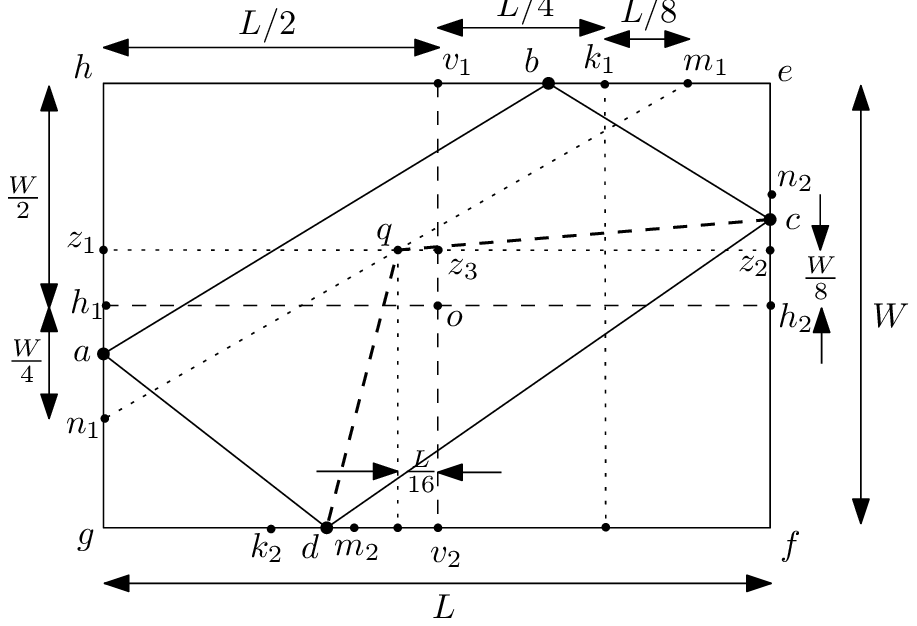}
\caption{Proof of Case $B.2.a$}
\label{sm7a}\end{minipage} %
\end{figure}

{\bf Case B.2:~ $|y(c)-y(a)|< \frac{W}{2}$}

In this case, if $c\in \overline{en_2}$ then $a\notin \overline{n_1g}$; 
similarly, if $a\in \overline{n_1g}$ then $c\notin \overline{en_1}$. Henceforth, 
without loss of generality, we assume that $a \in \overline{h_1n_1}$ (see 
Figure~\ref{sm7a} and Figure~\ref{sm8}(a \& b)). Let $k_1$ and $k_2$ be the mid-points of $\overline{v_1e}$
and $\overline{gv_2}$ respectively.
Take two points $m_1\in \overline{k_1e}$ and $m_2 \in \overline{k_2v_2}$ such 
that $|{k_1m_1}|=\frac{|k_1e|}{2}=\frac{L}{8}$ and $|{k_2m_2}|=\frac{|k_2v_2|}{2}=\frac{L}{8}$. Now depending on 
the position of $b$ on the line segment $\overline{v_1e}$, we divide this case 
into two sub-cases as follows:
 
{\bf Case {B.2.a} : ~ $b\in \overline{v_1m_1}$}

Here $|v_1b|\leq |v_1m_1|=\frac{3L}{8}$ (see Figure~\ref{sm7a}). Connect $m_1$ with $n_1$. 
Consider a horizontal line segment $\overline{z_1z_2}$ above
$\overline{h_1h_2}$ at a distance of $\frac{W}{8}$
which intersect $\overline{m_1n_1}$ and $\overline{v_1v_2}$ at the points $q$ and $z_3$ respectively.
From the similar triangles $\triangle n_1z_1q$ and $\triangle n_1hm_1$, we have 
$\frac{|n_1z_1|}{|z_1q|}=\frac{|n_1h|}{|hm_1|}$.
This gives $\frac{\left(\frac{W}{4}+\frac{W}{8}\right)}{|z_1q|}=
\frac{\left(\frac{W}{4}+\frac{W}{2}\right)}
{\left(\frac{L}{2}+\frac{3L}{8}\right)}$.
Hence, $|z_1q|=\frac{7L}{16}$. Therefore, $|qz_3|=\left(\frac{L}{2}-\frac{7L}{16}\right)=\frac{L}{16}$.
Now, whatever be the position of $a\in \overline{n_1h_1}$,
and $b\in \overline{v_1m_1}$, the point $q$, obtained  
above, must lie inside the $\Diamond abcd$.
Thus we can choose  $\triangle qcd$
whose three sides are of $|cd|\geq \sqrt{\left(\frac{L}{2}\right)^2+\left(\frac{W}{2}\right)^2}$,
$|cq|\geq |z_2q|=|z_2z_3|+|qz_3|=\left(\frac{L}{2}+\frac{L}{16}\right)
=\frac{9L}{16}$
and $|dq|\geq \left(\frac{W}{2}+\frac{W}{8}\right)=\frac{5W}{8}$.
Therefore, the smallest side $\ell$ of $\triangle qcd$ may be $\overline{cq}$, or $\overline{dq}$.
If $\ell=|{cq}|\geq\frac{9L}{16}$, then $\alpha=\frac{r}{(9L/32)}=\frac{8}{9}\sqrt{1+4(W/L)^2}\leq\frac{8}{9}\sqrt{1+4}<2$ 
(since  $\frac{\sqrt{3}}{2}<\frac{W}{L}\leq 1$).

\begin{figure}[htbp]
\begin{minipage}[b]{0.5\linewidth}
\centering
\includegraphics[scale=0.8]{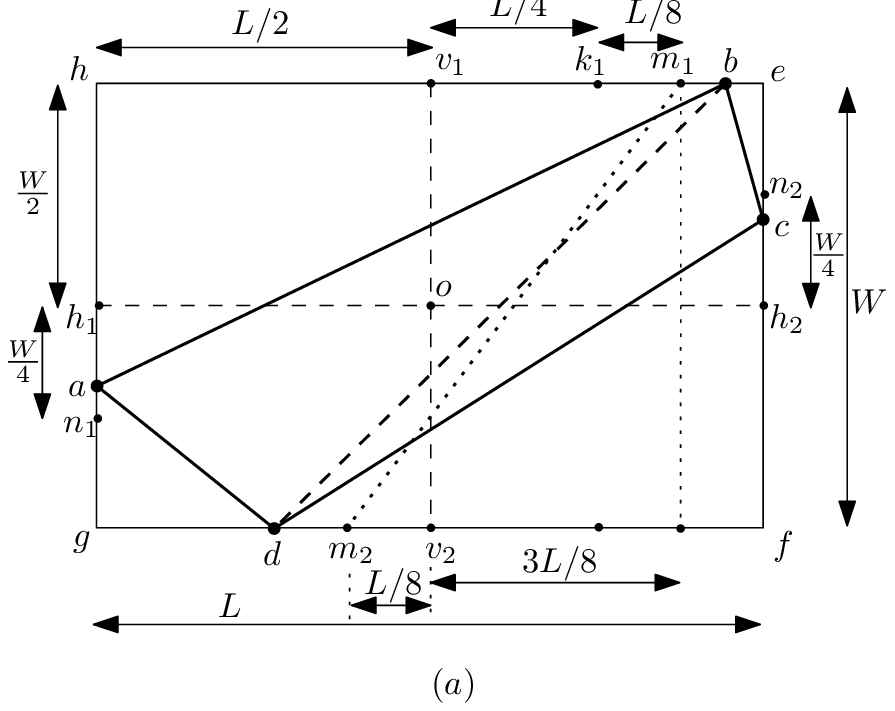}
%\caption{Proof of case $B.2.b(i)$}
%\label{sm8b}
\end{minipage} %
\hspace{0.5cm}
\begin{minipage}[b]{0.5\linewidth}
\centering
\includegraphics[scale=0.8]{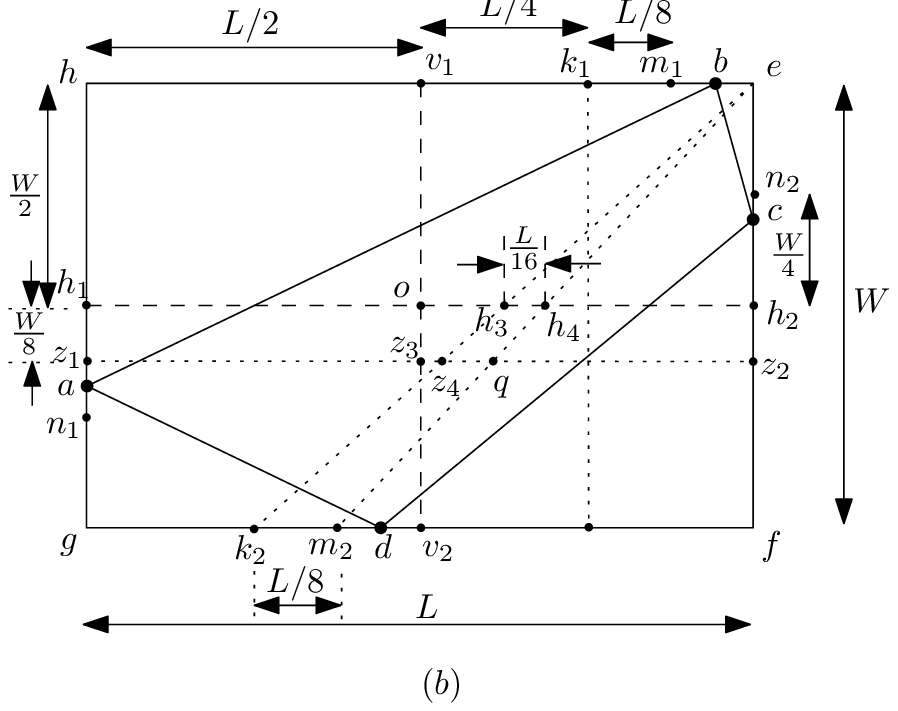}
%\caption{Proof of case $B.2.b(ii)$}
%\label{sm8a}
\end{minipage} %
\caption{(a) Proof of case $B.2.b$(i), and (b) Proof of case $B.2.b$(ii)}
\label{sm8}
\end{figure}

On the other hand, if $\ell=|{dq}|\geq\frac{5W}{8}$,  then  $\alpha=\frac{r}{5W/16}=\frac{4}{5}\sqrt{4+(L/W)^2}\leq\frac{4}{5}\sqrt{4+(4/3)}$
(since $\frac{L}{W}<\frac{2}{\sqrt{3}}$). Thus we have  $\alpha=\frac{16}{5\sqrt{3}}<2$.

{\bf Case {B.2.b}:~ $b\in \overline{m_1e}$ }

Here $|v_1b|>|v_1m_1|=\frac{3L}{8}$ (see 
Figure~\ref{sm8}). Depending on the position of $d$ on $\overline{gv_2}$, we divide this case into two sub-cases as follows:

{\bf Case {B.2.b(i)}:~ $d \in \overline{gm_2}$}

Refer to Figure~\ref{sm8}(a). In this case, the horizontal distance between $b$ and $d$ is 
given by $|x(b)-x(d)|\geq |x(m_1)-x(m_2)|=\frac{L}{2}$, because $b$ lies to the right of $m_1$ and $d$ lies
to the left of $m_2$.  
Thus $|{bd}|\geq |m_1m_2|=\sqrt{W^2+(L/2)^2}$ (see Figure~\ref{sm8}(a)). Therefore, the longest
line segment $\cal L$ inside the $\Diamond abcd$ is at least $|{bd}|$, and hence
we have lower bound $\rho={\cal L}/4\geq |{bd}|/4$. Thus, the approximation factor $\alpha$ will be given by 
$\alpha=\frac{r}{{\cal L}/4}\leq\sqrt{\frac{L^2+4W^2}{W^2+(L/2)^2}}=2$

{\bf Case {B.2.b(ii)}: ~ $d \in \overline{m_2v_2}$}

Refer to Figure~\ref{sm8}(b), where $k_1$ and $k_2$ are the mid-points of $\overline{v_1e}$ and $\overline{gv_2}$ respectively.
In this case, the ``horizontal distance'' between $b$ and $d$ ($|x(b)-x(d)|$) may be less than $\frac{L}{2}$ 
(see Figure~\ref{sm8}(b)). If this ``horizontal distance'' is greater than or equal to $\frac{L}{2}$, we
can show that $\alpha\leq 2$ following the aforesaid ``Case B.2.b(i)''. Hence we study the case, when this
``horizontal distance'' is less than $\frac{L}{2}$.
Connect $m_2$ with $e$ and $k_2$ with $e$ by dotted lines.
  Consider a horizontal line segment $\overline{z_1z_2}$ below $\overline{h_1h_2}$ at a distance of $\frac{W}{8}$. This segment 
 $\overline{z_1z_2}$ intersect $\overline{v_1v_2}$, $\overline{k_2e}$ and $\overline{m_2e}$ at
 the points $z_3$, $z_4$ and $q$ respectively (see Figure~\ref{sm8}(b)). Now $|k_2m_2|=\frac{L}{8}$.
 Here, $\overline{h_1h_2}$ bisects both $\overline{k_2e}$ and $\overline{m_2e}$ at the points $h_3$ and $h_4$ respectively.
 Therefore, $|h_3h_4|=\frac{1}{2}|k_2m_2|=\frac{L}{16}$. Hence,
 $|z_3q|>|z_4q|>|h_3h_4|=\frac{L}{16}$. The point $q$ determined by
 the aforesaid way must lie always inside the quadrilateral
 $\Diamond abcd$ because $d\in \overline{m_2v_2}$ cannot lie to the left of $m_2$
 and $c(\in \overline{eh_2})$ cannot lie above $e$.
 Therefore, we can always choose the triangle $\triangle abq$ whose three side 
 are of length $|ab|\geq\sqrt{(W/2)^2+(7L/8)^2}$, $|aq|\geq |z_1q|=\frac{L}{2}+\frac{L}{16}=\frac{9L}{16}$
 and $|bq|\geq \frac{W}{2}+\frac{W}{8}=\frac{5W}{8}$.
 Hence, the smallest side of the $\triangle abq$ will be
 either $\overline{aq}(\geq \frac{9L}{16})$ or $\overline{bq} (\geq \frac{5W}{8})$. As in 
 {Case~$B.2.a$}, we can show here that the approximation factor  $\alpha \leq 2$.
 %\end{itemize}

 Observation~\ref{approx_1} and the above case analysis suggests the following result. The exhaustiveness of the case analysis is justified with the flowchart in Figure~\ref{flowchart_stream}. Here at each branch point (shown by $\bigcirc$), the branches considered are exhaustive. 
 
\begin{lemma}\label{lll1}
If the subpolygon $\Pi$ is a quadrilateral $\Diamond{abcd}$, then  $\alpha$ is upper bounded by 2. 
\end{lemma}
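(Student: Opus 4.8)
The plan is to prove $\alpha \le 2$ for every quadrilateral $\Pi = \Diamond abcd$ by a case analysis driven first by the aspect ratio $W/L$ and then by the positions of the four vertices relative to the two midlines $\overline{h_1h_2}$ and $\overline{v_1v_2}$ of the bounding rectangle $\mathcal{R}$. The two available lower bounds on $r_{opt}$ — the longest-segment bound $\rho \ge |\mathcal{L}|/4$ and the smallest-triangle-side bound $\rho \ge \ell/2$ — will be deployed selectively depending on which is easier to estimate in closed form. My first split is on $W/L$. In the ``fat'' regime $0 < W/L \le \sqrt{3}/2$, one diagonal of $\Diamond abcd$ necessarily spans the full length $L$, so the segment bound gives $\rho \ge L/4$ and a one-line computation yields $\alpha \le \sqrt{1+4(W/L)^2} \le 2$; this disposes of Case I immediately.

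The substantive regime is $\sqrt{3}/2 < W/L \le 1$, where no diagonal is guaranteed long, so I would pivot to the triangle bound. The key preliminary reductions are Observations~\ref{obsh1h2} and~\ref{obsv1v2}: if $a$ and $c$ fall on the same side of the horizontal midline $\overline{h_1h_2}$, or if $b$ and $d$ fall on the same side of the vertical midline $\overline{v_1v_2}$, then $\alpha \le 2$ already. Granting these, I may restrict attention to the generic configuration in which $a,c$ straddle $\overline{h_1h_2}$ and $b,d$ straddle $\overline{v_1v_2}$, and without loss of generality take $a \in \overline{gh_1}$ and $c \in \overline{eh_2}$.

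For this generic configuration the plan is to branch on the \emph{sense} in which $b,d$ cross the vertical midline (Case A: $b$ on the left, $d$ on the right; Case B: $b$ on the right, $d$ on the left), and then to refine each branch by an auxiliary quantity that controls the geometry — the edge length $|cd|$ in Case A, and the vertical gap $|y(c)-y(a)|$ together with the position of $b$ along $\overline{v_1e}$ in Case B. In every resulting leaf I would exhibit an explicit triangle inscribed in $\Diamond abcd$, constructed by intersecting a well-chosen horizontal or vertical line segment with the polygon boundary, bound its smallest side $\ell$ from below in closed form in terms of $L$ and $W$, and then verify $\alpha = r/(\ell/2) \le 2$ by a direct estimate that invokes $\sqrt{3}/2 < W/L \le 1$.

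The main obstacle, I expect, is not the per-leaf arithmetic but two structural points. First, in each leaf one must argue that the constructed intersection point $q$ genuinely lies inside $\Diamond abcd$ for \emph{every} admissible placement of the remaining vertices; this containment is exactly where the side-constraints ($a \in \overline{gh_1}$, the fact that $c$ cannot dip below $h_2$, $d$ cannot cross its midpoint, and so on) are consumed, and it is easy to assert a triangle is inscribed without checking all boundary cases. Second, one must certify that the binary splits assemble into a single tree — the flowchart of Figure~\ref{flowchart_stream} — whose branches at each node are mutually exclusive and jointly exhaustive, so that no placement of $\Diamond abcd$ escapes the analysis. Once exhaustiveness and containment are secured, the lemma follows simply by observing that $\alpha \le 2$ holds at every leaf.
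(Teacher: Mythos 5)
Your plan reproduces the paper's proof essentially step for step: the same split at $W/L=\sqrt{3}/2$ with the long-diagonal/segment bound $\rho\geq L/4$ in the fat regime, the same reduction via Observations~\ref{obsh1h2} and~\ref{obsv1v2} to the configuration where $a,c$ straddle $\overline{h_1h_2}$ and $b,d$ straddle $\overline{v_1v_2}$ (with the same normalization $a\in\overline{gh_1}$, $c\in\overline{eh_2}$), the same Case~A/B branching refined by exactly the quantities the paper uses ($|cd|$ in Case~A; $|y(c)-y(a)|$ and the position of $b$ along $\overline{v_1e}$ in Case~B), the same leaf-level strategy of inscribing a triangle via an auxiliary horizontal or vertical segment and bounding its smallest side, and the same flowchart-style exhaustiveness check. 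The only difference is that your leaf constructions and estimates are stated as intentions rather than executed, but since the case tree, the two lower bounds, and the containment obligations you flag are precisely those of the paper's argument, the approach is the same and sound.
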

\begin{figure}
\centering
\includegraphics[scale=0.65]{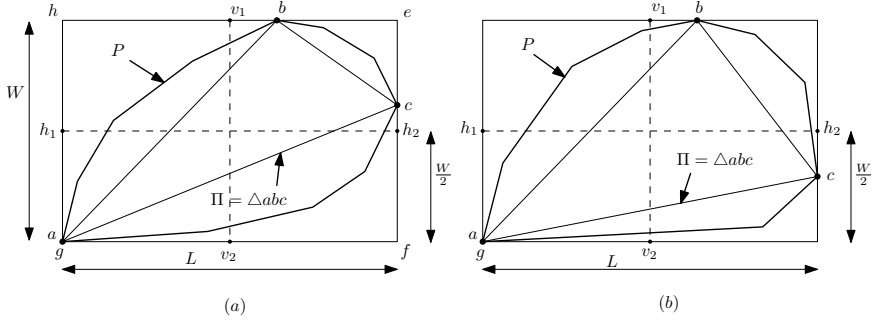}
\caption{In $\triangle abc$ (a) the vertex $c\in \overline{ef}$ lies above $\overline{h_1h_2}$,
and  (b) the vertex $c\in \overline{ef}$ lies below $\overline{h_1h_2}$}
\label{sm_triangle_1}
\end{figure}

\vspace{-0.1in}
\subsubsection{$\Pi$ is a triangle $\triangle abc$}

If only one vertex of $P$ coincides with a vertex of its covering rectangle ${\cal R}=\Box efgh$,
the subpolygon $\Pi$ ($\Pi\subseteq P$) will be a triangle, say $\triangle abc$ (see Figure~\ref{sm_triangle_1}).
Without loss of generality, let us name that vertex of $P$ as ``$a$''
which coincides with ``$g$'' of ${\cal R}=\Box efgh$. Note that $b\in \overline{he}$ and
$c\in \overline{ef}$ (see Figure~\ref{sm_triangle_1}). Here we consider two possibilities depending on
the position of the vertex $c$.
\vspace{-0.1in}
\subsubsection*{(i) $c$ lies above $\overline{h_1h_2}$}
\vspace{-0.1in}
This is shown in Figure~\ref{sm_triangle_1}(a). Here, $|ac|\geq \sqrt{L^2+(W/2)^2}$. Now, the longest line segment
$\cal L$ inside this triangle $\triangle abc$ will be at least of length $|ac|$. Hence, the approximation factor $\alpha$
is given by $\alpha\leq\frac{r}{{\cal L}/4}=\sqrt{\frac{L^2+4W^2}{L^2+(W/2)^2}}\leq 2$, since $W\leq L$.
\vspace{-0.1in}
\subsubsection*{(ii) $c$ lies below $\overline{h_1h_2}$}
\vspace{-0.1in}
This is shown in Figure~\ref{sm_triangle_1}(b). Here,
both ``$a$'' and ``$c$'' lies below $\overline{h_1h_2}$. Hence, by Observation~\ref{obsh1h2} the approximation factor $\alpha$ is $2$.

\begin{lemma}\label{lll2}
If the subpolygon $\Pi$ {\em($\Pi\subseteq P$)} is a triangle 
$\Delta{abc}$, then $\alpha$ is upper bounded by~2. 
\end{lemma}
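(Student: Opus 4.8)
The plan is to reduce the triangle case to a single dichotomy on the position of the vertex $c$ relative to the horizontal midline $\overline{h_1h_2}$, exploiting the fact that $a$ is pinned to the corner $g$ of $\mathcal{R}$ and hence always lies (weakly) below $\overline{h_1h_2}$. First I would fix a convenient coordinate frame in which $g$ is the bottom endpoint of the short side $\overline{gh}$, so that $a=g$, while $b\in\overline{he}$ lies on the top long side and $c\in\overline{ef}$ lies on the opposite short side; the horizontal extent of $\mathcal{R}$ is then $L$ and the vertical extent is $W$ with $W\leq L$. Since a triangle has only three vertices, I would regard $\triangle abc$ as the degenerate quadrilateral $\Diamond abcd$ in which $d$ has merged into the corner $g=a$; this is what lets the earlier observations carry over unchanged.

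In the first case, when $c$ lies above $\overline{h_1h_2}$, I would invoke the longest-segment lower bound (Observation~\ref{lower_bound_1}) applied to the diagonal $\overline{ac}$. Because $a$ and $c$ sit on opposite short sides, $\overline{ac}$ spans the whole length $L$ horizontally and, as $c$ is above mid-height, at least $W/2$ vertically; hence $|ac|\geq\sqrt{L^2+(W/2)^2}$ and $\rho\geq|\mathcal{L}|/4\geq|ac|/4$. Substituting the algorithm's radius $r=\tfrac{1}{4}\sqrt{L^2+4W^2}$ gives $\alpha\leq\sqrt{(L^2+4W^2)/(L^2+(W/2)^2)}$, and a one-line manipulation shows this is at most $2$ precisely because $W\leq L$ (equivalently $3W^2\leq 3L^2$).

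In the second case, when $c$ lies on or below $\overline{h_1h_2}$, both $a=g$ and $c$ are on the same side of the midline, so the bound follows immediately from Observation~\ref{obsh1h2}. The two cases are exhaustive and each yields $\alpha\leq 2$, which is the claim. The hard part will not be the arithmetic but justifying that Observation~\ref{obsh1h2} really transfers to the degenerate configuration: its proof builds an auxiliary vertex $q$ and an inscribed triangle inside the quadrilateral $\Diamond abcd$, so I would need to check that the same $q$ still falls inside the smaller region $\triangle abc$ (with $d=g$) and that the sub-case split according to whether $b\in\overline{hv_1}$ or $b\in\overline{v_1e}$ remains valid. Once that containment is confirmed the estimates are identical to the quadrilateral analysis, and care need only be taken with the boundary configuration $y(c)=W/2$, which may be assigned to either case without affecting the bound.
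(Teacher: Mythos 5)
Your proposal is correct and takes essentially the same route as the paper: the same dichotomy on whether $c$ lies above or below $\overline{h_1h_2}$, the same longest-segment bound $|ac|\geq\sqrt{L^2+(W/2)^2}$ giving $\alpha\leq 2$ for $W\leq L$ in the first case, and the same appeal to Observation~\ref{obsh1h2} in the second. The degenerate-containment check you flag does go through (the auxiliary point $q$ constructed in Observation~\ref{obsh1h2} still lies inside $\triangle abc$ when $d$ merges into $a=g$), and the paper in fact omits this verification, invoking Observation~\ref{obsh1h2} without comment.
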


\vspace{-0.1in}
\subsubsection{$\Pi$ is a diagonal of ${\cal R}=\Box efgh$}

If two vertices of a given convex polygon $P$ coincide with two non-adjacent vertices (say $e$ and $g$ as shown
in Figure~\ref{straight_line}) of ${\cal R}=\Box efgh$, we get its subpolygon  $\Pi$ as a diagonal $\overline{eg}$ of $\cal R$.
The longest line segment ${\cal L}=|eg|=\sqrt{L^2+W^2}$. Therefore the
approximation factor $\alpha$ is given by
$\alpha\leq\frac{r}{{\cal L}/4}=\sqrt{\frac{L^2+4W^2}{L^2+W^2}}\leq\sqrt{1+\frac{3}{1+(L/W)^2}}\leq 2$, since $(W\leq L)$.

\begin{figure}
\centering
\includegraphics[scale=0.65]{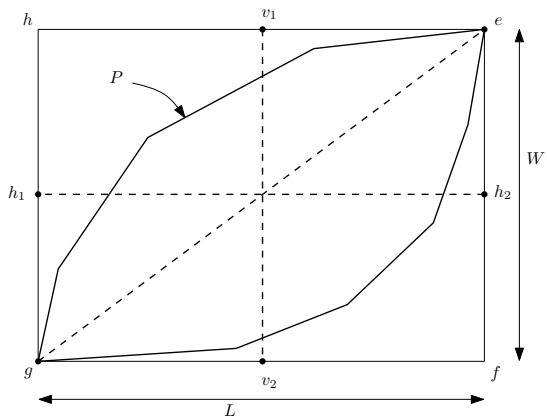}
\caption{The diagonal $\overline{eg}$ of $P$ is exactly covered by the rectangle ${\cal R}=\Box efgh$}
\label{straight_line}
\end{figure}

\begin{lemma}\label{lll3}
If the subpolygon $\Pi$ {\em ($\Pi\subseteq P$)} is a diagonal of
the covering rectangle $\cal R$, 
then $\alpha$ is upper bounded by~2. 
\end{lemma}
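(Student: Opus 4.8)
The plan is to invoke the longest-line-segment lower bound (Observation~\ref{lower_bound_1}) on the degenerate subpolygon $\Pi=\overline{eg}$, exactly as in the quadrilateral cases that relied on a long diagonal. Because $\Pi$ is itself a line segment, identifying its longest interior segment is immediate: it is the whole diagonal $\overline{eg}$ of the covering rectangle ${\cal R}=\Box efgh$, whose endpoints are two opposite corners of a rectangle of dimensions $L\times W$. Hence the relevant longest segment has length $|{\cal L}|=|eg|=\sqrt{L^2+W^2}$.

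First I would fix the lower bound. By Observation~\ref{lower_bound_1}, any pair of congruent disks covering $P$ must cover $\cal L$; since the chord a radius-$r$ disk cuts on a line has length at most $2r$, two such disks cover a segment of length at most $4r$, forcing $\rho=\tfrac14\sqrt{L^2+W^2}$. Next I would recall the radius produced by the algorithm, $r=\tfrac14\sqrt{L^2+4W^2}$ from Equation~\ref{eq1}, and form the ratio
\begin{equation*}
\alpha\leq\frac{r}{\rho}=\sqrt{\frac{L^2+4W^2}{L^2+W^2}}.
\end{equation*}

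The only remaining work is a one-line algebraic estimate, which is also the single place any care is needed. Rewriting the radicand as $1+\dfrac{3W^2}{L^2+W^2}=1+\dfrac{3}{(L/W)^2+1}$ and using $W\leq L$ (so $(L/W)^2\geq 1$ and the denominator is at least $2$) yields $\alpha\leq\sqrt{1+\tfrac{3}{2}}=\sqrt{5/2}<2$, which establishes the claimed bound. I do not anticipate any genuine obstacle here: unlike the quadrilateral analysis there is no branching on vertex positions, because the subpolygon is forced to be the single fixed diagonal; the entire content reduces to the monotonicity of the ratio in $L/W$, and the bound $2$ is in fact loose.
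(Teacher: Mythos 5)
Your proposal is correct and follows essentially the same route as the paper: both take the degenerate subpolygon to be the diagonal $\overline{eg}$ with $|{\cal L}|=\sqrt{L^2+W^2}$, apply the segment lower bound $\rho\geq|{\cal L}|/4$ from Observation~\ref{lower_bound_1}, and bound $\alpha\leq\sqrt{\frac{L^2+4W^2}{L^2+W^2}}=\sqrt{1+\frac{3}{1+(L/W)^2}}\leq 2$ using $W\leq L$. Your explicit justification of the $4r$ covering bound and the sharper constant $\sqrt{5/2}$ are minor refinements of the paper's identical argument, not a different approach.
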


\tikzstyle{startstop} = [rectangle, rounded corners, minimum width=2cm, minimum height=1cm,text centered, draw=black, fill=red!30]

\tikzstyle{io} = [trapezium, trapezium left angle=70, trapezium right angle=110, minimum width=0.1cm, minimum height=1cm, text centered, draw=black, fill=blue!30]

\tikzstyle{process} = [rectangle, minimum width=2.5cm, minimum height=1cm, text centered, draw=black, fill=yellow!30]

\tikzstyle{connection} = [circle, text centered, draw=black, fill=yellow!30]

\tikzstyle{decision} = [diamond, minimum width=2cm, minimum height=1cm, text centered, draw=black, fill=green!30]

\tikzstyle{arrow} = [thick,->,>=stealth]
\begin{figure}
\centering
\resizebox{1.15\textwidth}{!}{%
\begin{tikzpicture}[node distance=2cm]

%\node (start) [startstop] {Start};

\node (pro0) [process] {Convex Polygon $P$};
\node (start) [startstop,xshift=-1cm, left of=pro0] {Start};
\node (pro1) [process, right of=pro0, xshift=3cm, text width= 5cm] {Compute $a,b,c~\& ~d$ with min-x, max-y, max-x \text{and} \\ min-y co-ordinates respectively};

\node (pro2) [process, right of=pro1, xshift=4cm, text width= 4cm] {$\Diamond abcd$ is covered by axis-parallel rectangle $\Box efgh$
of length $L$ and width $W$};

\node (pro3) [process, below of=pro1, xshift=0cm, text width= 6cm] {$h_1$, $v_1$, $h_2$ and $v_2$ are the
mid-points of  $\overline{gh}$, $\overline{he}$, $\overline{ef}$ \& $\overline{fg}$};

\node (pro4) [process, left of=pro3, xshift=-5cm, text width=6cm]{The ratio $(W/L)\leq (\sqrt{3}/2)$ and \\ 
{\bf $ \alpha\leq 2$} \textcolor{blue}{({\bf Case I})}};

\node (pro5) [process, below of=pro4,yshift=-2cm, xshift=-2cm, text width=5cm]{$a$ and $c$ lie at the same
side of $\overline{h_1h_2}$ and \\ {\bf $\alpha\leq 2$} \textcolor{blue}{({\bf Observation 5})}};

\node (pro6) [process, right of=pro5, xshift=4cm, text width=5cm]{$b$ and $d$ lie at the same side of
$\overline{v_1v_2}$  and \\ {\bf $\alpha\leq 2$}
\textcolor{blue}{({\bf Observation 6})}};

\node (pro7) [process, right of=pro6, xshift=5cm, text width=8cm]{$a$ \& $c$ lie on 
the opposite side of $\overline{h_1h_2}$, and $b$ \& $d$ lie on the opposite side of $\overline{v_1v_2}$\\
{\bf Assumption:} 
$a$ lies on $\overline{gh_1}$ whereas $c$ lies on $\overline{eh_2}$};

\node (con6) [connection, radius=0.1cm, below of=pro7,xshift=0cm,yshift=-0.01em]{};

%\node (pro8) [process, below of=pro7,yshift=-0.6cm, text width=4cm]{Assumption: 
%$a$ lies on $\overline{gh_1}$ whereas $c$ lies on $\overline{eh_2}$};

\node (pro9) [process, left of=con6,xshift=-4cm,yshift=-2cm, text width=4cm]{$b$ 
lies on $\overline{hv_1}$ and $d$ lies on $\overline{v_2f}$\\ \textcolor{blue}{\bf Case II.A}};

\node (con2) [connection, radius=0.1cm, left of=pro9,xshift=-3.5cm,yshift=-2em]{};

\node (pro10) [process, left of=pro9, xshift=-6cm, yshift=-2.5cm, text width=4cm]{$|{cd}|
\geq \sqrt{\left(\frac{L}{4}\right)^2+\left(\frac{W}{2}\right)^2}$ \\
{\bf $\alpha\leq 2$} \textcolor{blue}{(\bf Case A.1)}};

\node (pro11) [process, right of=pro10,xshift=3cm, text width=4cm]{$|{cd}|< \sqrt{\left(\frac{L}{4}\right)^2+\left(\frac{W}{2}\right)^2}$ \\
{\bf $\alpha\leq 2$} \textcolor{blue}{(\bf Case A.2})};

\node (pro12) [process, right of=pro9, xshift=4cm, text width=4cm]{$b$ lies on $\overline{v_1e}$ 
and $d$ lies on $\overline{gv_2}$\\
\textcolor{blue}{(\bf Case II.B})};

\node (pro14) [process, below of=pro12,xshift=-4cm,yshift=-2cm, text width=4cm]{$|y(a)-y(c)|<(W/2)$\\
\textcolor{blue}{(\bf Case B.2)}};

\node (pro13) [process, right of=pro14, xshift=3cm, text width=4cm]{$|y(a)-y(c)|\geq(W/2)$\\ 
{\bf $\alpha\leq 2$} \textcolor{blue}{(\bf Case B.1)}};

\node (con1) [connection, radius=0.1cm, below of=pro3,yshift=-0.5cm]{};

\node (con3) [connection, radius=0.1cm, below of=pro12,xshift=0cm,yshift=0em]{};

\node (con4) [connection, radius=0.1cm, below of=pro14,xshift=0cm,yshift=-0.001em]{};

\node (pro15) [process, below of=con4,xshift=-2.5cm,yshift=-0.001cm,text width=4cm]{$|v_1b|\leq(3L/8)$\\
{\bf $\alpha\leq 2$}
\textcolor{blue}{(\bf Case B.2.a)}};

\node (pro16) [process, below of=con4,xshift=2.5cm,yshift=-0.001cm, text width=4cm]{$|v_1b|>(3L/8)$\\

\textcolor{blue}{(\bf Case B.2.b)}};

\node (con5) [connection, radius=0.1cm, below of=pro16,xshift=0cm,yshift=-0.01em]{};

\node (pro17) [process, below of=con5,xshift=-3cm,yshift=-0.1em, text width=4.3cm]{$|x(b)-x(d)|\geq(L/2)$\\
{\bf $\alpha\leq 2$}
\textcolor{blue}{(\bf Case B.2.b(i))}};

\node (pro18) [process, below of=con5,xshift=3cm,yshift=-0.1em, text width=4.3cm]{$|x(b)-x(d)|<(L/2)$\\
{\bf $\alpha\leq 2$}
\textcolor{blue}{(\bf Case B.2.b(ii))}};

%\node (stop) [startstop, below of=pro3,xshift=-3cm,yshift=-22em] {Stop};

\draw [arrow] (start) -- (pro0);
\draw [arrow] (pro0) -- (pro1);
\draw [arrow] (pro1) -- (pro2);
\draw [arrow] (pro2) |- (pro3);
\draw [arrow] (pro3) -- (pro4);
\draw [arrow] (pro3) -- node[xshift=1.7em]{\textcolor{blue}{{\bf Case II: $(W/L)>(\sqrt{3}/2)$}}}(con1);
\draw [arrow] (pro7) -- (con6);
\draw [arrow] (con6) -| (pro9);
\draw [arrow] (con6) -- (pro12);
\draw [arrow] (pro9) -| (con2);
\draw [arrow] (con2) -| (pro10);
\draw [arrow] (con2) -| (pro11);
\draw [arrow] (con1) -| (pro5);
\draw [arrow] (con1) -| (pro6);
\draw [arrow] (con1) -| (pro7);
%\draw [arrow] (pro2) |- node[xshift=3em,yshift=1em] {Yes} (con1);
\draw [arrow] (pro12) -- (con3);
\draw [arrow]  (con3) -| (pro13);
\draw [arrow]  (con3) -| (pro14);
\draw [arrow] (pro14) -- (con4);
\draw [arrow] (con4) -| (pro15);
\draw [arrow] (con4) -| (pro16);
\draw [arrow] (pro16) -- (con5);
\draw [arrow] (con5) -| (pro17);
\draw [arrow] (con5) -| (pro18);

\end{tikzpicture}
}
\caption{Flowchart of case study in Streaming Model}
\label{flowchart_stream}
\end{figure}
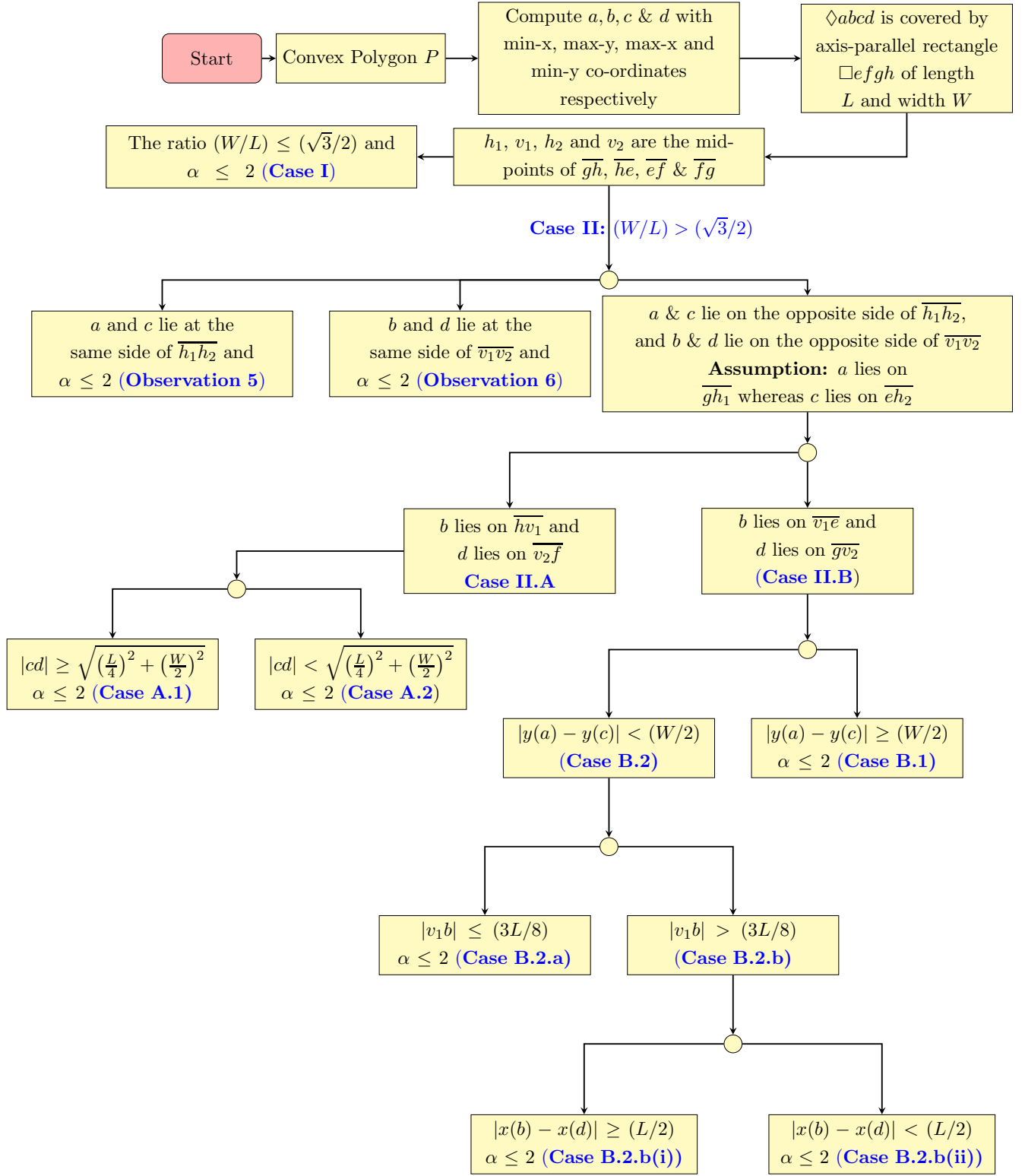

\vspace{-0.1in}
\section{Two-center problem for convex polygon under non-streaming model}
\label{non_str}
In this section, we show that if the computational model is relaxed to non-streaming,
then a simple linear time algorithm can produce a solution with improved approximation
factor of $1.86$. We assume that the vertices of the input polygon $P$
is stored in an array in order. The algorithm and the analysis of approximation factor
are discussed in the following subsections.

\subsection{Proposed algorithm}

We compute the diameter of P. Next, we rotate the coordinate axis around its origin such that the diameter
of the given polygon $P$ becomes parallel to the $x$-axis. 
We use $D$ to denote the length of $D$. 
Let $\cal R$ be an axis-parallel rectangle of length $D$ 
and width $W$ that exactly covers $P$. 
Split $\cal R$ into two equal parts ${\cal R}_1$ and ${\cal R}_2$ by a 
vertical line. Finally, compute two  congruent disks $C_1$ and 
$C_2$ of minimum radii, say $r$, circumscribing ${\cal R}_1$ and ${\cal R}_2$ 
respectively. We report the radius $r$, and the centers of $C_1$ and $C_2$, 
as the output of the algorithm. The time complexity of our 
algorithm is determined by the time complexity of 
computing $D$. Note that, the diameter of the polygon $P$ 
corresponds to a pair of antipodal vertices of $P$ which 
are farthest apart \cite{shamos}. The farthest antipodal 
pair of vertices can be computed by scanning the vertices 
twice in order, and hence it needs $O(n)$ time.

As in the earlier subsection, the approximation factor of 
our algorithm is $\alpha=\frac{r}{r_{opt}}\leq \frac{r}{\rho}$, where $r$ is 
the radius reported by our algorithm and $\rho$ is the lower bound of $r_{opt}$.

\subsection{Analysis of the approximation factor}  
\vspace{-0.1in}

  In this case also, the given polygon $P$ is exactly covered by an axis-parallel 
 rectangle $\mathcal{R}$ having length $D$ (the diameter of the polygon $P$) and 
 width $W$ ($0<W\leq D$), and $r$, the radius of the two enclosing 
congruent disks $C_1$ and $C_2$ computed by our algorithm, is obtained by 
Equation~\ref{eq1}, except that $L$ should be replaced by $D$ (as shown in 
Figure~\ref{rectangle}). Hence $r$ will be given by

 $$r=\sqrt{\left(\frac{D}{4}\right)^2+\left(\frac{W}{2}\right)^2} 
 =\frac{1}{4}\sqrt{D^2+4W^2}.$$

 Without loss of generality, we assume that $D=1$ and $0\leq W \leq  1$. 
Thus, $r=\frac{1}{4}\sqrt{1+4W^2}$.

Now, the 
approximation factor for the polygon $P$ is given by $\alpha
\leq\frac{r}{\rho}$, where $\rho$ is the lower bound of $r_{opt}$.
The lower bound $\rho$ for the problem in this model will also be the 
same as that of used in ``streaming data model'' (discussed in Section~\ref{lower_bound}).
 We may have so many different polygons of 
diameter $D=1$ inside the rectangle $\cal R$, and $\alpha$ may also vary 
depending on the value of $\rho$ for the corresponding polygons. Thus, 
in order to have a better estimate of the upper bound for the 
approximation factor $\alpha$, at first we fix $r$ (or in other 
words both $W$ and $D$ of the rectangle $\cal R$) like in streaming setup.
From the Observation~\ref{approx_1}, we know
that the approximation factor $\alpha$ for two center problem
of a convex polygon $P$ is less than (or equal to) that of its subpolygon $\Pi$
where both $P$ and $\Pi$ are ``exactly covered'' by the rectangle $\cal R$.
Here, the minimal\footnote{The subpolygon $\Pi$ of $P$ is said to be minimal
if no other subpolygon of $\Pi$ is exactly covered by the  
$\cal R$.} subpolygon $\Pi$ will be a quadrilateral which in the degenerate case
may be a triangle.
Now, to have an worst case of $\alpha$, we consider that quadrilateral inside $\cal R$ 
for which $\rho$ (the lower bound of $r_{opt}$) is minimum among  all possible quadrilaterals 
inside ${\cal R}$.

\begin{figure}[htbp]\vspace{-0.1in}
\begin{minipage}[b]{0.3\linewidth}
\centering
 \includegraphics[height=1.1in]{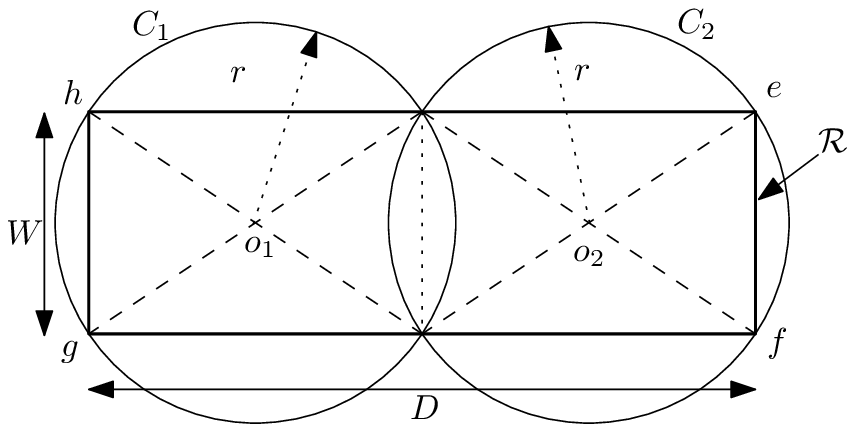}
    \caption{Rectangle $\cal R$ of size $D \times W$  
    covered by two disks $C_1$ and $C_2$}
    \label{rectangle}
    \end{minipage} %
\hspace{0.5cm}
\begin{minipage}[b]{0.6\linewidth}
\centering
	\includegraphics[width=0.7\linewidth]{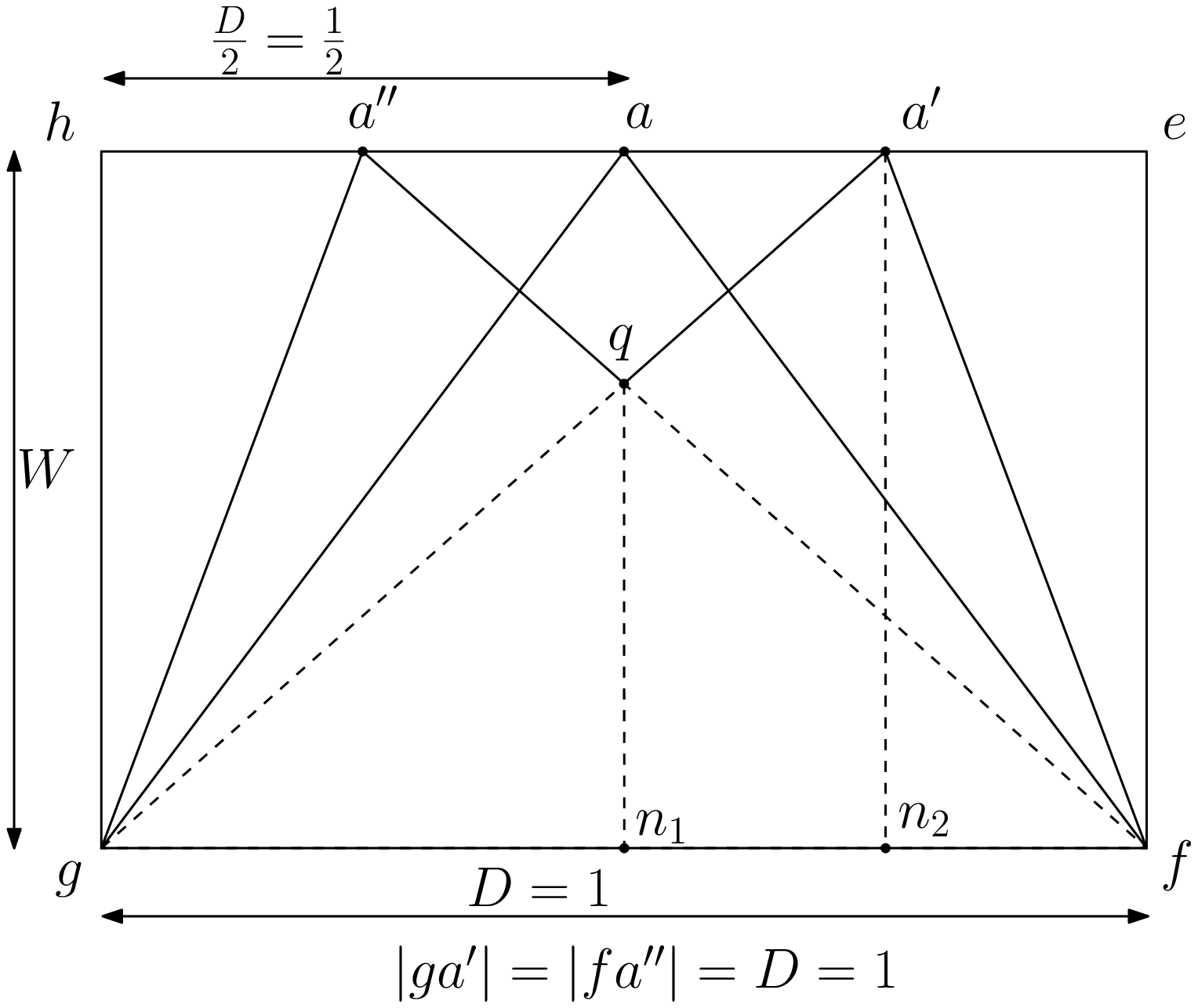}
     \caption{$\triangle gaf$ 
    of diameter $D=1$ covered by  $\Box  efgh$ of length $D=1$ and width $W$.}
    \label{isosceles_triangle}
\end{minipage} %
\end{figure}

In the following subsections, we consider, separately, triangle and quadrilateral as the subpolygon $\Pi$ 
whose approximation factor will give the upper bound for the radius of the two-center
problem of any convex polygon (as discussed in streaming setup). 
Throughout the paper, we always take the diameter $D=1$ and hence, 
the width $W$ of the covering rectangle satisfies $0\leq W \leq 1$. 

    \vspace{-0.1in} 
  \subsubsection{$\Pi$ is a triangle $\triangle gaf$}\vspace{-0.1in}

   Refer to Figure~\ref{isosceles_triangle}. For a convex polygon $P$ which is exactly covered by an isothetic 
   rectangle ${\cal R}=\Box efgh$, we get its subpolygon $\Pi$ as a triangle
   $\triangle gaf$ ($\triangle gaf \subseteq P$) when the diameter of
   the polygon $P$ aligns with an edge, say $\overline{fg}$, of the rectangle 
   ${\cal R}$. In this case, the width $W$ of the covering rectangle ${\cal R} = \Box efgh$ 
   can be at most $\frac{\sqrt{3}}{2}$ (since otherwise $|ga|$ or $|fa|$
   of $\triangle gaf$ will become greater than $1$). We take two points $a'$
   and $a''$ on the edge $\overline{he}$ of $\Box efgh$ so that $|ga'|=|fa''|=|gf|=1$. 
   Note that, the feasible region for $a$ on $\overline{he}$
   is given by $x(a'')\leq x(a)\leq x(a')$. Let $q$ be the point determined by the 
   intersection of $\overline{ga'}$ and $\overline{fa''}$. The isosceles triangle
   $\triangle gqf$ always lies inside $\triangle gaf$
   (see Figure~\ref{isosceles_triangle}) for any position of $a$ on its feasible region. 
   %We consider the isosceles triangle $\triangle gqf$ such that  $\ell \geq |gq|$.
   For an extreme position $a'$ of $a$, the triangle $\triangle ga'f$ has the two of its sides as: $|ga'|=|gf|=1$. Now $|a'n_2|=W$, where $n_2$ is the projection of 
      $a'$ on the edge $\overline{fg}$. Hence, $|gn_2|=\sqrt{1-W^2}$. Now, take a perpendicular $\overline{qn_1}$ from $q$ on the edge $\overline{fg}$.
   From the similar triangles $\triangle gqn_1$ and $ga'n_2$, we have $\frac{|gq|}{|ga'|}
   =\frac{|gn_1|}{|gn_2|}$. Hence, 
   $|gq|=\frac{1}{2\sqrt{1-W^2}}$.       
    Therefore, inside
    $\triangle ga'f$, we have a triangle $\triangle gqf$
    with its smallest side $\overline{gq}$ and hence, for $\triangle ga'f$, 
    $\ell\geq~|gq|$. Hence, 
   $\alpha\leq\frac{r}{(\ell/2)}=\sqrt{(4W^2+1)(1-W^2)}$
   which becomes maximum for $W=\sqrt{\frac{3}{8}}$, and maximum value of 
   $\alpha=\frac{5}{4}=1.25$.
 \vspace{-0.1in} 
  \subsubsection{$\Pi$ is a quadrilateral $\Diamond abcd$}
Let  $\Diamond abcd$ (of diameter $D=|ac|=1$) be covered 
by a rectangle $\Box  efgh$ whose longest side $\overline{fg}$ ($|fg|=1$) 
is parallel to the diameter $\overline{ac}$ of $\Diamond  abcd$ (see Figure  
\ref{fig_isosceles_a}). We assume
that the diameter of $P$ is parallel to the coordinate axes, i.e., $y(a)=y(c)$. The width of 
$\Box  efgh$ is $|ef|=W$, where $W\leq 1$. Throughout this section, 
we use the following notation:

The points $v_1$ and $v_2$ denote the mid-points of the edges $\overline{eh}$
and $\overline{fg}$ respectively. Similarly, the points $h_1$ and $h_2$ are 
the mid-points of the edges $\overline{gh}$ and $\overline{ef}$ respectively. The vertices 
$a$, $b$, $c$ and $d$ of $\Diamond  abcd$ always lie on 
$\overline{gh}$, $\overline{he}$, $\overline{ef}$ and
$\overline{fg}$ respectively. We will study the 
properties of such a rectangle $\Box  efgh$ by considering the 
two cases: (i) $0 < W\leq\frac{2}{\sqrt{5}}$ and (ii) 
$\frac{2}{\sqrt{5}} < W \leq 1$ separately. The reason for 
choosing $W=\frac{2}{\sqrt{5}}$ 
will be explained later.
\vspace{-0.1in}

\subsubsection*{Case I:~ $0 < W\leq\frac{2}{\sqrt{5}}$}
  \begin{figure}[t]
  	\centerline{	\includegraphics[scale=0.7]{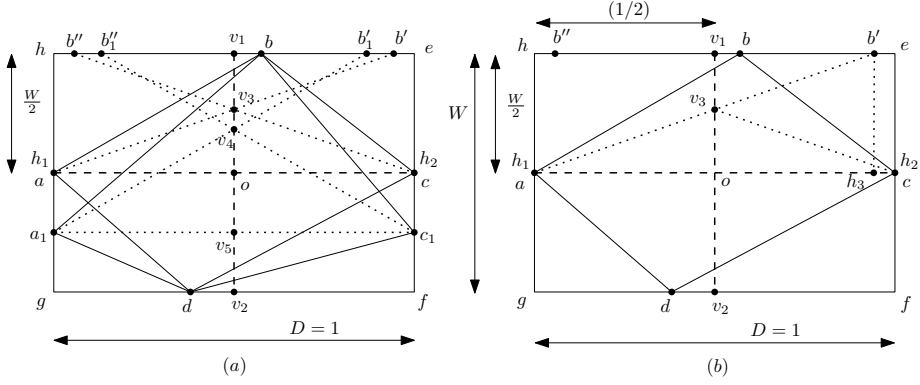}}
  	%\vspace{-1in}
      \caption{(a) Proof of Lemma \ref{lx}, (b) approximation factor in Case I}\vspace{-0.1in}
      \label{fig_isosceles_a}
  	\end{figure} \vspace{-0.1in} 
\begin{lemma} \label{lx}
In a quadrilateral $\Diamond  abcd$, (a) there exists an isocelese triangle 
having base aligned with the diameter of $\Diamond  abcd$, and (b) the other 
two (equal) sides of such a triangle  will have length 
at least $\frac{1}{\sqrt{4-W^2}}$.
\end{lemma}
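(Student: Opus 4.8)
The plan is to put coordinates on the covering rectangle and exhibit the isosceles triangle explicitly as the convex hull of the two diameter endpoints $a,c$ together with a carefully chosen apex on the perpendicular bisector of $\overline{ac}$. Concretely, I would set $g=(0,0)$, $f=(1,0)$, $e=(1,W)$, $h=(0,W)$, so that the diameter $\overline{ac}$ is the horizontal segment with $a=(0,y_0)$ on $\overline{gh}$ and $c=(1,y_0)$ on $\overline{ef}$ for some height $y_0\in[0,W]$, while $b=(x_b,W)$ and $d=(x_d,0)$ slide along the top and bottom edges. Since the diameter equals $1$, the constraints $|ab|,|cb|\le 1$ confine $x_b$ to the interval $[1-s,\,s]$ with $s=\sqrt{1-(W-y_0)^2}$; nonemptiness of this interval already forces $s\ge\tfrac12$. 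Symmetrically $x_d\in[1-s',s']$ with $s'=\sqrt{1-y_0^2}\ge\tfrac12$.

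Because the perpendicular bisector of $\overline{ac}$ is exactly the vertical centre line $\overline{v_1v_2}$, any apex placed on it yields a triangle with base $\overline{ac}$ that is automatically isosceles, which is the shape required in part (a). For the existence I would take the upper apex $q_+$ to be the point of $\overline{v_1v_2}$ lying on edge $\overline{ab}$ in the extreme position $x_b=s$; a short computation gives $q_+=(\tfrac12,\,y_0+\delta_+)$ with $\delta_+=\tfrac{W-y_0}{2s}$, and one checks that this same height is the \emph{lowest} that the upper boundary $a\to b\to c$ can attain at $x=\tfrac12$ over all admissible $b$ (the minimum over the two extreme positions $x_b=s$ and $x_b=1-s$ gives the identical value). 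Hence, for the actual quadrilateral, $q_+$ sits on or below the real upper boundary and is therefore inside the closed convex region; since $a$ and $c$ are themselves vertices of $\Diamond abcd$, convexity gives $\triangle aq_+c\subseteq\Diamond abcd$. A mirror-image construction across $\overline{h_1h_2}$ produces a lower triangle $\triangle aq_-c$ with $q_-=(\tfrac12,\,y_0-\delta_-)$ and $\delta_-=\tfrac{y_0}{2s'}$.

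For part (b) each equal side of these triangles has length $\sqrt{\tfrac14+\delta_\pm^2}$. Writing $f(t)=\tfrac{t}{2\sqrt{1-t^2}}$, which is increasing on $[0,1)$, we have $\delta_+=f(W-y_0)$ and $\delta_-=f(y_0)$. The only genuinely free parameter is the diameter height $y_0$, and I expect this to be the sole subtlety: it is dispatched by the pigeonhole bound $\max(W-y_0,\,y_0)\ge\tfrac{W}{2}$, which by monotonicity of $f$ forces $\max(\delta_+,\delta_-)\ge f(\tfrac{W}{2})=\tfrac{W}{2\sqrt{4-W^2}}$. Choosing whichever of the two triangles has the taller apex then yields equal sides of length at least $\sqrt{\tfrac14+\tfrac{W^2}{4(4-W^2)}}=\tfrac{1}{\sqrt{4-W^2}}$, which is the claimed bound.

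The main obstacle is not the final algebra but the geometric certification in part (a): one must verify that the chosen apex really lies inside $\Diamond abcd$, uniformly over every admissible position of the free vertices $b$ and $d$. The device that makes the estimate clean and position-independent is the observation that the extreme placements $x_b\in\{s,1-s\}$ realise the lowest point of the upper boundary on $\overline{v_1v_2}$, so an apex at that height is guaranteed to be enclosed; the same reasoning applies below via $s'$. Once this is in place, the proof reduces to the monotonicity of $f$ together with the trivial inequality $\max(W-y_0,y_0)\ge W/2$.
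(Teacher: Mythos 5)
Your proof is correct and takes essentially the same route as the paper: your apex $q_+$ coincides with the paper's point $v_3$ (the intersection of the extreme chords $\overline{ab'}$ and $\overline{cb''}$ with $|ab'|=|cb''|=1$ lies on the midline $x=\tfrac{1}{2}$ by symmetry), and your bound $\sqrt{\tfrac14+\delta^2}$ at the worst height $y_0=W/2$ reproduces the paper's similar-triangles value $\tfrac{1}{\sqrt{4-W^2}}$. The only differences are presentational: you replace the paper's WLOG-plus-monotonicity footnote (comparing $|av_3|$ with $|a_1v_4|$ as the diameter moves off-center) by the symmetric pigeonhole $\max(W-y_0,\,y_0)\ge W/2$ over the upper and lower triangles, and you verify explicitly the apex-containment claim that the paper only asserts.
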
	
\begin{proof}
{\bf Part (a)$\implies$~} 
If the diagonal $\overline{ac}$ of $\Diamond  abcd$ coincides with $\overline{h_1h_2}$ 
(as shown using dark dashed line in Figure~\ref{fig_isosceles_a}(a)), 
then in order to maintain the diameter $D=1$, the feasible region 
of the vertex $b$ of  $\Diamond  abcd$ on the edge $\overline{he}$ of $\Box  
efgh$ is given by $x(b'')\leq x(b)\leq x(b')$, where $b'$ and $b''$
are the points on the edge $\overline{he}$ so that $|ab'|=|cb''|=1$.
In addition, irrespective of the position of 
$b \in \overline{b'b''}$, there exists an isosceles triangle $\triangle 
av_3c$ inside the quadrilateral $\Diamond  abcd$, where $v_3$ is the point
of intersection between $\overline{ab'}$ and $\overline{cb''}$. If the diagonal 
$\overline{ac}$ lies below $\overline{h_1h_2}$ (i.e. for the quadrilateral $\Diamond 
a_1bc_1d$, shown using thin line in Figure~\ref{fig_isosceles_a}(a)), 
then there always exists an isosceles triangle $\triangle a_1v_4c_1$ 
within the quadrilateral $\Diamond a_1b_1c_1d_1$, where $v_4$ is 
determined by the points $b'_1, b_1''\in \overline{eh}$, and the feasible region  
of the vertex $b_1$ which is given by $x(b_1'')\leq x(b_1)\leq x(b_1')$, satisfy the diameter 
constraint $|a_1b'_1|=|c_1b_1''|=1$ of $\Diamond a_1b_1c_1d_1$.

{\bf Part (b) $\implies$~} Let $v_5$ be the mid-point of $\overline{a_1c_1}$. 
Since $|ab'|=|a'b_1'|=1$ and $a_1$ is below $a$ 
on the line $\overline{gh}$, we have $b_1'$ is to the left of $b'$ on the 
line $\overline{eh}$, and the slope of $\overline{ab'}$ is less than that of $\overline{a_1b'_1}$. 
Thus, the portions of $\overline{ab'}$ and $\overline{a_1b'_1}$  between a pair of 
vertical lines $\overline{gh}$ and $\overline{v_1v_2}$ satisfy  $|av_3|<|a_1v_4|$\footnote{$\angle v_3ao < \angle v_4a_1v_5 < 90^o$. $\therefore$
 $\cos\left(\angle v_3ao\right) > \cos\left(\angle v_4a_1v_5\right)$, or $\frac{|ao|}{|av_3|}> \frac{|a_1v_5|}{|a_1v_4|}$. Since, $|ao|=|a_1v_5|$,
 we have $|av_3|<|a_1v_4|$}, where 
$v_3$ and $v_4$ are points of intersection of $\overline{ab'}$ and 
$\overline{a_1b_1'}$ with the vertical line $\overline{v_1v_2}$.

Now, let us consider the quadrilateral $\Diamond  abcd$ as shown 
in Figure~\ref{fig_isosceles_a}(b). Draw perpendiculars $\overline{v_3o}$ 
and $\overline{b'h_3}$ on $\overline{ac}$ from $v_3$ and $b'$ respectively. 
From the similar triangles $\triangle av_3o$ and 
$\triangle ab'h_3$, we have $\frac{|av_3|}{|ao|}=\frac{|ab'|}{|ah_3|}$, 
which gives $\frac{\ell}{1/2}=\frac{1}{\sqrt{1-(W/2)^2}}$, where 
$\ell=|av_3|$. Thus, we have  $\ell=\frac{1}{\sqrt{4-W^2}}$. 
\end{proof}

From Lemma \ref{lx}, we have the approximation factor 
     \vspace{-0.1in}
    \begin{equation}
    \label{eqa}
     \alpha=\frac{r}{(\ell/2)}\leq
    \frac{1}{2}\sqrt{\left(1+4W^2\right)
    \left(4-W^2\right)}
    \end{equation}
    
    Observe that, $\alpha$ is monotonically increasing function in $0\leq W \leq 1$, 
and it attains maximum value for $W=1$, and it is
    $\alpha=\frac{1}{2}\sqrt{\left(1+4\right)\left(4-1\right)}=
    1.936$. Thus, in order to have a smaller approximation factor,
     our objective is to choose a different triangle
    if the width $W$ of the covering rectangle $\Box efgh$ increases 
    beyond a threshold. In Theorem~\ref{theorem3}, we show that
    this threshold is $\frac{2}{\sqrt{5}}$. Thus, in the range 
    $0 \leq W \leq \frac{2}{\sqrt{5}}$, using Equation~\ref{eqa}, we have 
    $\alpha \leq 1.84$.

\vspace{-0.1in}
\subsubsection*{Case II:~ $\frac{2}{\sqrt{5}} < W \leq 1$} 
 
  \begin{observation}
  \label{obs3}
  One of the four sides ($\overline{ab}$, $\overline{bc}$, $\overline{cd}$ and $\overline{db}$) of the quadrilateral $\Diamond  abcd$ must
  be of length at least $\frac{\sqrt{1+W^2}}{2}$.
 \end{observation}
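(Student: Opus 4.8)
The plan is to set up coordinates aligned with the covering rectangle and then bound all four side lengths simultaneously by an averaging (pigeonhole) argument, rather than controlling each side separately. Place $g=(0,0)$, $f=(1,0)$, $e=(1,W)$, $h=(0,W)$, so that the two long sides $\overline{fg}$ and $\overline{he}$ have length $D=1$ while the two short sides $\overline{ef}$ and $\overline{gh}$ have length $W$ (Figure~\ref{fig_isosceles_a}). Since $a\in\overline{gh}$, $b\in\overline{he}$, $c\in\overline{ef}$ and $d\in\overline{fg}$, I can write $a=(0,y_a)$ and $c=(1,y_c)$ with $y_a,y_c\in[0,W]$, and $b=(x_b,W)$ and $d=(x_d,0)$ with $x_b,x_d\in[0,1]$. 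The diameter condition $y(a)=y(c)$ forces $y_a=y_c$, but the estimate below will not even use this.

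First I would write out the four squared side lengths, namely $|ab|^2=x_b^2+(W-y_a)^2$, $|bc|^2=(1-x_b)^2+(W-y_c)^2$, $|cd|^2=(1-x_d)^2+y_c^2$ and $|da|^2=x_d^2+y_a^2$, and add them. Collecting the terms contributed by the two long sides and by the two short sides separately, the sum $S$ of the four squared lengths becomes
\[
S=\bigl[x_b^2+(1-x_b)^2\bigr]+\bigl[x_d^2+(1-x_d)^2\bigr]+\bigl[(W-y_a)^2+y_a^2\bigr]+\bigl[(W-y_c)^2+y_c^2\bigr].
\]
Each bracket has the form $t^2+(s-t)^2$, which is minimized at $t=s/2$; hence the first two brackets are each at least $\tfrac12$ and the last two are each at least $\tfrac{W^2}{2}$. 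Therefore $S\ge \tfrac12+\tfrac12+\tfrac{W^2}{2}+\tfrac{W^2}{2}=1+W^2$.

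Finally, since $S$ is the sum of the four numbers $|ab|^2,|bc|^2,|cd|^2,|da|^2$, the largest of them is at least $S/4\ge (1+W^2)/4$, which gives $\max\{|ab|,|bc|,|cd|,|da|\}\ge \tfrac{\sqrt{1+W^2}}{2}$, as required. The only step needing care is the bookkeeping of which coordinate of each vertex is pinned (to $0$, $1$, or $W$) and which is free, i.e.\ the correct dictionary between the four vertices and the four edges of $\Box efgh$; once that is fixed the argument is a one-line convexity estimate with no case analysis. I do not expect the alignment of the diameter or the Case~II restriction $W>\tfrac{2}{\sqrt5}$ to enter, since the bound in fact holds for every $0<W\le1$ and every admissible placement of the vertices, including the degenerate configurations where $\Diamond abcd$ collapses to a triangle.
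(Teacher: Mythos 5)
Your proof is correct, and it takes a genuinely different route from the paper's. The paper proves Observation~\ref{obs3} by a case analysis on the position of the horizontal diameter $\overline{ac}$ relative to the midline $\overline{h_1h_2}$: noting $|h_1v_1|=|h_2v_1|=|h_1v_2|=|h_2v_2|=\frac{\sqrt{1+W^2}}{2}$, if $\overline{ac}$ lies on or below $\overline{h_1h_2}$ then $b$ (on the top edge) is at vertical distance at least $W/2$ from $a$ and $c$ and at horizontal distance at least $1/2$ from one of them, so $|ab|$ or $|cb|$ meets the bound; if $\overline{ac}$ lies above, the same applies to $d$. You instead sum the four squared side lengths and apply the convexity bound $t^2+(s-t)^2\ge s^2/2$ to each bracket to get $S\ge 1+W^2$, whence the largest side is at least $\sqrt{S/4}=\frac{\sqrt{1+W^2}}{2}$. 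The algebra checks out (your coordinates match the edge assignments $a\in\overline{gh}$, $b\in\overline{he}$, $c\in\overline{ef}$, $d\in\overline{fg}$), and your bound is tight, attained by the rhombus through the four edge midpoints. What your averaging argument buys: it is case-free, it never uses the alignment $y(a)=y(c)$ or the diameter constraint at all (only that the rectangle has length $1$ and width $W$), and it covers the degenerate configurations uniformly. What the paper's argument buys that yours does not: it identifies \emph{which} side is long as a function of whether $\overline{ac}$ is below or above $\overline{h_1h_2}$, and this structural byproduct is exactly what the paper invokes immediately afterwards to normalize, without loss of generality, to the configuration ``$\overline{ac}$ below $\overline{h_1h_2}$ and $b$ to the right of $v_1$, so that $|ab|\ge\frac{\sqrt{1+W^2}}{2}$'', on which the setup of Lemma~\ref{lm1} relies. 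With your pigeonhole proof alone, that normalization needs a separate (easy) symmetry argument, since $|ab|\ge\frac{\sqrt{1+W^2}}{2}$ does not by itself force $b$ to the right of $v_1$. One trivial remark: the statement's list of sides contains a typo ($\overline{db}$ should read $\overline{da}$), and your proof correctly treats $\overline{da}$.
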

 
 \begin{Proof}
Refer to Figure \ref{observation}. Note that, $|h_1v_1|=|h_2v_1|=|h_1v_2|=|h_2v_2|=\frac{\sqrt{1+W^2}}{2}$.
Thus, if $\overline{ac}$, the diameter of $\Diamond  abcd$ lies on or below $\overline{h_1h_2}$, 
then for any feasible position of the vertex $b$ on the edge $\overline{eh}$, either 
$|ab|$ or $|cb|$ is at least $\frac{\sqrt{1+W^2}}{2}$. 
If $\overline{ac}$ is above 
$\overline{h_1h_2}$ then either $|ad|$ or $|cd|$ is at least $\frac{\sqrt{1+W^2}}{2}$ 
(as shown in Figure \ref{observation}). 
 \end{Proof}
 
\begin{figure}[htbp]\vspace{0.5in}
\begin{minipage}[b]{0.55\linewidth}
\centering
\vspace{-0.3in}
	\includegraphics[width=\linewidth]{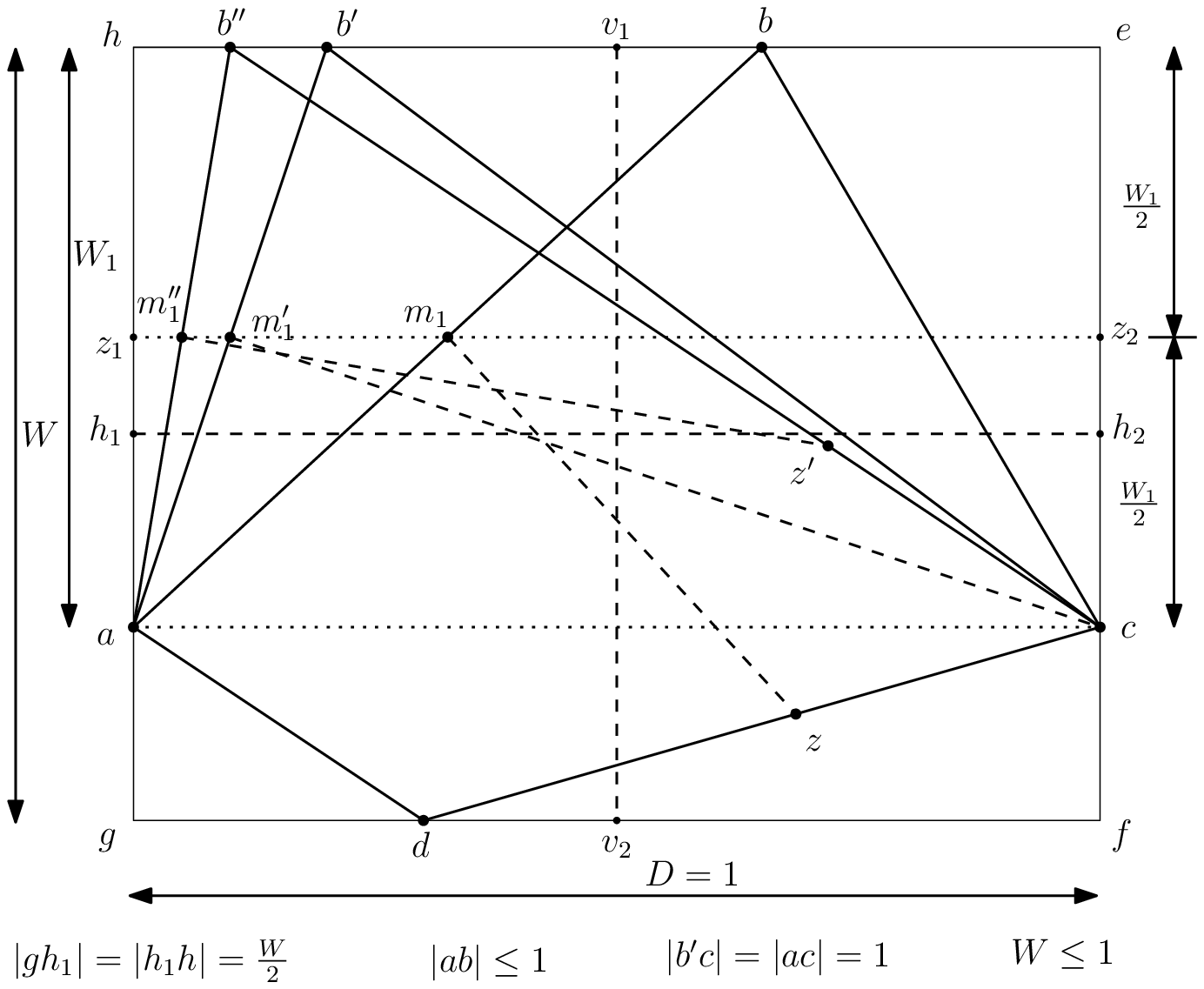}\vspace{-0.1in}
	    \caption{Proof of Observation \ref{obs1}}\vspace{-0.1in}
	    \label{observation}
\end{minipage} %
\hspace{0.5cm}
\begin{minipage}[b]{0.45\linewidth}
\centering
\includegraphics[width=\linewidth]{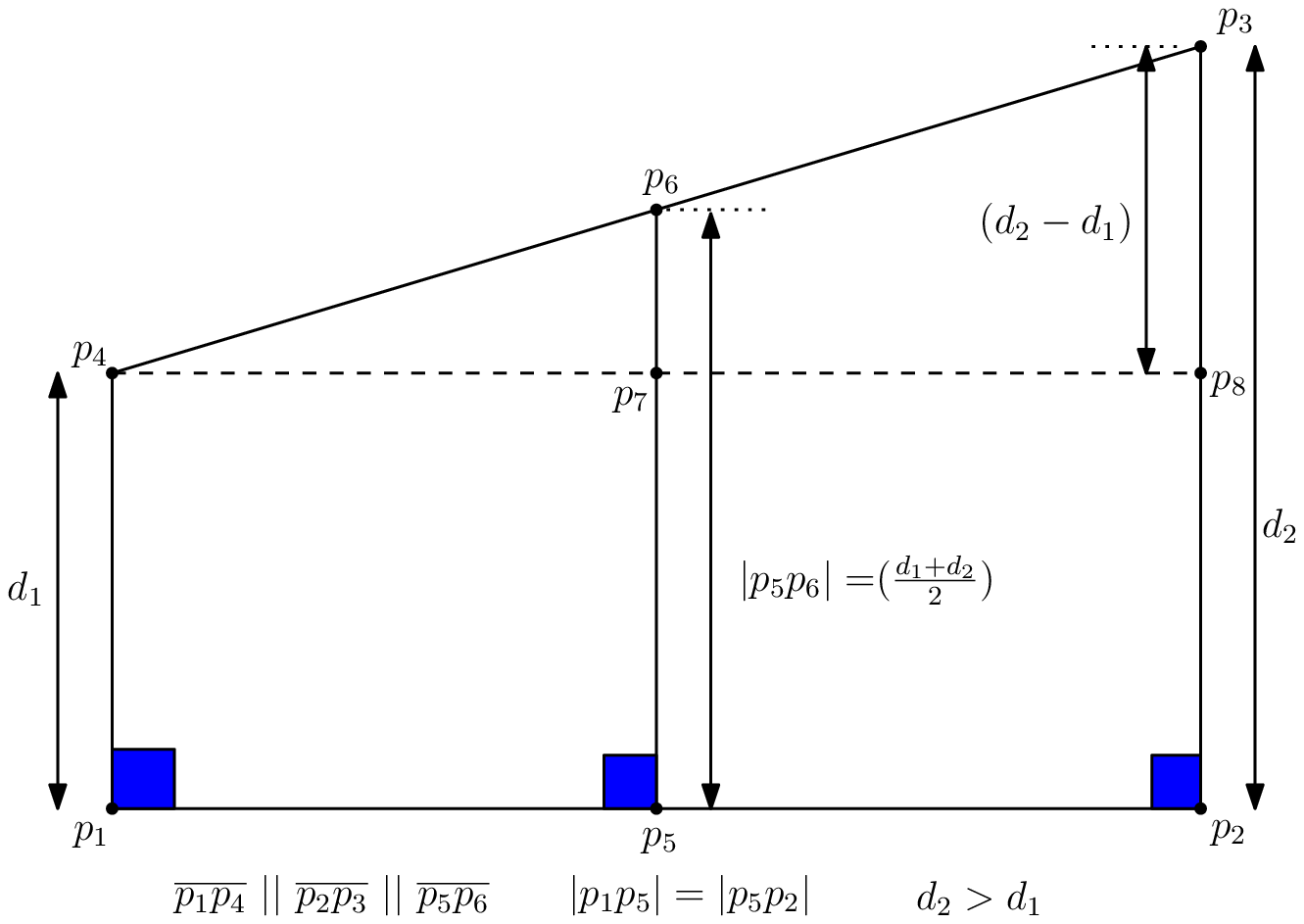}\vspace{-0.1in}
    \caption{The three parallel edges $\overline{p_1p_4}$, $\overline{p_2p_3}$
    and $\overline{p_5p_6}$ in quadrilateral $\Diamond p_1p_2p_3p_4$.}
    \vspace{-0.1in}
    \label{height}
    \end{minipage} %
\vspace{-0.1in}
\end{figure}

Without loss of generality, from now onwards we assume that
$\overline{ac}$ lies below $\overline{h_1h_2}$  and 
the vertex ``$b$'' lies to the right of ``$v_1$'' which makes $|ab|\geq\frac{\sqrt{1+W^2}}{2}$
(following Observation~\ref{obs3}) in $\Diamond  abcd$. The perpendicular 
bisector of the edge $\overline{ab}$ is denoted by $\overline{m_1z}$, where 
$m_1$ is the mid-point of $\overline{ab}$. Now, $\overline{m_1z}$ intersects 
$\Diamond  abcd$ at the point $z$ (see Figure~\ref{observation}).
   
 \begin{lemma}
 \label{lma1}
  If one of the adjacent edges of $\overline{ab}$ is of length at least
  $\frac{\sqrt{1+W^2}}{2}$ and the width $W$ of the covering rectangle 
  $\Box  efgh$ is at least $\frac{1}{\sqrt{3}}$, then 
  $\alpha \leq 1.6$.  
 \end{lemma}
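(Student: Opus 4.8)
The plan is to invoke Observation~\ref{obsrv1}: any triangle whose three corners are vertices of the convex quadrilateral $\Diamond abcd$ lies inside it, so its smallest side $\ell$ yields the lower bound $\rho=\ell/2\le r_{opt}$, and hence $\alpha\le r/(\ell/2)$ with $r=\tfrac14\sqrt{1+4W^2}$. The whole argument then reduces to exhibiting a triangle on the vertices of $\Diamond abcd$ all of whose sides are at least $\tfrac{\sqrt{1+W^2}}{2}$; once $\ell\ge \tfrac{\sqrt{1+W^2}}{2}$ we get
\[
\alpha\;\le\;\frac{\tfrac14\sqrt{1+4W^2}}{\tfrac14\sqrt{1+W^2}}\;=\;\sqrt{\frac{1+4W^2}{1+W^2}},
\]
which is increasing in $W$ on $[0,1]$ with maximum $\sqrt{5/2}\approx1.58<1.6$ at $W=1$. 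So it suffices to produce such a triangle.

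Under the standing assumptions ($\overline{ac}$ below $\overline{h_1h_2}$, and $b$ to the right of $v_1$) we already have $|ab|\ge\tfrac{\sqrt{1+W^2}}{2}$ and $|ac|=1$, while convexity together with $D=1$ forces every side and diagonal to have length at most $1$. I would split on which adjacent edge of $\overline{ab}$ is the long one. \textbf{If} $|bc|\ge\tfrac{\sqrt{1+W^2}}{2}$, take $\triangle abc$: its three sides are $|ab|,|bc|\in[\tfrac{\sqrt{1+W^2}}{2},1]$ and $|ac|=1$, so $\ell\ge\tfrac{\sqrt{1+W^2}}{2}$ immediately and the displayed bound finishes this case (note it does not even use $W\ge1/\sqrt3$).

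\textbf{If} instead $|da|\ge\tfrac{\sqrt{1+W^2}}{2}$, the natural triangle is $\triangle abd$, whose edges $\overline{ab}$ and $\overline{da}$ are already long but whose third side is the \emph{diagonal} $\overline{bd}$, not controlled by Observation~\ref{obs3}. This is the crux. Since $b$ lies on the top edge $\overline{he}$ and $d$ on the bottom edge $\overline{fg}$, their vertical separation alone forces $|bd|\ge W$; and the hypothesis $W\ge\tfrac{1}{\sqrt3}$ is \emph{exactly} the inequality $2W\ge\sqrt{1+W^2}$, i.e. $W\ge\tfrac{\sqrt{1+W^2}}{2}$. Hence $|bd|\ge W\ge\tfrac{\sqrt{1+W^2}}{2}$, so all three sides of $\triangle abd$ are at least $\tfrac{\sqrt{1+W^2}}{2}$ and again $\ell\ge\tfrac{\sqrt{1+W^2}}{2}$.

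The main obstacle is precisely this second case: Observation~\ref{obs3} only guarantees a long \emph{edge}, but the triangle on $\{a,b,d\}$ closes with a diagonal, and only the threshold $W\ge1/\sqrt3$ certifies that the diagonal is long enough — which is exactly why that hypothesis appears in the statement. Combining the two cases gives $\ell\ge\tfrac{\sqrt{1+W^2}}{2}$ in all situations, and the displayed estimate yields $\alpha\le\sqrt{5/2}<1.6$, as claimed. (The perpendicular bisector $\overline{m_1z}$ introduced just before the lemma is not needed here; it is presumably reserved for the complementary regime where \emph{neither} adjacent edge of $\overline{ab}$ is long.)
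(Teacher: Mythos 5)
Your proposal is correct and follows essentially the same route as the paper's own proof: the identical case split on which adjacent edge of $\overline{ab}$ is long, the same choice of triangles $\triangle abc$ and $\triangle abd$, the same key step $|bd|\geq W$ combined with $W\geq \frac{1}{\sqrt{3}} \Rightarrow W \geq \frac{\sqrt{1+W^2}}{2}$ to control the diagonal, and the same final bound $\sqrt{\frac{1+4W^2}{1+W^2}}\leq\sqrt{5/2}<1.6$ at $W=1$. Your added commentary correctly identifies why the $W\geq 1/\sqrt{3}$ hypothesis is needed (only in the $\triangle abd$ case), so nothing needs to change.
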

 
\begin{Proof} 
In $\Diamond  abcd$, the two adjacent edges of $\overline{ab}$ are $\overline{bc}$ and $\overline{ad}$.
If $|bc|\geq \frac{\sqrt{1+W^2}}{2}$ then we consider $\triangle abc$, 
and the length of its smallest side $\ell\geq \frac{\sqrt{1+W^2}}{2}$. 
Similarly, if $|ad|\geq \frac{\sqrt{1+W^2}}{2}$ then we consider 
$\triangle abd$ where $|bd| \geq W$. If $W>\frac{1}{\sqrt{3}}$, 
we have $W > \frac{\sqrt{1+W^2}}{2}$ (since $4W^2>1+W^2$). Hence, $\ell$ (the length of 
the smallest side of $\triangle abd$) $\geq$ $\frac{\sqrt{1+W^2}}{2}$. 
Thus in $\Diamond  abcd$, always there exists a triangle whose smallest side 
is of length $\ell \geq \frac{\sqrt{1+W^2}}{2}$. Thus, $\alpha=\frac{r}
{\ell/2}=\sqrt{\frac{1+4W^2}{1+W^2}}=\sqrt{1+\frac{3}{1+\left(\frac{1}
{W^2}\right)}}$. This is a monotonically increasing function of $W$, 
and it attains maximum when $W=1$ to have $\alpha=\sqrt{2.5}<1.6$  
 \end{Proof}
   
   Thus Lemma~\ref{lma1} suggests that, we need to consider the
   case where both the adjacent sides of $\overline{ab}$ 
   are of length strictly  less than $\frac{\sqrt{1+W^2}}{2}$.

 \begin{observation}
 \label{obs1}
  The perpendicular bisector of $\overline{ab}$ (of the quadrilateral
  $\Diamond  abcd$) cannot intersect the edge $\overline{bc}$ except at 
  its end-point $c$.
 \end{observation}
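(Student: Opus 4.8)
The plan is to reduce this to the elementary fact that a closed half-plane is convex. First I would recall that the perpendicular bisector of $\overline{ab}$ is precisely the locus of points equidistant from $a$ and $b$, so it is the boundary of the closed half-plane $H=\{p:|pb|\leq|pa|\}$ consisting of the points at least as close to $b$ as to $a$. A point of $\overline{bc}$ lies on the perpendicular bisector if and only if it lies on $\partial H$, so it suffices to show that the whole segment $\overline{bc}$ lies in $H$ and that $b$ lies strictly inside $H$; the segment can then touch $\partial H$ only at its far endpoint $c$.

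Next I would verify the two endpoints. At $b$ we have $|bb|=0<|ab|=|ba|$, so $b$ lies in the open half-plane $\{p:|pb|<|pa|\}$, i.e. strictly inside $H$. At $c$ the decisive input is that $\overline{ac}$ is the diameter of $\Diamond abcd$ (indeed of $P$), so $|ac|=D=1$ is the largest distance realised by any pair of vertices; in particular $|cb|=|bc|\leq|ac|=|ca|$, which is exactly the statement $c\in H$. Equality $|bc|=|ac|$ occurs only in the degenerate situation $|bc|=1$.

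Then, since $H$ is convex and contains both $b$ and $c$, the entire segment $\overline{bc}$ is contained in $H$. Because $b$ is a strictly interior point of $H$, the segment can meet the boundary $\partial H$ (the perpendicular bisector of $\overline{ab}$) only at its other endpoint $c$, and only in the degenerate case $|bc|=1$. This is exactly the assertion that the perpendicular bisector cannot intersect $\overline{bc}$ except at $c$. In the regime actually under study we have $|bc|<\tfrac{\sqrt{1+W^2}}{2}<1$, whence $c\notin\partial H$ and the bisector misses $\overline{bc}$ altogether; the statement is phrased with the endpoint $c$ included only to cover this boundary case.

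The argument is short and carries no genuine computational burden, so there is no real obstacle in the sense of a hard estimate. The only point requiring care is the justification that $c\in H$: I expect the main thing to get right is to invoke the diameter bound $|bc|\leq|ac|=1$ correctly and to track the degenerate equality $|bc|=1$, so that the conclusion is honestly stated as ``only possibly at $c$'' rather than ``never.''
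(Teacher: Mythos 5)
Your proof is correct, and it takes a genuinely different route from the paper's. The paper argues by deformation: it takes the perpendicular bisector $\overline{m_1z}$ meeting $\overline{cd}$ at $z$, slides $b$ leftwards along $\overline{eh}$ so that $z$ drifts toward $c$, identifies the extremal position $b'$ at which the bisector passes through $c$ (so that $\triangle ab'c$ is isosceles with $|b'c|=|ac|=1$), and notes that moving $b$ any further left would force $|cb''|>1$, violating the diameter constraint of $\Diamond abcd$. Your argument is static: the bisector is the boundary of the closed half-plane $H=\{p:|pb|\leq|pa|\}$, the vertex $b$ lies strictly inside $H$, the vertex $c$ lies in $H$ by the diameter bound $|bc|\leq|ac|=1$, and convexity of $H$ places the segment $\overline{bc}$ minus its endpoint $c$ strictly inside $H$; equivalently, the affine function $p\mapsto|pa|^2-|pb|^2$ is positive at $b$ and nonnegative at $c$, hence positive on all of $\overline{bc}\setminus\{c\}$. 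Both proofs rest on the same single input, namely $|bc|\leq|ac|=1$, but yours buys rigor and economy: the paper's motion argument tacitly assumes that the bisector initially meets $\overline{cd}$ and that the intersection point moves monotonically toward $c$ as $b$ slides, neither of which it verifies, whereas your half-plane computation needs no such bookkeeping and isolates the touching case $|bc|=1$ cleanly. You also extract the sharper remark that in the regime actually used afterwards, where $|bc|<\frac{\sqrt{1+W^2}}{2}<1$ by the reduction following Lemma~\ref{lma1}, the bisector misses $\overline{bc}$ entirely, the statement's allowance for the endpoint $c$ covering only the degenerate boundary case. What the paper's version buys in exchange is the explicit extremal picture (the isosceles configuration with $|b'c|=1$), which foreshadows the case analysis by intersection with $\overline{cd}$ versus $\overline{ad}$ in the theorem that follows.
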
\vspace{-0.1in}

  \begin{Proof}
  Refer to Figure~\ref{observation}. In $\Diamond  abcd$, the perpendicular
  bisector ($\overline{m_1z}$) of 
  the edge $\overline{ab}$ intersect $\overline{cd}$ at a point $z$. Now,  if $b$ is moved towards left
  on the edge $\overline{eh}$, the point $z$ on $\overline{cd}$ moves towards $c$. 
  At a position $b'$ (say) of $b$, the perpendicular bisector $\overline{m'_1z}$  of $\overline{ab'}$ passes through $c$. Then 
  $\triangle ab'c$ becomes isosceles with $|b'c|=|ac|=1$. If 
  we try to make $ \overline{m'_1z}$ intersect with $\overline{bc}$, we need to 
  move $b$ to the left of $b'$. For any such point $b''$, we 
  have $|cb''|>1$, violating the diameter constraint of 
  $\Diamond  abcd$.
 \end{Proof}

  \begin{fact}
  \label{fact3}
   In a quadrilateral $\Diamond p_1p_2p_3p_4$, if  $\overline{p_2p_3}$
   and $\overline{p_1p_4}$ are perpendicular to  $\overline{p_1p_2}$,
   and the segment $\overline{p_5p_6}$, touching $\overline{p_1p_2}$ and $\overline{p_3p_4}$, is
   the perpendicular bisector of $\overline{p_1p_2}$ (see Figure~\ref{height}),
   then  $|p_5p_6|=\left(\frac{d_1+d_2}{2}\right)$,
   where $d_1=|p_1p_4|$ and $d_2=|p_2p_3|$. 
  \end{fact}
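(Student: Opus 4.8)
The plan is to introduce a coordinate system aligned with the common perpendicular direction and reduce the claim to a one-line computation. First I would place $p_1$ at the origin and $p_2$ at $(2s,0)$ so that $\overline{p_1p_2}$ lies along the $x$-axis and its midpoint sits at $(s,0)$. Because both $\overline{p_1p_4}$ and $\overline{p_2p_3}$ are perpendicular to $\overline{p_1p_2}$, they are vertical segments, so I can write $p_4=(0,d_1)$ and $p_3=(2s,d_2)$ with $d_1=|p_1p_4|$ and $d_2=|p_2p_3|$ by definition. This is exactly the right-trapezoid configuration depicted in Figure~\ref{height}, with $\overline{p_1p_4}$, $\overline{p_2p_3}$, and $\overline{p_5p_6}$ all parallel.

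Next I would locate the two endpoints of $\overline{p_5p_6}$. Since $\overline{p_5p_6}$ is the perpendicular bisector of $\overline{p_1p_2}$, it coincides with the vertical line $x=s$; as it touches $\overline{p_1p_2}$, the endpoint $p_5$ must be the foot $(s,0)$. The other endpoint $p_6$ lies on the opposite edge $\overline{p_3p_4}$. Parametrizing $\overline{p_3p_4}$ from $p_4$ to $p_3$ as $\bigl(2st,\, d_1+t(d_2-d_1)\bigr)$ and imposing $x=s$ forces $t=\tfrac12$, which yields $p_6=\bigl(s,\tfrac{d_1+d_2}{2}\bigr)$. Since $p_5$ and $p_6$ share the same $x$-coordinate, the length follows immediately as $|p_5p_6|=\tfrac{d_1+d_2}{2}$, as claimed.

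A coordinate-free route is also available and perhaps cleaner to state: because $\overline{p_5p_6}\parallel\overline{p_1p_4}\parallel\overline{p_2p_3}$ and $p_5$ bisects $\overline{p_1p_2}$, the intercept theorem applied to the transversal $\overline{p_3p_4}$ forces $p_6$ to bisect $\overline{p_3p_4}$, so $\overline{p_5p_6}$ is precisely the median (midsegment) of the trapezoid and has length equal to the average $\tfrac{d_1+d_2}{2}$ of the two parallel sides. I would likely present the coordinate computation as the primary argument for self-containedness and mention this as the underlying reason.

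I do not expect a genuine obstacle here; the statement is the classical trapezoid-median identity specialized to a right trapezoid. The only point worth a sentence of care is that the intersection $p_6$ of the line through $p_3,p_4$ with $x=s$ actually lands on the \emph{segment} $\overline{p_3p_4}$ (equivalently that $t=\tfrac12\in[0,1]$), which is automatic and requires no case analysis. The remainder is bookkeeping on the coordinates.
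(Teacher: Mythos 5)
Your proposal is correct, and in fact it supplies more than the paper does: the paper states this as a \emph{Fact} with no proof at all, implicitly treating it as the classical trapezoid-midsegment identity (the segment joining the midpoints of the two non-parallel sides --- here, the perpendicular bisector of $\overline{p_1p_2}$ restricted to the trapezoid --- has length equal to the average of the two parallel sides). Your coordinate computation is sound: with $p_1=(0,0)$, $p_2=(2s,0)$, $p_4=(0,d_1)$, $p_3=(2s,d_2)$, the bisector $x=s$ meets $\overline{p_3p_4}$ at parameter $t=\tfrac12$, giving $p_6=\bigl(s,\tfrac{d_1+d_2}{2}\bigr)$ and hence $|p_5p_6|=\tfrac{d_1+d_2}{2}$; and your intercept-theorem remark correctly identifies $\overline{p_5p_6}$ as the trapezoid median. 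The only presentational caution is that your coordinates tacitly place $p_3$ and $p_4$ on the same side of $\overline{p_1p_2}$, which is forced by $\Diamond p_1p_2p_3p_4$ being a (convex) quadrilateral and matches the figure, but is worth one clause since the paper later applies this fact (in Lemma~\ref{lm1}) to a quadrilateral $\Diamond abp_2p_1$ built from projections, where that side condition is exactly what must hold. There is no gap; your write-up could be inserted verbatim as the missing justification.
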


 \begin{figure}
  \vspace{-0.3in}
	\centering
	\includegraphics[width=\linewidth]{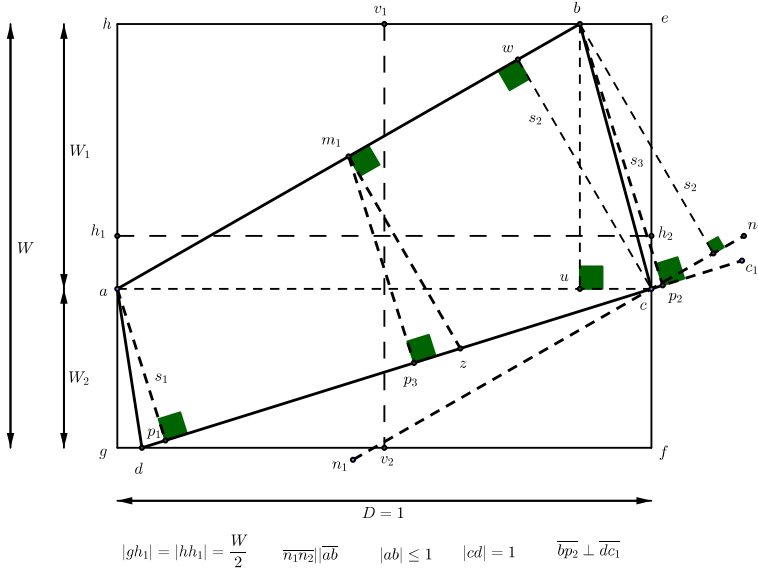}\vspace{-0.1in}
    \caption{Perpendicular bisector $\overline{m_1z}$ of 
  $\overline{ab}$ intersect $\overline{cd}$ at $z$ and
  $| {m_1z}|>\frac{W}{2}$}\vspace{-0.1in}
    \label{proof3}
	\end{figure}
	
 \begin{lemma}
 \label{lm1}
  In the quadrilateral $\Diamond  abcd$, if the perpendicular bisector $\overline{m_1z}$ of 
  $\overline{ab}$ intersects its non-adjacent
  edge $\overline{cd}$ at a point $z$, then $|m_1z|\geq \frac{W}{2}$.
 \end{lemma}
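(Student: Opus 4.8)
The plan is to avoid locating $z$ exactly, and instead bound $|m_1z|$ from below by the perpendicular distance from $m_1$ to the line supporting $\overline{cd}$. Write $\delta$ for that distance. Since $z$ lies on this line, trivially $|m_1z|\ge\delta$, so it suffices to prove $\delta\ge W/2$. This is exactly the content of Fact~\ref{fact3} read with base $\overline{ab}$: because $m_1$ is the midpoint of $\overline{ab}$ and signed distance to a fixed line is affine, $\delta=\tfrac12\big(\mathrm{dist}(a,\overline{cd})+\mathrm{dist}(b,\overline{cd})\big)$, the average of the perpendicular heights of $a$ and $b$ above the line through $c,d$ (both lie on the same side of it, since $\overline{ab}$ and $\overline{cd}$ are opposite edges of the convex quadrilateral $\Diamond abcd$).

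I would then place the coordinates used throughout this subsection: $g=(0,0)$, $f=(1,0)$, $h=(0,W)$, $e=(1,W)$, so $\overline{h_1h_2}$ is the line $y=W/2$ and $\overline{v_1v_2}$ is $x=1/2$. Under the standing assumptions of this case, $a=(0,y_a)$ with $0\le y_a<W/2$, $c=(1,y_a)$ (so $|ac|=1$), $b=(x_b,W)$ with $x_b>\tfrac12$, and $d=(x_d,0)$. Writing the line through $c$ and $d$ and evaluating $\delta$ at $m_1=\big(\tfrac{x_b}{2},\tfrac{y_a+W}{2}\big)$, the target $\delta\ge W/2$ becomes the single scalar inequality
\begin{equation*}
y_a(1+x_d-x_b)+W(1-x_d)\ \ge\ W\sqrt{\,y_a^2+(1-x_d)^2\,}.
\end{equation*}
Both bracketed coefficients are nonnegative (as $x_b\le1$, $x_d\le1$), so the left side is a sum of nonnegative terms.

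The crux — and the step I expect to be the main obstacle — is that this inequality is \emph{false} for unrestricted $d$: pushing $d$ toward a corner of ${\cal R}$ makes $\overline{cd}$ shallow and $\delta$ can fall below $W/2$. What saves it is the full diameter hypothesis, not merely $|ac|=1$. I would invoke $|ab|\le1$, i.e.\ $x_b^2+(W-y_a)^2\le1$ (which keeps $x_b$ away from $1$ because $W-y_a>W/2$), and, decisively, $|bd|\le1$, i.e.\ $(x_b-x_d)^2+W^2\le1$, giving $|x_b-x_d|\le\sqrt{1-W^2}$; this caps how far the bottom vertex $d$ can sit from directly below $b$ and eliminates precisely the shallow configurations. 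Treating the difference of the two sides as a function of $p:=1-x_d$ on its admissible interval, one checks it is unimodal, so its minimum is attained at an endpoint. The left endpoint ($x_d=1$) is immediate, and at the right endpoint — where $|bd|=1$ — the inequality collapses, after using $y_a<W/2$, to $\sqrt{1-W^2}\le 1-\tfrac W2$, which is equivalent to $W\ge\tfrac45$ and hence holds throughout Case~II since $W>\tfrac{2}{\sqrt5}$. With $\delta\ge W/2$ established, $|m_1z|\ge\delta\ge W/2$ follows at once.
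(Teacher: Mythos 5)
Your proof is correct, and it takes a genuinely different route from the paper's. The paper argues synthetically at the extreme position $|cd|=1$: it computes the distance from $a$ to the line $\overline{cd}$ exactly ($s_1=W_2$), lower-bounds the distance from $b$ to that line by $W_1$ via an angle/slope comparison that uses only $|ab|\leq 1$, averages the two distances with Fact~\ref{fact3} to get $|m_1p_3|\geq \frac{W_1+W_2}{2}=\frac{W}{2}$, and then disposes of general $d$ by asserting that moving $d$ toward $f$ only increases $|m_1z|$. You instead keep $d$ free, reduce the claim to one scalar inequality in $p:=1-x_d$, and use concavity of $F(p)=y_a(2-p-x_b)+Wp-W\sqrt{y_a^2+p^2}$ (the defect is strictly concave since only $-W\sqrt{y_a^2+p^2}$ is nonlinear), so that checking $p=0$ and the cap $p_1=1-x_b+\sqrt{1-W^2}$ coming from $|bd|\leq 1$ suffices; this buys you a mechanical verification and, notably, avoids the paper's monotonicity-in-$d$ step, which the paper asserts without proof. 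Two remarks. First, your claim that $|bd|\leq 1$ is ``decisive'' is route-specific rather than necessary: the diameter hypothesis equally gives $|cd|\leq 1$, i.e.\ $p\leq\sqrt{1-y_a^2}$, and that cap works too (it is what the paper implicitly exploits, together with $|ab|\leq 1$ and $y_a\leq W/2$); what is genuinely true, as your counterexample intuition says, is that $|ac|=1$ and $|ab|\leq 1$ alone do not suffice. Second, your right-endpoint reduction can be made cleaner and unconditional: writing $s=\sqrt{1-W^2}$, at $p=p_1$ one has $2-p_1-x_b=1-s$, and squaring the target $y_a(1-s)+Wp_1\geq W\sqrt{y_a^2+p_1^2}$ and dividing by $y_a>0$ shows it is equivalent (via $W^2-(1-s)^2=2s(1-s)$) to $Wp_1\geq y_a s$, which holds outright because $p_1\geq s$ and $y_a\leq W/2<W$. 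So no appeal to $\sqrt{1-W^2}\leq 1-\frac{W}{2}$, i.e.\ to $W\geq\frac{4}{5}$ and hence to Case~II, is actually needed, and your argument then proves the lemma for all $0<W\leq 1$, matching the paper's unrestricted statement. With that touch-up your proposal is a complete and in some respects tighter proof than the one in the paper.
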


  \begin{Proof}   
  Consider the scenario  where  $\Diamond  abcd$ satisfies the following (Figure~\ref{proof3}): 
  \begin{itemize}
   \item The diagonal $\overline{ac}$ of  $\Diamond  abcd$ is below 
   $\overline{h_1h_2}$ so that
   $|ah|=|ec|=W_1\geq W/2$ and $|ag|=|cf|=W_2\leq W/2$.
   \item The point $d$ is chosen on $\overline{fg}$ such that $|cd|=1$. The point 
   $b$ is chosen at any arbitrary position to the right of $v_1$ on $\overline{eh}$ such that 
   $|ab| \leq 1$.
  \end{itemize}
We show that in such a scenario, $| {m_1z}|\geq W/2$. If we move $d$
   to the right (towards $f$) along the edge $ \overline{fg}$, keeping $a$, $b$, $c$
   fixed, then $|m_1z|$ increases.
   
Let $a$ be the origin of the co-ordinate system, and $|ab|\leq 1$.
  Thus the vertices of the quadrilateral $\Diamond  abcd$ are $b=(\sqrt{|ab|^2-W_1^2},W_1)$, $c=(1,0)$ and $d=((1-\sqrt{1-W_2^2}),-W_2)$. 
   The equation of the lines $\overline{ab}$ and $\overline{cd}$ are given by 
  $y=\frac{W_1}{\sqrt{|ab|^2-W_1^2}}x$ and 
  $y=\frac{W_2}{\sqrt{1-W_2^2}}(x-1)$ respectively. Let the point $p_1$
  be the projection of $a$ on $\overline{cd}$ and the point $w$ be 
  projection of the point $c$ on $\overline{ab}$ (see Figure~\ref{proof3}). Let 
  
  \begin{equation}
  \label{eqn_d1}
   s_1=|ap_1|=\frac{{\frac{W_2}{\sqrt{1-W_2^2}}}}
   {\sqrt{1+\left(\frac{W_2}{\sqrt{1-W_2^2}}\right)^2}}=W_2
  \end{equation}
    
\begin{equation}\label{eq3}
 s_2=|cw|=\frac{{\frac{W_1}{\sqrt{|ab|^2-W_1^2}}}}
   {\sqrt{1+\left(\frac{W_1}{\sqrt{|ab|^2-W_1^2}}\right)^2}}
   =\frac{W_1}{|ab|}\geq W_1 ~~(\text{since} ~a\leq 1)
\end{equation}

Now, if  
$u$ is the projection of $b$ on $\overline{ac}$, we have $|bu|=W_1$.
Thus from Figure~\ref{proof3} we have, $\sin\angle cab=\sin\angle uab=\frac{|bu|}{|ab|}=
\frac{W_1}{| {ab}|}\geq W_1$ 
(since $| {ab}|\leq 1$) and
similarly $\sin\angle acd=\frac{s_1}{|ac|}=\frac{W_2}{|ac|}=W_2$ (since $s_1=W_2$ from Equation~\ref{eqn_d1}, and $|ac|=1$).
Since $W_1 > W_2$, we have $\sin\angle cab>\sin\angle acd$ which implies $\angle cab>\angle acd$.

We now draw a line $\overline{n_1n_2}$ through the point $c$ and parallel to $\overline{ab}$. The 
perpendicular distance of this line from $b$ is  $s_2$. 
The line segment $\overline{dc}$ is extended to $\overline{dc'}$ such that it can contain 
the projection $p_2$ of vertex $b$ on the edge $\overline{dc}$ (or on its extension).  
Now, we consider the two cases:
\begin{itemize}
\item If the projection $p_2$ of $b$ on $\overline{dc}$ is to the left of $c$, then 
 $|bp_2|\geq |bu|=W_1$.
 \item If the projection $p_2$ of $b$ on $\overline{dc}$ is to the right of $c$, then 
 since the slope of $\overline{n_1n_2}$ is greater than that of $\overline{dc'}$, we have 
 $|bp_2|\geq s_2\geq W_1$ (see Equation \ref{eq3}).
\end{itemize}

\vspace{-0.1in}
Let $p_3$ be the projection of 
$m_1$ ($m_1$ is the mid-point of $\overline{ab}$) on the line $\overline{cd}$.
Now, consider the quadrilateral 
$\Diamond abp_2p_1$. Using 
Fact~\ref{fact3}, we have $|m_1p_3|=\frac{s_1+s_2}{2}\geq 
\frac{W_1+W_2}{2}=\frac{W}{2}$ because $s_1=W_2$ and $s_2\geq W_1$ (see Equations~\ref{eqn_d1} and \ref{eq3}).
Note that, $\overline{m_1z}$ is the 
perpendicular bisector of $\overline{ab}$ which meets $\overline{cd}$ at $z$, and 
$\overline{m_1p_3}$ is the perpendicular from $m_1$ on $\overline{cd}$. Thus, 
$|m_1z|\geq|m_1p_3|=\frac{W}{2}$.
 
Thus, we proved that if $|cd|=1$ then $|m_1z|\geq \frac{W}{2}$.
Now, for a fixed position of $a$, $b$ and $c$ we reduce $|cd|$ 
by moving $d$ towards $f$ along the edge $\overline{fg}$. Thus, 
$|m_1z|$ increases further, and hence we have $|m_1z|\geq \frac{W}{2}$ 
at any position of $d$ on $\overline{fg}$.
\end{Proof}
\vspace{-0.1in}
\begin{theorem}
 Always there exist a triangle $\triangle$ within the quadrilateral 
 $\Diamond  abcd$ so that the length of smallest side of $\triangle$
 is at least $\frac{\sqrt{1+5W^2}}{4}$.
\end{theorem}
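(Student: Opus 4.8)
The plan is to exhibit the required triangle explicitly as $\triangle abz$, where $z$ is the point at which the perpendicular bisector $\overline{m_1z}$ of $\overline{ab}$ meets the opposite edge $\overline{cd}$, and then to bound its three sides from below. The estimate $|m_1z|\geq\frac{W}{2}$ established in Lemma~\ref{lm1} is the single ingredient that makes the half-width enter the final bound.

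First I would dispose of the easy subcase using Lemma~\ref{lma1}. Since we are in the regime $\frac{2}{\sqrt 5} < W \leq 1$ and $\frac{2}{\sqrt 5} > \frac{1}{\sqrt 3}$, the hypothesis $W > \frac{1}{\sqrt 3}$ of Lemma~\ref{lma1} is met; hence if either adjacent edge $\overline{bc}$ or $\overline{ad}$ of $\overline{ab}$ already has length at least $\frac{\sqrt{1+W^2}}{2}$, that lemma yields a triangle whose smallest side is at least $\frac{\sqrt{1+W^2}}{2}$. Because $2\sqrt{1+W^2}\geq\sqrt{1+5W^2}$ whenever $W^2\leq 3$ (which holds here), this already exceeds $\frac{\sqrt{1+5W^2}}{4}$, so the claim holds in that subcase. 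I may therefore assume both edges adjacent to $\overline{ab}$ are strictly shorter than $\frac{\sqrt{1+W^2}}{2}$.

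In this remaining subcase, Observation~\ref{obs1} guarantees that the perpendicular bisector of $\overline{ab}$ does not cross $\overline{bc}$, so it exits $\Diamond abcd$ through the non-adjacent edge $\overline{cd}$ at the point $z$; this is precisely the configuration of Lemma~\ref{lm1}, giving $|m_1z|\geq\frac{W}{2}$. The triangle $\triangle abz$ lies inside $\Diamond abcd$ by convexity, and since $z$ sits on the perpendicular bisector of $\overline{ab}$ it is isosceles with $|az|=|bz|$. Applying the Pythagorean relation in the right triangle $\triangle am_1z$ and using $|ab|\geq\frac{\sqrt{1+W^2}}{2}$ (Observation~\ref{obs3}), I obtain
\begin{equation*}
|az|^2=|am_1|^2+|m_1z|^2\geq\left(\frac{|ab|}{2}\right)^2+\left(\frac{W}{2}\right)^2\geq\frac{1+W^2}{16}+\frac{W^2}{4}=\frac{1+5W^2}{16},
\end{equation*}
so $|az|=|bz|\geq\frac{\sqrt{1+5W^2}}{4}$. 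The third side satisfies $|ab|\geq\frac{\sqrt{1+W^2}}{2}=\frac{2\sqrt{1+W^2}}{4}\geq\frac{\sqrt{1+5W^2}}{4}$ by the same elementary inequality used above. Hence every side of $\triangle abz$ is at least $\frac{\sqrt{1+5W^2}}{4}$, which is exactly what the theorem asserts.

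The main obstacle I anticipate is justifying the routing of the perpendicular bisector: one must be sure it actually terminates on $\overline{cd}$ rather than on $\overline{ad}$, so that Lemma~\ref{lm1} is applicable and the point $z$ is well defined. This is where Observation~\ref{obs1}, together with the standing assumptions ($\overline{ac}$ below $\overline{h_1h_2}$ and $b$ to the right of $v_1$), carries the argument. Once the endpoint $z$ is pinned on $\overline{cd}$, the remainder is only the short Pythagorean computation displayed above, so I expect no further difficulty.
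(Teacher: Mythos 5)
There is a genuine gap in your routing step. Observation~\ref{obs1} only rules out the adjacent edge $\overline{bc}$; it says nothing about $\overline{ad}$. After excluding $\overline{bc}$, the perpendicular bisector of $\overline{ab}$, starting at $m_1$ and entering the quadrilateral, may still exit through \emph{either} the non-adjacent edge $\overline{cd}$ or the other adjacent edge $\overline{ad}$, and the second possibility genuinely occurs: the paper's Figure~\ref{intersect} exhibits exactly this configuration, which arises whenever $d$ lies to the right of the point $d_0$ at which the extended bisector meets $\overline{fg}$. Your standing assumptions ($\overline{ac}$ below $\overline{h_1h_2}$, $b$ to the right of $v_1$) do not exclude it either, and your assumption that both edges adjacent to $\overline{ab}$ are shorter than $\frac{\sqrt{1+W^2}}{2}$ gives no contradiction with the bisector hitting $\overline{ad}$ (a point $z'\in\overline{ad}$ with $|az'|\geq\frac{\sqrt{1+5W^2}}{4}$ is compatible with $|ad|<\frac{\sqrt{1+W^2}}{2}$, since $\frac{\sqrt{1+5W^2}}{4}<\frac{\sqrt{1+W^2}}{2}$ for all $W\leq 1$). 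So in that case your point $z$ is simply not defined on $\overline{cd}$, Lemma~\ref{lm1} does not apply, and your triangle $\triangle abz$ does not exist inside $\Diamond abcd$.

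This missing case is precisely half of the paper's proof. There, when the bisector meets $\overline{ad}$, one extends it to the point $d_0\in\overline{fg}$, invokes Lemma~\ref{lm1} for the degenerate position $d=d_0$ to get $|m_1d_0|\geq\frac{W}{2}$, hence $|ad_0|=|bd_0|\geq\frac{\sqrt{1+5W^2}}{4}$ exactly as in your Pythagorean computation, and then switches to the \emph{different} triangle $\triangle abd$: since $d$ lies to the right of $d_0$, $|ad|>|ad_0|\geq\frac{\sqrt{1+5W^2}}{4}$, while $|bd|>W>\frac{\sqrt{1+W^2}}{2}\geq\frac{\sqrt{1+5W^2}}{4}$ (using $W>\frac{2}{\sqrt{5}}>\frac{1}{\sqrt{3}}$). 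Everything else in your write-up is sound and matches the paper — the reduction via Lemma~\ref{lma1}, the use of $|ab|\geq\frac{\sqrt{1+W^2}}{2}$, and the computation $|az|^2\geq\frac{1+W^2}{16}+\frac{W^2}{4}=\frac{1+5W^2}{16}$ — but as written your argument covers only the $\overline{cd}$ branch, and you yourself flagged the routing as the crux without supplying an argument that would close it (no such argument exists, since the configuration you want to exclude is realizable).
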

\vspace{-0.1in}
 \begin{Proof}
 Observation~\ref{obs1} says that the perpendicular
 bisector $\overline{m_1z}$ of $\overline{ab}$ must intersect either $\overline{cd}$ or
 $\overline{ad}$. We consider these two cases separately.\vspace{-0.1in}
 \begin{description}
\item[$\bullet \overline{m_1z}$ intersects $\overline{cd}$:]  
   By Lemma~\ref{lm1},  $|m_1z|>\frac{W}{2}$.
Since $|ab|\geq \frac{\sqrt{1+W^2}}{2}$ (by the 
assumption following Observation~\ref{obs3}), we can choose the isosceles 
triangle $\triangle abz$ having equal sides 
$\overline{az}$ and $\overline{bz}$, and their (common) length satisfies
\begin{equation}
\label{eq4}
 |az|=|bz|\geq \sqrt{\left(\frac{\sqrt{1+W^2}}{4}
 \right)^2+\left(\frac{W}{2}\right)^2}=\frac{\sqrt{1+5W^2}}{4}
\end{equation}
As $W\leq 1$, we have  $\frac{\sqrt{1+5W^2}}{4} < 
\frac{\sqrt{1+W^2}}{2}$. Thus $\ell$, the length of the 
smallest side of $\triangle abz$ is at least 
$\frac{\sqrt{1+5W^2}}{4}$.
\item[$\bullet \overline{m_1z}$ intersects $\overline{ad}$:]
Consider the extension of the perpendicular
bisector $\overline{m_1z}$ (of $\overline{ab}$) that intersects $\overline{fg}$
at $d_0$ (see Figure~\ref{intersect}). Thus,
if the vertex $d$ of 
$\Diamond  abcd$ coincides with $d_0$, then $\overline{m_1z}$ will touch both
$\overline{cd}$ and $\overline{ad}$,  and in that case 
$|m_1z|=|m_1d_0|$. Now, 
by Lemma~\ref{lm1}, we have $|m_1d_0|=|m_1z|>\frac{W}{2}$, 
and as in Equation~\ref{eq4}, we have $|ad_0|
=\sqrt{|am_1|^2+|m_1d_0|^2}
\geq \frac{\sqrt{1+5W^2}}{4}$. In this case,
we obtain an isosceles $\triangle abd_0$ and the length of its smallest
side satisfy $\ell=|ad_0|=|bd_0| \geq\sqrt{\left(1+5W^2\right)}/4$.

 However, if $d$ lies to the right of $d_0$, say at $d_1$ (see dashed 
 lines in Figure~\ref{intersect}),
 we  consider  $\triangle abd_1$, and we have 
 $|ad_1|>|ad_0|>\frac{\sqrt{1+5W^2}}{4}$ 
 and $|bd_1|>~W>~\frac{\sqrt{1+W^2}}{2}$ (as $W>\frac{2}{\sqrt{5}}$ implies $W>\frac{1}{\sqrt{3}}$, i.e. $4W^2>1+W^2$). Now
 $\frac{\sqrt{1+5W^2}}{4} < \frac{\sqrt{1+W^2}}{2}$
 for $W<\sqrt{3}$ which is obvious because $W\leq D=1<\sqrt{3}$. Therefore, in this case also the length of the smallest
 side ($\ell$) of a triangle satisfy  $\ell \geq \frac{\sqrt{1+5W^2}}{4}$.
 \end{description}
 \vspace{-0.2in}
\end{Proof} 

\begin{figure}[t]
  \vspace{-0.3in}
	\centering
	\includegraphics[scale=0.65]{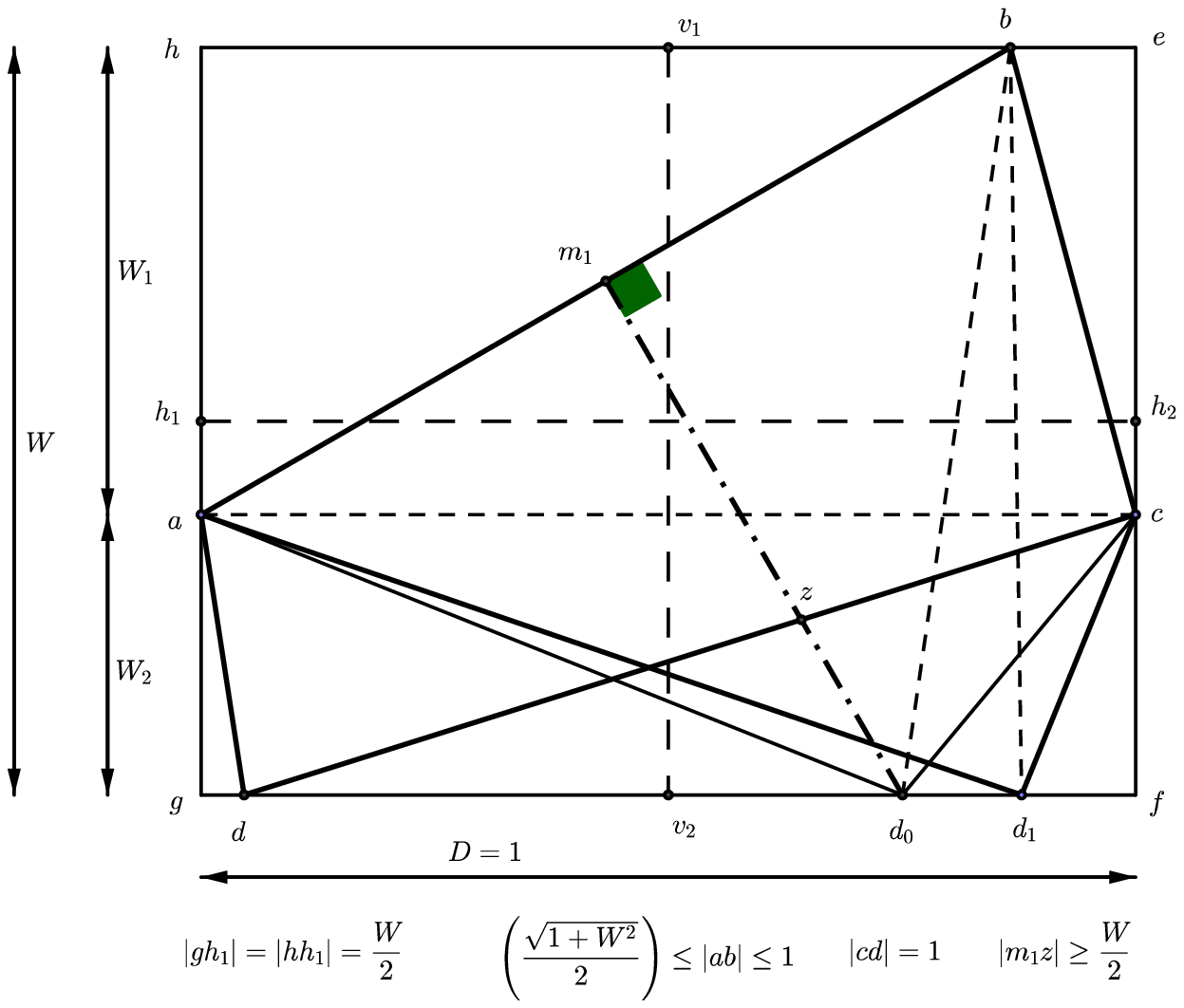}
    \caption{The perpendicular bisector of $\overline{ab}$ can
    intersect the edge $\overline{ad}$ of $\Diamond  ~abcd$ if $d$ lies at the
    right of $d_0$}
    \label{intersect}
	\end{figure}

Thus we have the approximation factor  
\begin{equation}
\label{eqb}
 \alpha=\frac{r}{\ell/2}=\frac{\frac{\sqrt{1+4W^2}}{4}}{\frac{\sqrt{1+5W^2}}{8}}
 =2\sqrt{\frac{1+4W^2}{1+5W^2}}=2\sqrt{1-\frac{1}{\left(5+\frac{1}{W^2}\right)}}, 
\end{equation}
which  is a decreasing function in $W$.

\begin{lemma}
\label{theorem3}
 The approximation factor $\alpha$ for two-center problem of a quadrilateral $\Diamond abcd$ is given by
 $\alpha <1.84$.
\end{lemma}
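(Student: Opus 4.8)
The plan is to combine the two already-established case bounds --- Case~I covering $0 < W \leq \frac{2}{\sqrt{5}}$ and Case~II covering $\frac{2}{\sqrt{5}} < W \leq 1$ --- and to show that in each regime the upper bound on $\alpha$ is maximized precisely at the shared boundary $W = \frac{2}{\sqrt{5}}$, where both expressions evaluate to the same value that is strictly below $1.84$. Since Lemma~\ref{lx} and the preceding theorem have already produced, for each range of $W$, an inscribed triangle whose smallest side $\ell$ is bounded below by a clean closed form, what remains is purely an extremal analysis of the two resulting functions of $W$.

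First, in Case~I I would invoke Equation~\ref{eqa}, which gives $\alpha \leq \frac{1}{2}\sqrt{(1+4W^2)(4-W^2)}$. As already observed, this is monotonically increasing on $[0,1]$, so its supremum over $0 < W \leq \frac{2}{\sqrt{5}}$ is attained at the right endpoint $W = \frac{2}{\sqrt{5}}$. Substituting $W^2 = \frac{4}{5}$ yields $\alpha \leq \frac{1}{2}\sqrt{\frac{21}{5}\cdot\frac{16}{5}} = \frac{2\sqrt{21}}{5} \approx 1.833$.

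Next, in Case~II I would invoke Equation~\ref{eqb}, which gives $\alpha = 2\sqrt{\frac{1+4W^2}{1+5W^2}} = 2\sqrt{1 - \frac{1}{5 + 1/W^2}}$, a decreasing function of $W$. Hence its supremum over $\frac{2}{\sqrt{5}} < W \leq 1$ is approached at the left endpoint $W = \frac{2}{\sqrt{5}}$. Substituting $W^2 = \frac{4}{5}$ gives $\frac{1+4W^2}{1+5W^2} = \frac{21/5}{5} = \frac{21}{25}$, so $\alpha \leq 2\sqrt{\frac{21}{25}} = \frac{2\sqrt{21}}{5} \approx 1.833$, which exactly matches the Case~I value. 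This coincidence at the threshold is precisely what motivates the choice $W = \frac{2}{\sqrt{5}}$: it is the unique point where switching from the triangle of Lemma~\ref{lx} to the triangle of the preceding theorem keeps $\alpha$ continuous and below its worst value.

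Finally, combining the two branches, the approximation factor over the whole admissible range $0 < W \leq 1$ is bounded by $\frac{2\sqrt{21}}{5} < 1.84$, which is the claim. The main obstacle --- and really the only nontrivial point --- is confirming the monotonicity in Case~I, since the increasing function there has its maximum at the right endpoint of the subinterval rather than at $W=1$; one must verify that it is exactly the transition to the alternative, decreasing branch in Case~II that prevents $\alpha$ from growing toward its unconstrained Case~I value of $1.936$ at $W=1$. The Case~II branch is manifestly decreasing, so its bound is immediate, and all the underlying geometric constructions have already been supplied, leaving this step as a short algebraic verification that the two monotone pieces meet below $1.84$.
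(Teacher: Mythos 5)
Your proposal is correct and follows essentially the same route as the paper: the paper also exploits that the bound of Equation~\ref{eqa} is increasing and that of Equation~\ref{eqb} is decreasing in $W$, takes $W=\frac{2}{\sqrt{5}}$ as the crossover point (obtained there by equating the two expressions and solving $(5W^2-4)(W^2-3)(4W^2+1)=0$), and evaluates both at the threshold to get $\alpha=\frac{2\sqrt{21}}{5}\approx 1.833<1.84$. Your endpoint/monotonicity formulation and the paper's explicit triangle-selection casework ($\triangle acv_3$ for $W<\frac{2}{\sqrt{5}}$, and $\triangle abz$ or $\triangle abd$ otherwise) are the same argument presented in different order.
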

\begin{Proof}
 The increasing and decreasing nature of the value of $\alpha$ with respect to $W$ in 
 Equations~(\ref{eqa}) and (\ref{eqb}), respectively suggest a threshold value
of $W$ based on which we decide which triangle is to 
be selected inside the quadrilateral $\Diamond abcd$.
Equating the expressions of $\alpha$ in Equations~(\ref{eqa}) and (\ref{eqb}), we have 
\begin{equation}
\frac{\sqrt{(1+W^2)(4-W^2)}}{2}=2\sqrt{\frac{4W^2+1}{5W^2+1}}\\
\Rightarrow (5W^2-4)(W^2-3)(4W^2+1)=0
\end{equation}
The only feasible solution of the above equation is $W={2}/{\sqrt{5}}=0.8944$.
The approximation factor $\alpha$ at $W=\frac{2}{\sqrt{5}}$ using Equation \ref{eqb} is given by
 $\alpha=1.833<1.84$. 
Thus the result in the stated theorem is justified as follows:
\begin{itemize}
 \item[(i)] {\bf $W < \frac{2}{\sqrt{5}}$ :} 
 Choose the isosceles triangle $\triangle acv_3$ 
 with its smallest side $\ell=| {av_3}|$ 
as in Figure~\ref{fig_isosceles_a}.
\item[(ii)] {$W\geq \frac{2}{\sqrt{5}}$ :} based on the intersection 
between $\overline{cd}$ and the perpendicular bisector $\overline{m_1z}$
of $\overline{ab}$, the 
following two sub-cases occur:
  \begin{enumerate}
   \item[(a)] {\bf  $\overline{m_1z}$ intersect $\overline{cd}$ at $z$:} Choose
   the isosceles triangle $\triangle abz$ with its smallest side 
   $\ell\geq \frac{\sqrt{1+5W^2}}{4}$.
   \item[(b)] {\bf $\overline{m_1z}$ intersect
    $\overline{ad}$ :} Choose
   the triangle $\triangle abd$ with its smallest side 
    $\ell\geq \frac{\sqrt{1+5W^2}}{4}$.
  \end{enumerate}
\end{itemize}
\vspace{-0.2in}
\end{Proof}
\vspace{-0.2in}

\begin{theorem}
\label{theorem4}
 The approximation factor $\alpha$ for two-center problem of a given convex polygon
 $P$ is given by $\alpha <1.84$.
\end{theorem}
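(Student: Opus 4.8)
The plan is to reduce the statement for an arbitrary convex polygon $P$ to the two canonical subpolygon shapes already analysed in the preceding subsections---the triangle and the quadrilateral---and then take the worse of the two bounds. First I would invoke Observation~\ref{approx_1}: after rotating the coordinate axes so that the diameter $D$ is horizontal and enclosing $P$ in the axis-parallel rectangle $\cal R$ of dimensions $D\times W$ (with $D=1$, $0<W\le1$), the approximation factor $\alpha$ for $P$ is no larger than the one for any subpolygon $\Pi\subseteq P$ that is exactly covered by the same $\cal R$. Hence it suffices to bound $\alpha$ over the \emph{minimal} such $\Pi$, for which the lower bound $\rho$ of $r_{opt}$ is smallest and the resulting $\alpha$ is largest.

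Second, I would argue that this minimal $\Pi$ is always a quadrilateral $\Diamond abcd$, with the single degenerate possibility of a triangle, which occurs exactly when a vertex of $P$ coincides with a corner of $\cal R$, precisely as in the streaming analysis. This reduces the proof to the two cases already handled: $\Pi=\triangle gaf$ and $\Pi=\Diamond abcd$.

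Third, I would assemble the two sub-results. When $\Pi$ is the triangle $\triangle gaf$ (diameter aligned with an edge of $\cal R$), the triangle subsection gives $\alpha\le\frac{5}{4}=1.25$. When $\Pi$ is a quadrilateral, Lemma~\ref{theorem3} gives $\alpha<1.84$; there the threshold $W=2/\sqrt5$ separating the two triangle-selection strategies arises from equating the increasing factor of Equation~(\ref{eqa}) with the decreasing factor of Equation~(\ref{eqb}), with $\alpha=1.833<1.84$ at the crossover. Taking the maximum over the two cases, $\max\{1.25,\,1.84\}=1.84$, yields $\alpha<1.84$ for every convex polygon $P$.

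The numerical content here is genuinely routine---it has all been discharged in the earlier lemmas. The only point that needs care, and the main obstacle, is verifying the exhaustiveness and minimality claim of the second step: that fixing $r$ (equivalently $D$ and $W$) and minimising $\rho$ over all convex polygons exactly covered by $\cal R$ is attained by a triangle or a quadrilateral. I would handle this exactly as in the streaming model, relying on Observation~\ref{approx_1} together with the minimality of the quadrilateral among exactly-covered subpolygons, and noting that any further degeneration (for instance to a diagonal segment) only enlarges $\rho$ and therefore cannot worsen the bound.
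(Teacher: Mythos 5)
Your proposal is correct and takes essentially the same route as the paper: the paper's own proof of Theorem~\ref{theorem4} consists precisely of invoking Observation~\ref{approx_1} to pass to the minimal exactly-covered subpolygon $\Pi=\Diamond abcd$ and then citing Lemma~\ref{theorem3}, with the degenerate triangle case ($\alpha\leq 1.25$) already discharged in the preceding subsection exactly as you assemble it. Your added remarks---taking the maximum of the two case bounds and noting that further degeneracy (e.g., a diagonal segment) cannot worsen the bound---are harmless elaborations of the same argument.
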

\begin{Proof}
Observation~\ref{approx_1} says that the approximation factor $\alpha$
 for two-center problem of a given convex polygon $P$ 
 is less than (or equal to) that of its minimal subpolygon $\Pi=\Diamond abcd$,
 where both $P$ and $\Pi$ are exactly covered by the rectangle $\cal R$.
 Now, the result follows from Lemma~\ref{theorem3}.
\end{Proof}

\subsubsection{Special Case: $W=D=1$}
  We now show one special case when the 
     covering rectangle {\cal R} is a square i.e., $W=D=1$ (see Figure~\ref{comparison}). 
     This will give us
     an idea that the upper bound of the approximation factor of our algorithm cannot be
     smaller than $1.527$.
  Any quadrilateral inscribed within this ``square $efgh$''
  must be of a diamond shape (i.e. the $x-$coordinate of
  two points $b$ and $d$ must be equal).
  There are two extreme situations: one
  with $|ab'|=1$, where the corresponding quadrilateral is 
  $\Diamond   ab'cd'$ and the other one is a 
  square $\Diamond  abcd$ (Figure~\ref{comparison}) respectively.\\
  
   {\bf (i) For quadrilateral $\Diamond  abcd$}: 
  
  Refer to Figure~\ref{comparison}. Since $|ab'|=|ad'|=|b'd'|=|ac|=1$,
  we have  $\angle cab'=30^{o}$.  Now $\angle cab=45^{o}$ because $\Diamond abcd$ is a square. 
  The points $t_1$ and $t_2$ are the points of intersection of $\overline{ab'}$ with $\overline{bc}$, and  
  $\overline{ad'}$ with $\overline{cd}$ respectively
Therefore, $\frac{| {co_2}|}{| {o_2t_1}|}=1$
and $\frac{|{ao_2}|}{| {o_2t_1}|}=\sqrt{3}$. 
Now $| {ao_2}|+| {co_2}|=1$
gives $| {o_2t_1}|=\frac{1}{\sqrt{3}+1}$. So, length of each side of the equilateral $\triangle at_1t_2$
is given by $|t_1t_2|=2|o_2t_1|=\frac{2}{\sqrt{3}+1}$.
Therefore, the equilateral
triangle $\triangle at_1t_2$ inscribed
within quadrilateral $\Diamond  abcd$ have the side 
$| {at_1}|=\frac{2}{\sqrt{3}+1}=0.732$,
whereas the isosceles triangle
$\triangle abc$ has the smallest side $| {ab}|=\frac{1}{\sqrt{2}}=0.7071$.
 Thus the smallest side $\ell$
 of the triangle $\triangle at_1t_2$ is the largest inside the $\Diamond abcd$ 
 and we consider $\triangle at_1t_2$ with $\ell=\frac{2}{\sqrt{3}+1}$.
 Hence the approximation factor
$\alpha=\frac{r}{\ell/2}={\frac{(\sqrt{5})}{4}}/{\frac{1}{(\sqrt{3}+1)}}=1.527$.

\begin{figure}[t]
  \vspace{-0.3in}
	\centering
	\includegraphics[scale=0.75]{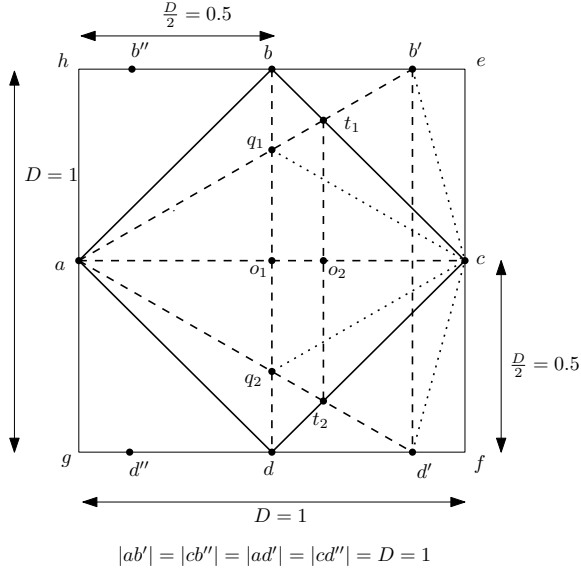}
    \caption{The covering rectangles of the quadrilateral
    $\Diamond  abcd$ is ``square $efgh$''}
    \label{comparison}
	\end{figure}
	
   \remove{
   \begin{wrapfigure}{r}{0.5\textwidth}
	\centering
	\includegraphics[scale=0.65]{comparison}
    \caption{The covering rectangles of the quadrilateral
    $\Diamond  abcd$ is ``square $efgh$''}
    \label{comparison}
  	\end{wrapfigure}
}

 {\bf (ii) For quadrilateral $\Diamond   ab'cd'$}: 

The largest equilateral triangle inscribed within the quadrilateral $\Diamond   ab'cd'$
is $\triangle ab'd'$ (see Figure~\ref{comparison}) whose sides are all $1$.
Thus $\ell=1$, and the approximation factor $\alpha=\frac{r}{\ell/2}$ 
= $\left({\frac{\sqrt{5}}{4}}\right)/\left(\frac{1}{2}\right)=1.118$.

The lower bound of $\ell$ for any quadrilaterals $\Diamond  abcd$ inscribed within the square $efgh$, 
where
the range of $b$ on the edge $\overline{he}$ is given by
$x(b'')\leq x(b)\leq x(b')$, will be the intermediate of the lower bounds for quadrilaterals $\Diamond  abcd$
 and $\Diamond   ab'cd'$.
 Hence if the covering rectangle $\Box efgh$ is a square, the 
 approximation factor $\alpha$ will satisfy 
 $1.118 \leq \alpha \leq 1.527$. 
 This shows that our technique can not produce a solution with
 approximation factor less than
$1.527$, because we need to consider all possible convex polygons for this problem.

 \section{Conclusion and future work}
 \label{conc}
To the best of our knowledge, this is the first work
on approximation for two-center problem of a given convex 
polygon both in streaming and non-streaming setup. In the streaming setup, 
we have designed a $2$-factor approximation 
algorithm using $O(1)$ space for this problem; whereas in the non-streaming setup,
we have proposed a linear time
approximation algorithm with 
approximation factor $1.84$. The ``longest line segment inside a quadrilateral'' and ``the 
triangle which makes its smallest side larger'' have been
considered to determine the lower bound for the radius of the two-center problem of a given convex polygon.

The main bottleneck of adopting the 1.84 factor approximation algorithm in the streaming
model is the unavailability of an algorithm for computing the diameter of a convex polygon
in streaming model. Thus, getting such an algorithm will be an interesting problem to
study.

Surely, improving or establishing non-trivial lower bounds for the approximation results of
this problem will be the main open problems.

\bibliographystyle{splncs03}
\bibliography{research}

\end{document}